\newcommand{\bm}[1]{\mathbf{#1}}
\newcommand{\Ord}[1]{\mathcal{O}\left( #1 \right)}
\newcommand{\tOrd}[1]{\widetilde{\mathcal{O}}\left( #1 \right)}
\newcommand{\Hyp}[1]{\textit {\textbf {#1}}}
\newcommand{\bra}[1]{\left \langle{#1} \right |}
\newcommand{\ket}[1]{\left |{#1} \right \rangle}
\newcommand{\err}[0]{\mathrm{err}}
\newtheorem{lemma}{Lemma}
\newtheorem{theorem}{Theorem}
\newtheorem{corollary}{Corollary}
\newtheorem{fact}{Fact}
\newtheorem{oracle}{Data Input}
\newtheorem{defn}{Definition}
\def\be{\begin{eqnarray}}
\def\ee{\end{eqnarray}}
\definecolor{Pr}{rgb}{0.4,0.3,0.9}
\newcommand{\E}{\mathbb{E}}
\newcommand\super[1]{^{#1}}
\begin{document}

\title{Quantum Alphatron: quantum advantage for learning with kernels and noise}

\author{Siyi Yang}
\email{e0016915@u.nus.edu}
\affiliation{Centre for Quantum Technologies, National University of Singapore, Singapore 117543}
\author{Naixu Guo}
\email{naixug@u.nus.edu}
\orcid{0000-0001-6694-6454}
\affiliation{Centre for Quantum Technologies, National University of Singapore, Singapore 117543}
\author{Miklos Santha}
\affiliation{Centre for Quantum Technologies, National University of Singapore, Singapore 117543}
\affiliation{Universit\'e de Paris, IRIF, CNRS, F-75013 Paris, France; MajuLab UMI 3654}
\email{cqtms@nus.edu.sg}
\author{Patrick Rebentrost}
\affiliation{Centre for Quantum Technologies, National University of Singapore, Singapore 117543}
\email{cqtfpr@nus.edu.sg}
\maketitle

\begin{abstract}
At the interface of machine learning and quantum computing, an important question is what distributions can be learned provably with optimal sample complexities and with quantum-accelerated time complexities. In the classical case, Klivans and Goel discussed the \textit{Alphatron}, an algorithm to learn distributions related to kernelized regression, which they also applied to the learning of two-layer neural networks. 
In this work, we provide quantum versions of the Alphatron in the fault-tolerant setting.
In a well-defined learning model, this quantum algorithm is able to provide a polynomial speedup for a large range of parameters of the underlying concept class. We discuss two types of speedups, one for evaluating the kernel matrix and one for evaluating the gradient in the stochastic gradient descent procedure. We also discuss the quantum advantage in the context of learning of two-layer neural networks. Our work contributes to the study of quantum learning with kernels and from samples. 
\end{abstract}

\section{Introduction}
Machine learning is highly successful in a variety of applications using heuristic approaches even though the methods 
being used are often without strong guarantees on their learning performance. Important questions are why common machine learning algorithms such as stochastic gradient descent and kernel methods \cite{SS02} work well and what is the best way to interpret the results. 
Computational learning theory addresses some of the fundamental theoretical questions and provides a systematic framework to discuss provable learning of probability distributions and machine learning architectures (such as neural networks). 
In a variety of settings and architectures, further assumptions on the underlying distribution can rule out hard instances and lead to provable and fast learning algorithms. Such guarantees have been given for generalized linear models, Ising models, and Markov Random Fields \cite{Klivans2017}, for example.
The \textsc{Alphatron} developed by Goel and Klivans \cite{pmlr-v99-goel19b} is a gradient-descent like algorithm for isotonic regression with the inclusion of kernel functions, which provably learns a kernelized, non-linear concept class of functions with a bounded noise term.
As a consequence it can be employed to learn two-layer neural networks, where one layer of activation functions feeds into a single activation function. 

Quantum machine learning has received a great deal of attention \cite{biamonte2016quantum, CHI+18j}, with the hope of gaining
advantages for problems such as regression \cite{harrow2009quantum, PhysRevLett.109.050505, PhysRevLett.113.130503} and neural networks (e.g. \cite{10.1145/3411466, cao2017quantum}). 
Quantum gradient computation has been considered in \cite{Gilyen2019gradient}.
Many algorithms are envisioned for near-term 
quantum computers \cite{Preskill2018quantumcomputingin, mcclean16,Beer2020}.
Kernel methods extend linear learning tasks to non-linear learning tasks and, similarly, quantum kernel methods \cite{Schuld2019,havlicek2019} use a high-dimensional Hilbert space of a quantum system to encode features of data.
Some algorithms are similar in spirit to the use of heuristic methods in classical machine leaning. They often cannot obtain provable guarantees for the quality of learning and for the run time.  
An interesting avenue for quantum algorithms for machine learning is therefore to take provable classical algorithms for learning and study provable quantum speedups which retain the guarantees of the classical algorithms \cite{BS17c,vAG19p,LCW19}. 

In this work, we provide quantum speedups for the \textsc{Alphatron} and its application to non-linear classes of functions and two-layer neural networks.
First, we consider the simple idea of pre-computing the kernel matrix used in the algorithm.  Our setting is one where the samples are given via quantum query access. Using this access, we can harness quantum subroutines to estimate the entries of the kernel matrices used in the algorithm. 
The quantum subroutines we use are adaptations of the amplitude estimation algorithm. 
We show that the learning guarantees can be preserved despite the erroneous estimation of the kernel matrices. 
In a subsequent step, we also quantize the \textsc{Alphatron} algorithm itself. To this end, we require the storage of intermediate values in a quantum-accessible memory \cite{Giovannetti2008,Giovannetti2008_2, Arunachalam2015}. In particular, we show that there are estimations inside the main loop of the algorithm which can be replaced with quantum subroutines, while keeping the main loop intact. We carefully study the regimes where the algorithms allow for a quantum speedup.
We are again able to show that the others parameters of the algorithm remain stable under these estimations. 
Our main result is that we obtain a quantum algorithm for  learning the original concept class, where the a quantum speedup is obtained for a large parameter regime of the concept class. 
We consider a previously defined class of two-layer neural networks and demonstrate that these networks are outside the regime for quantum advantage. We define a different neural network architecture which can exhibit a quantum advantage for learning. 

The paper is organized as follows.
Section \ref{sec:preliminary} discusses the mathematical preliminaries, the weak p-concept learning setting, and kernel methods, and introduces the Alphatron algorithm with a run time analysis. 
Section \ref{sec:four} discusses the kernel matrix estimation in the context of the Alphatron using both classical sampling and quantum estimation.   
Section \ref{sec:five} discusses the main loop of the Alphatron and the corresponding quantum run time. 
Section \ref{sec:discuss} summarizes the results in terms of all relevant parameters and discusses the regime where a quantum speedup is obtained.
Finally Section \ref{sec:application} considers the application for learning two-layer neural networks, where we give a neural network architecture that may achieve a quantum speedup.

\section{Preliminaries and Alphatron algorithm}
\label{sec:preliminary}
\subsection{Notations}
The vectors are denoted by bold-face $\mathbf {x}$, and their elements by $x_j$. We leave in plain font the $\alpha$ vector (and all other vectors denoted with Greek symbols).
The standard vector space of reals and the unit ball of dimension $n$ are denoted by $\mathbb{R}^n$ and $\mathbb{B}^{n}$, respectively.
The $\ell_p$-norm of vectors in $\mathbb{R}^n$ is denoted by $\Vert \cdot \Vert_p$.
Moreover, the max norm is denoted by $\Vert \bm{x} \Vert_{\max} = \max_i |\bm{x}_i|$.
We use $\mathbf{a} \cdot \mathbf{b}$ to denote the standard inner product in $\mathbb{R}^n$.
We use the notation $\tOrd{}$ to omit any poly-log factors in the arguments.
When we write $g +\Ord{\dots}$, we mean $g+f$ with some $f\in \Ord{\dots}$.
We use  $a:=b$ to define $a$ in terms of $b$.
Given two random variables $X$ and $Y$, denote by $\mathbb E[Y|X]$ the conditional expectation value.
\subsection{Cost of arithmetic operations}
We use the following arithmetic model for classical computation. We represent the real numbers with a sufficiently large number of bits. We assume that the number of bits is large enough
to make the numerical errors negligible
in the correctness and run time proofs of the algorithms under consideration.
The implication is that we can ignore numerical errors
of arithmetic operations (e.g., addition, subtraction, multiplication, and so on) with respect to truncation or rounding.
Hence, we assume all real numbers cost $\Ord{1}$ space and
the basic arithmetic operations between them cost $\Ord{1}$ time.
While the accumulated error can be important, dealing with a proper error analysis would require
a substantial deviation from the main purpose of this paper.

For the quantum algorithms, we keep track of the amount of (quantum) bits for storing real numbers. 
We use a standard fixed-point encoding of  real numbers. 
\begin{defn}[Notation for encoding of real numbers] \label{defEncoding}
Let $c_1,c_2$ be positive integers and $a\in \{0,1\}^{c_1}$ and $b\in \{0,1\}^{c_2}$ be bit strings. 
Define the (signed) rational number 
\be
\mathcal Q(a,b,s) :=
(-1)^s \left( 2^{c_1-1}a_{c_1} + \cdots + 2 a_2 + a_1+ \frac{1}{2} b_1 + \cdots + \frac{1}{2^{c_2}} b_{c_2}\right) \in [-R,R],
\ee
where $R:=2^{c_1}-\frac{1}{2^{c_2}}$.
For representing numbers in $[0,2 -\frac{1}{2^{c}}]$ with positive integer $c$ bits after the decimal point (the case $[0,1]$ being the most frequently used in this work)  we use $c_1=1$ and $c_2 = c$, and define the short-hand notation
\be
\mathcal Q(z) := \mathcal Q(a,b,0)= a + \frac{1}{2} b_1 + \cdots + \frac{1}{2^{c}} b_{c},
\ee
where $z := (a,b) \in \{0,1\}^{c+1}$.
Given a vector of bit strings $\bm v\in (\{0,1\}^{c+1})^n$, the notation $\mathcal Q (\bm v)$ means the vector whose $j$-th component is
$\mathcal Q (v_j)$.
\end{defn}
For any real number $r \in [0,2^{c_1}]$ there exist $a\in \{0,1\}^{c_1}$ and $b\in \{0,1\}^{c_2}$  such that the difference to $\mathcal Q(a,b,0)$ is at most $\frac{1}{2^{c_2+1}}$.

\subsection{The learning model}
We consider the standard ``probabilistic concept" (p-concept) learning model (Ref.~\cite{Kearns1994}) in our paper.
Let $\mathcal{X}$ be the input space and $\mathcal{Y}$ be the output space. A \textit{concept class}
$\mathcal{C}$ is a class of functions mapping the input space to the output space, i.e.,  $\mathcal{C} \subseteq \mathcal{Y}^{\mathcal{X}}$. 
We define here weak learnability with a fixed lower bound for the error, in contrast to the standard definition of p-concept learnability for all $\epsilon_0 > 0$.
\begin{defn}[Weak p-concept learnable] \label{defLearnable}
For $\epsilon_0> 0$, a concept class
$\mathcal{C}$ is ``weak \textit{p-concept learnable} up to $\epsilon_0$'' if 
there exists
an algorithm $\mathcal{A}$ such that for every $\delta>0$, $c \in \mathcal{C}$, and distribution
$\mathcal{D}$ over $\mathcal{X} \times \mathcal{Y}$ with
$\mathbb{E}_{y} [y | \mathbf{x}] = c (\mathbf{x})$ we have
that $\mathcal{A}$, given access to samples drawn from $\mathcal{D}$, outputs a \textit{hypothesis} $h:\mathcal{X} \to \mathcal{Y}$,
such that with probability at least $1-\delta$,
\[
\varepsilon(h) := \mathbb{E}_{(\mathbf{x}, y) \sim \mathcal{D}} \left[ (h(\mathbf{x}) - c (\mathbf{x}))^2 \right] \le \epsilon_{0}.
\]
\end{defn}
The quantity $\varepsilon(h)$ is the generalization error of hypothesis $h$.
Moreover, if we have $m$ samples $(\bm{x}_i, y_i)$ drawn from the distribution $\mathcal{D}$, 
we define the empirical error of $h$ as
\[
\hat{\varepsilon}(h) := \frac{1}{m} \sum_{i=1}^m (h(\mathbf{x}_i) - c (\mathbf{x}_i))^2.
\]
For convenience, we also define another similar function as
\[
\err(h) := \mathbb{E}_{(\bm{x}, y) \sim \mathcal{D}} [(h(\bm{x}) - y)^2].
\]
Since $\mathbb{E}_{(\bm{x}, y) \sim \mathcal{D}} [y] = \mathbb{E}_{(\bm{x}, y) \sim \mathcal{D}} [\mathbb{E}_y [y | \bm{x}]]$,
it is easy to see that
\be
\begin{array}{rcl}
\err(h) - \err(\mathbb{E}_y [y | \bm{x}]) & = & \mathbb{E}_{(\bm{x}, y) \sim \mathcal{D}} [(h(\bm{x}) - y)^2] -
\mathbb{E}_{(\bm{x}, y) \sim \mathcal{D}} [(\mathbb{E}_y [y | \bm{x}] - y)^2]\\
& = & \mathbb{E}_{(\bm{x}, y) \sim \mathcal{D}} [h(\bm{x})^2 - 2\mathbb{E}_y [y | \bm{x}]h(\bm{x}) + \mathbb{E}_y [y | \bm{x}]^2 ]\\
& = & \mathbb{E}_{(\bm{x}, y) \sim \mathcal{D}} [h(\bm{x})^2 - 2c(\bm{x})h(\bm{x}) + c(\bm{x})^2]
=  \varepsilon(h).
\end{array}
\label{eq:constErr}
\ee
Note that $\mathbb{E}_{y} [y | \bm{x}]$ is independent of the choice of $h$.
Hence, for hypotheses $h_1$ and $h_2$, 
we have
$\err(h_1) - \err(h_2) = \varepsilon(h_1) - \varepsilon(h_2)$.
Thus, we may use $\err()$ instead of $\varepsilon()$ for comparing hypothesis.
Moreover, by using the empirical version of the $\err(h)$ function, even without knowing the probability distribution $\mathcal{D}$,
we are still able to evaluate the quality of the hypothesis $h$ given $m$ samples $(\mathbf{x}_i, y_i) \sim \mathcal{D}$ as
\[
\hat{\err}(h) := \frac{1}{m} \sum_{i=1}^m (h(\mathbf{x}_i) - y_i)^2.
\]
By the Chernoff bound,
we may bound the generalization error $\err()$ in terms of the empirical error $\hat{\err}()$ with high probability.

To learn a good hypothesis, on the one hand, we prefer to assume a relatively simple concept class (e.g.,
a concept class consisting only of linear functions). Then it is easy to design 
an algorithm for finding the best hypothesis in
that class.
On the other hand the real-world data distribution is often complicated and cannot be captured by 
a hypothesis from a simple concept class.
The kernel trick is widely used to turn a 
simple linear concept class and a given learning algorithm into a non-linear concept class and a corresponding learning algorithm, usually without
changing too much the algorithm. 
In the kernel method, we use a more general function to measure the similarity between two vectors instead of the linear inner product.
The kernel function $\mathcal{K}: \mathcal{X} \times \mathcal{X} \to \mathbb{R}$ is a
(usually non-linear) similarity measure
on the input space and it is defined via a feature map.
Let $\mathcal{V}$ be an arbitrary metric space with inner product $\langle \cdot, \cdot \rangle$.
The feature map $\psi : \mathcal X \to \mathcal{V}$
maps any input vector into the metric space (also called feature space).
For vectors $\mathbf{x}, \mathbf{y} \in \mathcal{X}$,
we define $\mathcal{K}(\mathbf{x}, \mathbf{y}) = \langle \psi(\mathbf{x}), \psi(\mathbf{y}) \rangle$.

For our purpose, we use the multinomial kernel function to allow the learning of non-linear concepts. 
Consider formal polynomials over $n$ variables of total degree $d$.
There are $n_d :=\sum_{i=0}^d n^i$ monomials, which can be uniquely indexed by the tuples $(k_1, \cdots, k_i) \in [n]^i$ for $i\in \{0, \cdots, d\}$.
We consider from now on the input space $\mathcal{X} = \mathbb{B}^{n} \subseteq \mathbb{R}^n$ and the feature space $\mathcal V = \mathbb R^{n^d}$.
We consider the standard Euclidean metric on the feature space $\mathbb{R}^{n_d}$.
We define a normalized feature map $\psi_d : \mathbb{B}^{n} \to \mathbb{R}^{n_d}$ which maps a vector
$\mathbf{x} = (x_1, x_2, \cdots, x_n)^T \in \mathcal X$
to a $n_d$-dimensional vector of monomials of the variables $x_1, x_2, \cdots, x_n$. For $i\in \{0, \cdots, d\}$ and $(k_1, \cdots, k_i) \in [n]^i$, define
\be
[\psi_d (\bm x)]_{(k_1, \cdots, k_i)} := \frac{1}{\sqrt {d+1}} \prod_{j=1}^{i} \mathbf{x}_{k_j}.
\ee
When $i = 0$, we have the empty tuple and the corresponding component is the constant term $1$. Note that for $n\geq2$, we have both the components $x_1x_2$ and $x_2 x_1$. This redundancy can be avoided by the use of ordered multisets but will not influence our discussion. 
For $\bm{x}, \bm{y} \in \mathbb{B}^{n}$,
we can compute the inner product as
\[ \langle \psi_d(\mathbf{x}) , \psi_d(\mathbf{y}) \rangle = \frac{1}{d+1} \sum_{i=0}^d \sum_{1 \le k_1,\cdots,k_i \le n} \left [\left (\prod_{j=1}^i \mathbf{x}_{k_j}\right) \cdot \left(\prod_{j=1}^i \mathbf{y}_{k_j}\right )\right]
= \frac{1}{d+1} \sum_{i =0}^d (\mathbf{x} \cdot \mathbf{y})^i.\]
Observe that by definition, for all $\textbf{x}, \textbf{y} \in \mathbb{B}^{n}$, we have that  
$\langle \psi_d(\mathbf{x}), \psi_d(\mathbf{y})\rangle \leq 1$. 
This can be seen from,
\[ \max_{\textbf{x} \in \mathbb{B}^{n}}
\langle \psi_d(\mathbf{x}), \psi_d(\mathbf{x})\rangle = \frac{1}{d+1} \max_{\textbf{x} \in \mathbb{B}^{n}} \sum_{i=0}^d
\left(\Vert\textbf{x}\Vert_2^2 \right)^i =1, 
\]
where we have used that $\max_{\textbf{x} \in \mathbb{B}^{n}} \Vert\textbf{x}\Vert_2^2 = 1$ for the unit ball $\mathbb{B}^{n}$. 
Throughout this paper, our definition for the normalized multinomial kernel function with degree $d$ is
\[ \mathcal{K}_d(\mathbf{x}, \mathbf{y}) := \langle \psi_d(\mathbf{x}), \psi_d(\mathbf{y}) \rangle =
\frac{1}{d+1}\sum_{i =0}^d (\mathbf{x}\cdot \mathbf{y})^i.\]
With these definitions, let us consider the following concept class.
\begin{defn} [Concept class and distribution] \label{assume:alpha}
Let $\mathcal{K}_d$ be the normalized multinomial kernel function corresponding to the feature map $\psi_d:\mathbb{B}^{n} \to \mathbb{R}^{n_d}$, defined as above. 
Let $L, B,\zeta,\epsilon >0$. 
Let $u: \mathbb{R} \rightarrow [0,1]$ be a known $L$-Lipschitz non-decreasing function.
Define the concept class $\mathcal C \subseteq [0,1]^{\mathbb{B}^{n}}$ of functions, where for $c \in \mathcal C$ there exists a $\textbf{v} \in \mathbb R^{n_d}$ with $\Vert \textbf{v}\Vert_2 \leq B$ and a $\xi: \mathbb{R}^n \rightarrow [-\zeta, \zeta]$ (noise function) with $\mathbb{E}_{\mathbb{B}^{n}}\left[\xi(\bm{x})^2 \right] \leq \epsilon$ such that
\be
c(\bm x) = u(\langle \textbf{v}, \psi_d(\bm{x}) \rangle + \xi(\bm{x})).
\ee
Consider a distribution $\mathcal{D}$ on $\mathbb{B}^{n} \times [0,1]$
for which $c\in \mathcal C$ exists such that the distribution satisfies 
$$\mathbb{E}_y[y | \bm{x}] = c(\bm x),$$
where $\mathbb{E}_y[y | \bm{x}]$ is the conditional expectation value.
\end{defn}

The $\epsilon$ in this definition motivates  Definition \ref{defLearnable}, as we will see that we cannot learn the concept class for all $\epsilon_0>0$. The intrinsic error $\epsilon$ will define a lower bound for the error.
The learning guarantee for the \textsc{Alphatron} and all algorithms in this work is proven for this concept class. 

\subsection{Alphatron algorithm}

We review the classical \textsc{Alphatron} algorithm of \cite{pmlr-v99-goel19b}.
Alphatron is the first provably efficient algorithm for learning neural networks with two nonlinear layers without further assumptions on the structure of the neural network.
It can also be used for many problems like Boolean Learning and Multiple Instance Learning.
The key idea is to combine  isotonic regression with kernel methods, where  isotonic regression is to find a non-decreasing function for predicting sequences of observations.
More explicitly, isotonic regression is to find
$$
\hat{\beta}=\arg\min_{\beta \in \mathbb R^m} \sum_{i=1}^m (x_i-\beta_i)^2\ \mathrm{subject\ to }\ \beta_1\leq\dots\leq \beta_m,
$$
where $x_i \in \mathbb R$ are a sequence of $m$ data.
The setting considered for the \textsc{Alphatron} algorithm includes a kernel function into the problem, as we can see in the following.

The setting is as follows.
We split the data into two parts, training set and validation set.
The training data set contains $(\mathbf{x}_i, y_i)^{m_1}_{i=1} \in \mathbb{B}^{n} \times [0,1]$, where $m_1$ is the size of the training data.
The validation data set contains $(\mathbf{a}_i, b_i)^{m_2}_{i=1}\in \mathbb{B}^{n} \times [0,1]$ of size $m_2$.
Let $m := m_1 + m_2$ be the total size of the data set.
Then, since $m_1, m_2 \in \Ord{m}$ we can use $\Ord{m}$ as an upper bound of the size of data.
In the \textsc{Alphatron} algorithm, we first build several hypotheses from the training set.
Then we use the validation set to evaluate each hypothesis and select the optimal one from them.
\begin{algorithm}[H]
\label{algAlphatron}
\caption{\textsc{Alphatron}}
\SetKw{KwReturn}{Return}
\SetKw{KwInput}{Input}
\SetKw{KwOutput}{Output}
\KwInput{training data $(\mathbf{x}_i, y_i)^{m_1}_{i=1}$, testing data $(\mathbf{a}_i, b_i)^{m_2}_{i=1}$, 
function $u: \mathbb{R} \to [0, 1]$, number of iterations $T$, degree of the multinomial kernel $d$, learning rate $\lambda$.}
\DontPrintSemicolon

$\alpha^0 \leftarrow \mathbf{0} \in \mathbb{R}^{m_1}$\;
\For {$t \leftarrow 0$ \KwTo $T - 1$} {
define $h^t(\bm{x}) := u\left(\sum_{i=1}^{m_1} \alpha_i^t \mathcal{K}_d(\mathbf{x}, \mathbf{x}_i)\right)$ \label{line:defineh}\;
\For {$i \leftarrow 1$ \KwTo $m_1$} {
$\alpha_i^{t+1} \leftarrow \alpha_i^t + \frac{\lambda}{m_1} (y_i - h^t(\mathbf{x}_i))$ \label{line:updatealpha}\;
}
}
\KwOutput{ $\bm \alpha^{t_{\rm out}}$, where $t_{\rm out} = \arg \min_{t \in [T]} \frac{1}{m_2} \sum_{j=1}^{m_2} (h^t (\bm{a}_j) - b_j)^2$} \label{line:valid}\;
\end{algorithm}

The algorithm has a number of iterations $T$, which will be related to the other input quantities via the learning guarantees.
In each iteration, the algorithm generates a new vector $\alpha^{t+1}$ and a new hypothesis $h^{t+1}$ from the old ones.
The output of the algorithm is a vector $\alpha ^{t_{\rm out}} \in \mathbb R^{m_1}$ describing the hypothesis $h^{t_{\rm out}}:\mathbb{B}^{n} \to [0,1]$,
which has some p-concept error according to the input data.
First, we are interested in the general run time complexity, before we discuss the guarantees for the weak p-concept learning of the specific concept class. 
\begin{theorem} [Run time of \textsc{Alphatron}] \label{thmOrig}
Algorithm~\ref{algAlphatron} has a run time of $\mathcal{O}(Tm^2(n+\log d))$.
\end{theorem}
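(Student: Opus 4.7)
The plan is to bound the run time by analyzing separately the cost of a single kernel evaluation, the cost of one outer iteration, and the cost of the validation step, then combine.

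First I would analyze the cost of a single kernel call $\mathcal{K}_d(\mathbf{x},\mathbf{y})$. The inner product $\mathbf{x}\cdot\mathbf{y}$ takes $\mathcal{O}(n)$ arithmetic operations, and given the scalar $c=\mathbf{x}\cdot\mathbf{y}$ the value $\frac{1}{d+1}\sum_{i=0}^{d} c^{i}$ can be evaluated via the closed form $\frac{c^{d+1}-1}{(d+1)(c-1)}$ using repeated squaring for $c^{d+1}$, in $\mathcal{O}(\log d)$ time (with the usual separate handling of $c=1$). Under the $\mathcal{O}(1)$ arithmetic-cost model stated in the preliminaries, this gives a single kernel evaluation cost of $\mathcal{O}(n+\log d)$.

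Next, I would analyze one outer iteration $t\to t+1$. The update at line \ref{line:updatealpha} requires computing $h^{t}(\mathbf{x}_i)$ for every $i\in[m_1]$; each such evaluation is a sum of $m_1$ terms of the form $\alpha_j^{t}\mathcal{K}_d(\mathbf{x}_i,\mathbf{x}_j)$ followed by a single application of $u$, at cost $\mathcal{O}(m_1(n+\log d))$. Ranging over $i$, this contributes $\mathcal{O}(m_1^{2}(n+\log d))$ per iteration, and the subsequent additive update of the $m_1$ coordinates of $\alpha$ is dominated by this. Over $T$ iterations this yields $\mathcal{O}(Tm_1^{2}(n+\log d))$.

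Finally, I would account for the validation step at line \ref{line:valid}. Having stored the $T$ vectors $\alpha^{t}$, evaluating $h^{t}(\mathbf{a}_j)$ for all $t\in[T]$ and $j\in[m_2]$ costs $\mathcal{O}(Tm_1m_2(n+\log d))$, and the subsequent computation and comparison of the $T$ empirical means over $m_2$ samples is absorbed. Summing the two dominant contributions and using $m_1,m_2\le m$ gives the claimed $\mathcal{O}(Tm^{2}(n+\log d))$.

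The only nontrivial point is the $\mathcal{O}(\log d)$ bound on evaluating the kernel once the inner product is known; everything else is a direct accounting argument. Note also that, since the proof treats each iteration independently and recomputes kernel entries as needed, the bound corresponds to the naive implementation, and a tighter bound is obtainable by pre-computing the kernel matrix, which is precisely the improvement the later sections of the paper exploit.
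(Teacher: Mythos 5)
Your decomposition—per-kernel cost of $\mathcal{O}(n+\log d)$ via the geometric-sum closed form, then $\mathcal{O}(Tm_1^2)$ kernel evaluations in the update loop plus $\mathcal{O}(Tm_1m_2)$ in the validation step—is exactly the paper's argument. The proof is correct and matches the paper's route.
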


\begin{proof}
For computing the multinomial kernel function $\mathcal{K}_d(\bm{x}, \bm{y})$,
we need to first compute the inner product $\langle \bm{x}, \bm{y} \rangle$ in $\mathcal{O}(n)$ time trivially.
For all $r\in \mathbb R \setminus \{1\}$, since $\sum_{0 \le i \le d} r^i = \frac{r^{d+1}-1}{r-1}$, it costs $\Ord{\log d}$ time to
compute the multinomial kernel function from the inner product $r$. Thus, by the definition of
$h^t(\bm{x})$ in line~\ref{line:defineh} of the algorithm, $h^t(\bm{x})$ is computed in $\mathcal{O}(m(n + \log d))$ for a given $\bm x$. In the first part, the training phase, 
line~\ref{line:updatealpha} is executed for $\mathcal{O}(Tm)$ times. Hence, this part costs $\mathcal{O}(Tm^2 (n+\log d))$.
Similarly, in the second part, the validation phase (line~\ref{line:valid}), a number $\mathcal{O}(T m)$ of calls to the function $h^t (\bm{a})$ is used.
Hence, the algorithm costs $\mathcal{O}(Tm^2(n+\log d))$ in total.
\end{proof}

For obtaining a learning guarantee in the setting from Definition \ref{assume:alpha}, it is  supposed that 
the following relations between the  parameters hold.
\begin{defn} [Parameter definitions and relations] \label{assume:param}
Consider the setting in Definition \ref{assume:alpha}, which defines the distribution $\mathcal D$ and the parameters $(B, L, \zeta, \epsilon)$, and consider the Algorithm \ref{algAlphatron}, which uses the parameters $(m_1, m_2, T, \lambda)$. 
Define the following additional parameters and fix the following relationships between the parameters. 
\begin{enumerate}
\item Equate the $L$-Lipschitz non-decreasing function from the concept class with the function $u$ used in the algorithm.
\item Learning rate $\lambda =1/L$.
\item Let the training set $(\bm{x}_i,y_i)_{i=1}^{m_1}$ be sampled iid from $\mathcal D$.
\item Let $C>0$ be a large enough constant and set 
$T = C B L\sqrt{m_1/\log(1/\delta)}$.
\item Let $C^\prime>0$ be a large enough constant and set 
$m_2 = C^\prime m_1\log(T/\delta)$, and let the validation set  $(\bm{a}_i,b_i)_{i=1}^{m_2}$ be sampled iid from $\mathcal D$.
\item Define $A_2 := L\sqrt{\epsilon} + L \zeta \sqrt[4]{\frac{\log(1/\delta)}{m_1}} + BL\sqrt{\frac{\log(1/\delta)}{m_1}}$ and let $C^{\prime\prime}>0$ be a large enough constant. 
\end{enumerate}
\end{defn}
The following learning guarantee was proven in Ref.~\cite{pmlr-v99-goel19b}.
\begin{theorem} [Learning guarantee of \textsc{Alphatron}, same as Ref.~\cite{pmlr-v99-goel19b}] \label{thmOrigLearning}
Given the learning setting in Definition \ref{assume:alpha}
and the parameters defined in Definition \ref{assume:param},
Algorithm~\ref{algAlphatron} outputs $\alpha^{t_{\rm out}}$ which describes the hypothesis $h^{t_{\rm out}}(\bm{x}) := u\left(\sum_{i=1}^{m_1} \alpha_i^{t_{\rm out}} \mathcal{K}_d(\mathbf{x}, \mathbf{x}_i)\right)$ such that with probability $1 - \delta$,
\[
\varepsilon\left(h^{t_{\rm out}}\right) \leq C^{\prime\prime} A_2.
\]
\end{theorem}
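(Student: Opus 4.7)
The plan is to follow the dual-variable analysis of Goel--Klivans. I would introduce the implicit primal weight in feature space, $w^t := \sum_{i=1}^{m_1} \alpha_i^t \psi_d(\bm{x}_i) \in \mathbb{R}^{n_d}$, so that $h^t(\bm{x}) = u(\langle w^t, \psi_d(\bm{x})\rangle)$, and track the potential $\Phi^t := \|v - w^t\|_2^2$, where $v$ is the target vector from Definition \ref{assume:alpha}. Since $\alpha^0 = 0$, we have $\Phi^0 = \|v\|_2^2 \le B^2$. Expanding $\Phi^{t+1} - \Phi^t$ and substituting the update $w^{t+1} - w^t = \frac{\lambda}{m_1}\sum_i (y_i - h^t(\bm{x}_i))\psi_d(\bm{x}_i)$, together with the Lipschitz-monotone inequality $(u(a)-u(b))(a-b) \ge \frac{1}{L}(u(a)-u(b))^2$ applied at $a = \langle v,\psi_d(\bm{x}_i)\rangle$ and $b = \langle w^t,\psi_d(\bm{x}_i)\rangle$, I derive the progress inequality
\[
\Phi^{t+1} \le \Phi^t - \frac{2\lambda}{m_1 L}\, \hat{\varepsilon}^t + (\text{noise cross-terms}) + \|w^{t+1} - w^t\|_2^2,
\]
where $\hat{\varepsilon}^t := \frac{1}{m_1}\sum_i (u(\langle v,\psi_d(\bm{x}_i)\rangle) - h^t(\bm{x}_i))^2$. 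The quadratic term is bounded by $\lambda^2/m_1$ using $\|\psi_d(\bm{x})\|_2 \le 1$ and $y_i, h^t(\bm{x}_i) \in [0,1]$.

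Next, I split the residual $y_i - u(\langle v,\psi_d(\bm{x}_i)\rangle)$ into the model noise $\xi(\bm{x}_i)$ (via $L$-Lipschitzness of $u$) and the label noise $y_i - \mathbb{E}[y|\bm{x}_i]$. The resulting cross-terms are bounded by Cauchy--Schwarz against $\sqrt{\hat{\varepsilon}^t}$, with the sums of $\xi(\bm{x}_i)^2$ concentrated around $\epsilon$ and the martingale sum $\sum_i (y_i - \mathbb{E}[y|\bm{x}_i]) \langle v-w^t,\psi_d(\bm{x}_i)\rangle$ controlled by Azuma--Hoeffding using $\|v-w^t\|_2 \le 2B$ (or tracked iteratively). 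Telescoping $\Phi^t$ from $0$ to $T-1$ and using $\Phi^T \ge 0$ yields $\min_{t} \hat{\varepsilon}^t \lesssim L^2 \epsilon + L^2\zeta^2 \sqrt{\log(1/\delta)/m_1} + B^2 L^2 \log(1/\delta)/m_1 + B L /(T\lambda) $, which, with $\lambda = 1/L$ and $T = CBL\sqrt{m_1/\log(1/\delta)}$, balances to order $A_2^2$ after taking square root. Hence some iterate $t^\star$ satisfies $\sqrt{\hat{\varepsilon}^{t^\star}} = O(A_2)$.

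To pass to generalization error $\varepsilon(h^t) = \mathbb{E}[(h^t(\bm{x}) - c(\bm{x}))^2]$, I use the Rademacher bound stated in Appendix \ref{appRade} for the class of kernelized hypotheses with implicit RKHS norm $O(B)$ (the $w^t$ generated by the algorithm lie in a ball of this radius up to the progress bookkeeping), giving a uniform deviation $O(BL\sqrt{\log(1/\delta)/m_1})$ that is already absorbed into $A_2$. Finally, the validation step (line \ref{line:valid}) selects $t_{\rm out}$ minimizing $\hat{\err}$ on the fresh set of size $m_2$; a Chernoff-plus-union-bound argument over $T$ candidates gives an additional error of order $\sqrt{\log(T/\delta)/m_2}$, which by the choice $m_2 = C' m_1 \log(T/\delta)$ is dominated by $A_2$. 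Combining with equation \eqref{eq:constErr}, which makes $\err$ differences equal to $\varepsilon$ differences, yields $\varepsilon(h^{t_{\rm out}}) \le C'' A_2$.

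The main obstacle is the careful bookkeeping of the two noise channels in the progress inequality: the fixed distributional noise $\xi$ (handled by $\mathbb{E}[\xi^2] \le \epsilon$ plus a Bernstein-type tail to produce the $\sqrt[4]{\log(1/\delta)/m_1}$ factor on the $L\zeta$ term) and the adapted martingale $y_i - \mathbb{E}[y|\bm{x}_i]$, whose summands depend on the current iterate and therefore require a data-independent norm bound on $v - w^t$ before Azuma can be applied. Keeping these two contributions separate is what forces the asymmetric form of $A_2$, and matching their scalings with the $\Phi^0/T\lambda$ term dictates the specific choice of $T$. Everything else is the standard reduction from empirical to population error via Rademacher complexity and validation-set hold-out, both of which are off-the-shelf.
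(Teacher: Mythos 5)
The high-level structure you propose (implicit primal iterates $w^t$ in feature space, potential $\Phi^t=\|v-w^t\|_2^2$, per-iteration progress inequality via the Lipschitz--monotone identity, then Rademacher for generalization and a Chernoff/union bound on the hold-out set) is exactly the route of Ref.~\cite{GK2017}, which the paper imports wholesale as Lemma~\ref{lem:alphatron} and reuses in the proof of Theorem~\ref{thm:alpha}. So the architecture is right. There is, however, a concrete gap in the central inequality as you have written it.

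You bound the quadratic term by
\begin{equation*}
\|w^{t+1}-w^t\|_2^2 \le \frac{\lambda^2}{m_1},
\end{equation*}
but this cannot be a correct deterministic bound. With $w^{t+1}-w^t=\frac{\lambda}{m_1}\sum_i (y_i-h^t(\bm{x}_i))\psi_d(\bm{x}_i)$, $|y_i-h^t(\bm{x}_i)|\le 1$, and $\|\psi_d(\bm{x}_i)\|_2\le 1$, the triangle inequality only gives $\|w^{t+1}-w^t\|_2\le\lambda$. The bound $\lambda^2/m_1$ would follow from an implicit assumption that the $\psi_d(\bm{x}_i)$ are mutually orthogonal, which is false (indeed $\langle\psi_d(\bm{x}_i),\psi_d(\bm{x}_j)\rangle=\mathcal{K}_d(\bm{x}_i,\bm{x}_j)$ can equal $1$). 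With the correct crude bound $\lambda^2$, your telescoping gives, after substituting $\lambda=1/L$,
\begin{equation*}
\min_t \hat{\varepsilon}^t \;\lesssim\; \frac{L^2B^2}{T}+(\text{noise})+\frac{1}{2},
\end{equation*}
where the constant term $\frac{1}{2}$ (coming from $\frac{L\lambda}{2}$) cannot be driven down by increasing $T$ or $m_1$, so the telescope never closes. This is precisely the technical difficulty that Lemma~\ref{lem:alphatron} resolves: one instead uses the Gram-matrix eigenvalue bound $\|w^{t+1}-w^t\|_2^2\le\frac{\lambda^2}{m_1}\sum_i(y_i-u(b_i))^2$, then applies the \emph{same} residual decomposition you use for the cross-term (into the $\xi$-noise and the martingale $y_i-\mathbb{E}[y|\bm{x}_i]$) to this sum, which lets the quadratic contribution be absorbed against the $\frac{1}{L^2}\hat{\varepsilon}^t$ progress and the small noise terms $\eta,\rho$. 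The result is an $A_1$ with no residual $\Theta(1/L^2)$ slack. Without this second application of the decomposition, your progress inequality does not support the conclusion.

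Two secondary issues: the linear coefficient you write, $\frac{2\lambda}{m_1 L}\hat{\varepsilon}^t$, is inconsistent with your normalized definition of $\hat{\varepsilon}^t$ (it should read $\frac{2\lambda}{L}\hat{\varepsilon}^t$); and the paper does not simply telescope but instead uses the two-case argument (Eqs.~(\ref{eqConvergenceCase1})--(\ref{eqConvergenceCase2})) to find $t^*$ and, as a side benefit, to establish $\|\bm{v}(\tilde{\alpha}^{t^*})\|_2\le 2B$, which you need explicitly before invoking the Rademacher bound for the function class $\mathcal Z$. If you telescope, you still need to supply that norm bound separately.
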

Next, we discuss a regime where Theorem \ref{thmOrigLearning} achieves p-concept learnability. This result was implicit in Ref.~\cite{pmlr-v99-goel19b}.
\begin{theorem}[P-concept learnability via the \textsc{Alphatron}] \label{thmLearning}
Let $m'_1 := \frac{16 \zeta^4}{ \epsilon^2} \log(1/\delta)$ and $m''_1 := \frac{ 4 B^2}{ \epsilon} \log(1/\delta)$.
If $m_1 \geq \max \left \{m'_1, m''_1 \right \},$  then the concept class in Definition \ref{assume:alpha} is weak p-concept learnable up to $2 C^{\prime\prime} L \sqrt \epsilon$ by the \textsc{Alphatron} algorithm.
\end{theorem}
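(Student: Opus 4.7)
The plan is to reduce this directly to the learning guarantee of Theorem~\ref{thmOrigLearning}. That theorem already says that with probability $1-\delta$ the hypothesis output by \textsc{Alphatron} satisfies $\varepsilon(h^{t_{\rm out}}) \leq C^{\prime\prime} A_2$, where
\[
A_2 = L\sqrt{\epsilon} \;+\; L\zeta \sqrt[4]{\tfrac{\log(1/\delta)}{m_1}} \;+\; BL\sqrt{\tfrac{\log(1/\delta)}{m_1}}.
\]
So it suffices, under the stated lower bound on $m_1$, to show that $A_2 \leq 2 L\sqrt{\epsilon}$; the conclusion $\varepsilon(h^{t_{\rm out}}) \leq 2 C^{\prime\prime} L\sqrt{\epsilon}$ then matches exactly the definition of weak p-concept learnability (Definition~\ref{defLearnable}) with $\epsilon_0 = 2 C^{\prime\prime} L\sqrt{\epsilon}$.

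To prove $A_2 \leq 2 L\sqrt{\epsilon}$, I would bound the second and third terms of $A_2$ separately by $\tfrac{1}{2} L\sqrt{\epsilon}$, leaving the first term $L\sqrt{\epsilon}$ untouched. The two thresholds $m_1'$ and $m_1''$ are clearly reverse-engineered precisely for these two bounds: plugging $m_1 \geq m_1' = 16\zeta^4 \log(1/\delta)/\epsilon^2$ into the noise term gives $\log(1/\delta)/m_1 \leq \epsilon^2/(16\zeta^4)$, hence $L\zeta \sqrt[4]{\log(1/\delta)/m_1} \leq L\zeta \cdot \sqrt{\epsilon}/(2\zeta) = \tfrac{1}{2}L\sqrt{\epsilon}$. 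Similarly, plugging $m_1 \geq m_1'' = 4B^2 \log(1/\delta)/\epsilon$ into the Rademacher term gives $\log(1/\delta)/m_1 \leq \epsilon/(4B^2)$, hence $BL\sqrt{\log(1/\delta)/m_1} \leq BL \cdot \sqrt{\epsilon}/(2B) = \tfrac{1}{2} L \sqrt{\epsilon}$. Summing the three contributions yields $A_2 \leq L\sqrt{\epsilon} + \tfrac{1}{2}L\sqrt{\epsilon} + \tfrac{1}{2}L\sqrt{\epsilon} = 2L\sqrt{\epsilon}$, as required.

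There is no real obstacle here: the statement is essentially an inversion of the bound in $A_2$ to extract a sample-complexity threshold that swallows the $\zeta$- and $B$-dependent error terms into the intrinsic $L\sqrt{\epsilon}$ term. The only modeling point worth stating explicitly in the write-up is that the algorithm is run with the parameters $(m_2, T, \lambda)$ prescribed in Definition~\ref{assume:param}, so that Theorem~\ref{thmOrigLearning} applies; the hypothesis $h^{t_{\rm out}}$ produced from $\alpha^{t_{\rm out}}$ is then the desired hypothesis output by the learning algorithm $\mathcal A$ in Definition~\ref{defLearnable}.
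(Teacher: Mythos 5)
Your proof is correct and follows essentially the same route as the paper: both reduce to Theorem~\ref{thmOrigLearning} and bound the $\zeta$-dependent and $B$-dependent terms of $A_2$ separately, each by $\tfrac{1}{2}L\sqrt{\epsilon}$, using $m_1'$ and $m_1''$ respectively. The only cosmetic difference is that you verify the given thresholds forward, whereas the paper derives them by working backward from the target bound $C''A_2 \leq \epsilon_0$ with each term $\leq \epsilon_0/4$; the algebra is identical.
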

\begin{proof}
Recall that weak p-concept learnability up to $\epsilon_0$ means that
\be
\epsilon(h^{t_{\rm out}}) \leq \epsilon_0,
\ee
with probability $1-\delta$. Hence, we desire that $C^{\prime\prime} A_2 \leq \epsilon_0$.
Note the trivial case when 
$\epsilon_0 < C^{\prime\prime} L \sqrt \epsilon$, which means that the intrinsic error of the concept class is too large, and we fail to achieve learnability. 
Hence, we can only prove the case when $\epsilon_0 \geq C^{\prime\prime} L \sqrt \epsilon$, 
and we prove the theorem only for $\epsilon_0 = 2 C^{\prime\prime} L \sqrt \epsilon$,
where we use a factor $2$ to leave room for the other terms in $A_2$.
It follows that we would like to set $m_1$ such that 
\be \label{eqtwotermsfinal}
C^{\prime\prime} \left( L \zeta \sqrt[4]{\frac{\log(1/\delta)}{m_1}} + BL\sqrt{\frac{\log(1/\delta)}{m_1}} \right)\leq \frac{1}{2}\epsilon_0.
\ee
Here, Eq.~(\ref{eqtwotermsfinal}) can be achieved by making each term smaller than $\epsilon_0/4$ (alternatively, we can solve a quadratic equation, which leads to more complicated equations).
This means that both of following statements have to be true:
\be
m_{1} &\geq& \frac{256 C^{\prime\prime,4} L^4 \zeta^4}{\epsilon_0^4} \log(1/\delta),\\
m_{1} &\geq& \frac{ 16 C^{\prime\prime,2} B^2 L^2}{\epsilon_0^2} \log(1/\delta).
\ee
Hence, we take $m_1$ greater than the maximum of the right-hand side expressions. 
Employing the lower-bound $2 C^{\prime\prime} L \sqrt \epsilon \leq \epsilon_0$, we find that 
$ m_1 \geq \max \left \{ m'_1,m''_1 \right \}$ 
leads to weak p-concept learnability up to $2C^{\prime\prime} L \sqrt \epsilon$. 
\end{proof}

\section{Pre-computation and approximation of the kernel matrix}\label{sec:four}
One bottleneck of the \textsc{Alphatron} algorithm is the repeated inner product computation when evaluating the function $h^t (\bm{x})$. 
In Algorithm~\ref{algAlphatron}, at every step $t$ out of $T$ steps, we need to evaluate $\Ord{m^2}$ inner products for the kernel function. This evaluation is redundant because the inner products do not change for different $t$.
A simple pre-computing idea helps to reduce
the time complexity to some extent. We improve Algorithm~\ref{algAlphatron} as follows.
Given input data $(\mathbf{x}_i, y_i)^{m_1}_{i=1}$ and $(\mathbf{a}_i, b_i)^{m_2}_{i=1}$ and $d$,
we define two matrices
\[ K_{ij} := \mathcal{K}_d(\mathbf{x}_i, \mathbf{x}_j),\]
and
\[ K'_{ij} := \mathcal{K}_d(\mathbf{a}_i, \mathbf{x}_j).\]
If these two matrices are given by an 
oracle, then we are able to rewrite Algorithm~\ref{algAlphatron} as below.
Algorithm~\ref{algAlphatronMain} will be used as a subroutine several times in the remainder of this paper.

\begin{algorithm}[H]
\label{algAlphatronMain}
\caption{\textsc{Alphatron\_with\_Kernel}}
\SetKw{KwReturn}{Return}
\SetKw{KwInput}{Input}
\SetKw{KwOutput}{Output}
\KwInput{
Function $u: \mathbb{R} \to [0, 1]$, number of iterations $T$, 
learning rate $\lambda$,
query access to $K_{ij}$ and $K'_{ij}$}
\DontPrintSemicolon

$\alpha^0 \leftarrow \mathbf{0} \in \mathbb{R}^{m_1}$\;
\For {$t \leftarrow 0$ \KwTo $T-1$} {
\For {$i \leftarrow 1$ \KwTo $m_1$} {
$\alpha_i^{t+1} \leftarrow \alpha_i^t + \frac{\lambda}{m_1} \left(y_i - u\left (\sum_{j=1}^{m_1} \alpha_i^t \cdot K_{ij}\right)\right)$\;
}
}
\KwOutput{$\alpha^{t_{\rm out}}$, where $t_{\rm out} = \arg \min_{t \in [T]} \frac{1}{m_2} \sum_{i=1}^{m_2} \left(u\left(\sum_{j=1}^{m_1} \alpha_i^t \cdot K'_{ij}\right) - b_i\right)^2$} \label{algLine:valid}\;
\end{algorithm}
With equivalent input, Algorithm~\ref{algAlphatronMain} produces the same output as Algorithm \ref{algAlphatron}, which can be easily checked as follows. 
Fix the input for Algorithm~\ref{algAlphatron}. From these fixed training examples compute the kernel matrices $K_{ij} = \mathcal{K}_d(\mathbf{x}_i, \mathbf{x}_j)$
and
$K'_{ij} = \mathcal{K}_d(\mathbf{a}_i, \mathbf{x}_j)$.
Use these matrices and the other inputs of Algorithm \ref{algAlphatron} to fix the input of Algorithm~\ref{algAlphatronMain}.
The sequences $(\alpha_{\rm Alg\ref{algAlphatron}}^t)_{t\in[T]}$ and 
$(\alpha_{\rm Alg\ref{algAlphatronMain}}^t)_{t\in[T]}$ of both algorithms are the same and hence for the output it holds that
$$ \alpha_{\rm Alg\ref{algAlphatron}}^{t_{\rm out,Alg1}} = \alpha_{\rm Alg\ref{algAlphatronMain}}^{t_{\rm out,Alg2}}.$$
Note that even if we do not explicitly define the hypothesis $h^t$ in Algorithm~\ref{algAlphatronMain},
in the analysis, we still use the same notation $h^t$ for the $t$-th generated hypothesis as in Algorithm~\ref{algAlphatron}.
\begin{theorem} [\textsc{Alphatron\_with\_Kernel}]\label{thmVanilla}
Algorithm~\ref{algAlphatronMain} runs in time $\Ord{Tm^2}$.
\end{theorem}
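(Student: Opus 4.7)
The plan is a direct counting argument in the arithmetic model adopted by the paper, using only that each query to $K_{ij}$ or $K'_{ij}$ returns an entry in $\Ord{1}$ time and that each basic arithmetic operation on the stored reals is $\Ord{1}$. Since all the kernel-related work has been absorbed into the oracle, I expect no real difficulty; the main point is that, unlike in Theorem \ref{thmOrig}, the inner product $\sum_j \alpha_j^t K_{ij}$ now costs only $\Ord{m_1}$ rather than $\Ord{m_1(n+\log d)}$.

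First I would handle the training loop. Fix an iteration $t \in \{0,\dots,T-1\}$ and an index $i\in[m_1]$. The update in the inner loop requires one evaluation of $\sum_{j=1}^{m_1} \alpha_j^t K_{ij}$, which is a sum of $m_1$ products of stored reals obtained via $m_1$ oracle queries, costing $\Ord{m_1}$ total; applying $u$, subtracting from $y_i$, multiplying by $\lambda/m_1$, and adding to $\alpha_i^t$ are $\Ord{1}$ more operations. Thus each $t$ costs $\Ord{m_1^2}$, and the whole training phase costs $\Ord{T m_1^2}$.

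Second I would handle the validation step on line \ref{algLine:valid}. For each of the $T$ candidate iterates and each of the $m_2$ validation points $\bm a_i$, one computes $\sum_{j=1}^{m_1} \alpha_j^t K'_{ij}$ in $\Ord{m_1}$ time, applies $u$, subtracts $b_i$, and squares, for $\Ord{m_1}$ per $(t,i)$ pair. Summing and taking the argmin over $T$ values contributes $\Ord{T}$ which is negligible. So the validation phase costs $\Ord{T m_1 m_2}$.

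Adding the two phases gives a total cost of $\Ord{T m_1^2 + T m_1 m_2} = \Ord{T m_1(m_1+m_2)} = \Ord{T m^2}$, where the last step uses $m_1,m_2 \le m = m_1+m_2$, which is exactly the claimed bound. The only subtle point worth flagging is that this run time assumes the matrices $K$ and $K'$ have already been prepared by the oracle; the cost of that preparation is treated separately in Section \ref{sec:four}, so it is legitimate to exclude it here.
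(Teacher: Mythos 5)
Your proof is correct and takes essentially the same approach as the paper's, which is a one-line observation that $\Ord{1}$-time access to the kernel entries makes the run time $\Ord{Tm^2}$; you simply fill in the straightforward operation counting for the training and validation phases that the paper leaves implicit.
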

\begin{proof}
Since each entry of the matrices $K$ and $K'$ is accessible in $\Ord{1}$ time,
the run time of the algorithm is 
$\Ord{Tm^2}$.
\end{proof}

We now discuss the  pre-computation, i.e., we prepare 
the matrices
$K_{ij}$ and $K'_{ij}$ by evaluating the kernel function for the training and testing data.  
We present the following
algorithm, which performs the pre-computation and then runs Algorithm~\ref{algAlphatronMain}. 
On the same input, this algorithm produces exactly the same output as Algorithm~\ref{algAlphatron}.

\begin{algorithm}[H]
\label{algAlphatronPre} 
\caption{\textsc{Alphatron\_with\_Pre}}
\SetKw{KwReturn}{Return}
\SetKw{KwInput}{Input}
\SetKw{KwOutput}{Output}
\KwInput{training data $(\mathbf{x}_i, y_i)^{m_1}_{i=1}$, testing data $(\mathbf{a}_i, b_i)^{m_2}_{i=1}$, 
function $u: \mathbb{R} \to [0, 1]$, number of iterations $T$, degree of the multinomial kernel $d$, learning rate $\lambda$.}
\DontPrintSemicolon

\For {$i \leftarrow 1$ \KwTo $m_1$} {
\For {$j \leftarrow 1$ \KwTo $m_1$} {
$K_{ij} \gets \mathcal{K}_d(\mathbf{x}_i, \mathbf{x}_j)$\;
}
}

\For {$i \leftarrow 1$ \KwTo $m_2$} {
\For {$j \leftarrow 1$ \KwTo $m_1$} {
$K'_{ij} \gets \mathcal{K}_d(\mathbf{a}_i, \mathbf{x}_j)$\;
}
}
$\alpha^{t_{out}} \gets$ Run  \textsc{Alphatron\_with\_Kernel} (Algorithm~\ref{algAlphatronMain}) with all inputs as above and $K_{ij}$ and $K'_{ij}$.\;
\KwOutput{$\alpha^{t_{out}}$}
\end{algorithm}

\begin{theorem} [\textsc{Alphatron\_with\_Pre}] \label{thmPre}
Algorithm~\ref{algAlphatronPre} generates the same output as Algorithm~\ref{algAlphatron},
and runs in time $\mathcal{O}(m^2(n+\log d) + Tm^2)$.
\end{theorem}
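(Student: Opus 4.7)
The plan is to prove the two claims separately, both of which reduce to observations already in hand from the preceding discussion.

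For output equivalence, I would invoke the argument immediately preceding Theorem~\ref{thmVanilla}, which shows that if the kernel matrices $K_{ij}$ and $K'_{ij}$ supplied to \textsc{Alphatron\_with\_Kernel} are exactly $\mathcal{K}_d(\mathbf{x}_i,\mathbf{x}_j)$ and $\mathcal{K}_d(\mathbf{a}_i,\mathbf{x}_j)$ respectively, then the sequences $\alpha^t$ generated by Algorithms~\ref{algAlphatron} and~\ref{algAlphatronMain} coincide, and hence so do their outputs. Since the pre-computation loops in Algorithm~\ref{algAlphatronPre} assign exactly these values to $K$ and $K'$ before calling \textsc{Alphatron\_with\_Kernel}, the equivalence is immediate.

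For the run time bound, I would account for the two phases separately. In the proof of Theorem~\ref{thmOrig} we already observed that a single evaluation of $\mathcal{K}_d(\mathbf{x},\mathbf{y})$ costs $\Ord{n+\log d}$: the inner product $\mathbf{x}\cdot\mathbf{y}$ takes $\Ord{n}$ time, after which $\frac{1}{d+1}\sum_{i=0}^d (\mathbf{x}\cdot\mathbf{y})^i$ is evaluated via the closed form $\frac{r^{d+1}-1}{r-1}$ in $\Ord{\log d}$ time by repeated squaring. Consequently the double loop building $K$ costs $\Ord{m_1^2(n+\log d)}$ and the double loop building $K'$ costs $\Ord{m_1 m_2(n+\log d)}$. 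Since $m_1,m_2\in\Ord{m}$, the combined pre-computation cost is $\Ord{m^2(n+\log d)}$.

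Finally, the call to \textsc{Alphatron\_with\_Kernel} runs in time $\Ord{Tm^2}$ by Theorem~\ref{thmVanilla}, since at that point every kernel entry is accessible in $\Ord{1}$ time from the stored matrices. Adding the two contributions gives the claimed $\Ord{m^2(n+\log d)+Tm^2}$ bound. There is no real obstacle here — the only point worth being careful about is that the cost of a single kernel evaluation is $\Ord{n+\log d}$ rather than $\Ord{n\log d}$ or $\Ord{nd}$, which is precisely the observation imported from the proof of Theorem~\ref{thmOrig}.
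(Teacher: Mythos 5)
Your proposal is correct and matches the paper's proof in structure and content: output equivalence is established by identifying the $\alpha^t$ sequences of the two algorithms, and the run time is obtained by charging $\Ord{n+\log d}$ per kernel evaluation over $\Ord{m^2}$ pairs, then invoking Theorem~\ref{thmVanilla} for the main-loop cost $\Ord{Tm^2}$. The only cosmetic difference is that you split the $K$ and $K'$ loops and cite the discussion preceding Theorem~\ref{thmVanilla} for equivalence, whereas the paper states the $\alpha^t$-sequence identity directly, but these are the same argument.
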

\begin{proof}
First of all, it is straightforward to see that Algorithm~\ref{algAlphatronPre} behaves in the same way as Algorithm~\ref{algAlphatron},
by using the definition $h^t(\bm{x}) = u\left (\sum_{i=1}^{m_1} \alpha_i^t \mathcal{K}_d(\mathbf{x}, \mathbf{x}_i)\right)$ and noticing that the sequences $\left (\alpha_{\rm Alg\ref{algAlphatron}}^t\right)_{t\in[T]}$ and $\left (\alpha_{\rm Alg\ref{algAlphatronPre}}^t\right)_{t\in[T]}$ are exactly the same. 

For the time complexity, we have $\Ord{m^2}$ inner products to be evaluated.
For each of them we need time $\Ord{n + \log d}$ as we showed in the proof of Theorem~\ref{thmOrig}.
Hence it costs $\mathcal{O}(m^2(n + \log d))$ time to pre-compute
the results of all $\mathcal{K}_d(\bm{x}, \bm{y})$.
By Theorem~\ref{thmVanilla}, the time complexity of \textsc{Alphatron\_with\_Kernel} is $\mathcal{O}(Tm^2)$.
In total the algorithm runs in time $\mathcal{O}(m^2(n+\log d) + Tm^2)$.
\end{proof}
By the pre-computation, we evaluate each kernel function only once with the corresponding memory cost of storing the values.
Comparing with the $\Ord{Tm^2 (n+\log d)}$ time used
for Algorithm~\ref{algAlphatron}, 
Algorithm~\ref{algAlphatronPre} achieves a significant speedup.

\subsection{Classical inner-product estimation}

We can hope to obtain a further speedup by approximating the inner products instead of computing them exactly.
Next, we discuss this inner product approximation, where the approximations rely on sampling data structures, which are discussed in Appendix \ref{appClassical}. These data structures when given a vector allow to sample an index with probability proportional to the components of the vector, as described in Facts \ref{factSamplingl1} and \ref{factSamplingl2}. We call them $\ell_1$ and $\ell_2$ sampling data structures. Here, we use the $\ell_2$ case (Fact \ref{factSamplingl2}), while the second part of this work uses the $\ell_1$ case. 
Based on these data structures, elementary results can be provided to estimate inner products between two vectors. These are described in Lemmas \ref{lemmaSamplingl1} and \ref{lemmaSamplingl2} in Appendix \ref{appClassical}, of which we need Lemma \ref{lemmaSamplingl2} here. 
Our version of the Alphatron algorithm with approximate pre-computation is given in Algorithm \ref{algAlphatronApprox}.
We use the inner product estimation of Lemma \ref{lemmaSamplingl2} 
to improve the run time complexity of Algorithm~\ref{algAlphatronPre}.

\begin{algorithm}[H] 
\label{algAlphatronApprox}
\caption{\textsc{Alphatron\_with\_Approx\_Pre}}
\label{algPreAlpha}
\SetKw{KwReturn}{Return}
\SetKw{KwInput}{Input}
\SetKw{KwOutput}{Output}
\KwInput{training data $(\mathbf{x}_i, y_i)^{m_1}_{i=1}$, testing data $(\mathbf{a}_i, b_i)^{m_2}_{i=1}$, error tolerance parameter
$\epsilon_K$, failure probability $\delta_K$, 
function $u: \mathbb{R} \to [0, 1]$, number of iterations $T$, degree of the multinomial kernel $d$, learning rate $\lambda$}
\DontPrintSemicolon

\For {$i \leftarrow 1$ \KwTo $m_1$} {
Prepare sampling data structure for $\bm{x}_i$ according to Fact~\ref{factSamplingl2}.\;
\For {$j \leftarrow 1$ \KwTo $m_1$} {
$z_{ij} \gets $ Estimate the inner product $ \bm{x}_i \cdot \bm{x}_j$ to $\epsilon_K/(3d)$ additive error
with probability at least $1 - \delta_K/(m_1^2 + m_1 m_2)$  via Lemma \ref{lemmaSamplingl2}.\;
$\widetilde{K}_{i j} \gets \frac{1}{d+1} \sum_{0 \le k \le d}z_{ij}^k$\;
}
}

\For {$i \leftarrow 1$ \KwTo $m_2$} {
Prepare sampling data structure for $\bm{a}_i$ according to Fact~\ref{factSamplingl2}.\;
\For {$j \leftarrow 1$ \KwTo $m_1$} {
$z'_{ij} \gets $ Estimate the inner product $\bm{a}_i \cdot \bm{x}_j$ to $\epsilon_K/(3d)$ additive error
with probability at least $1 - \delta_K/(m_1^2 + m_1 m_2)$ via Lemma \ref{lemmaSamplingl2}.\;
$\widetilde{K}'_{i j} \gets \frac{1}{d+1} \sum_{0 \le k \le d} (z'_{ij})^k$\;
}
}
$\alpha^{t_{out}} \gets$ Call  \textsc{Alphatron\_with\_Kernel} (Algorithm~\ref{algAlphatronMain}) with all input as above and $\widetilde K_{ij}$ and $\widetilde K'_{ij}$.\;
\KwOutput{$\alpha^{t_{out}}$}
\end{algorithm}

\begin{theorem} [Run time of \textsc{Alphatron\_with\_Approx\_Pre}] \label{thmApprox}
Let $\epsilon_K, \delta_K > 0$.
Assume that for all $i \in [m_1]$ and $j \in [m_2]$, $\Vert \bm{x}_i\Vert_2 = \Vert \bm{a}_j \Vert_2 = 1$.
Lines $2-11$ of Algorithm~\ref{algAlphatronApprox} have a run time of 
\[ \tOrd{ mn + \frac{m^2 d^2 }{\epsilon_K^2} \log \frac{1}{\delta_K}
},\]
and provide the kernel matrices $\widetilde K$ and $\widetilde K'$ such that
$\max_{ij} \left \vert \widetilde K_{ij} - K_{ij} \right\vert \leq \epsilon_K$ and $\max_{ij}\left \vert \widetilde K'_{ij} - K'_{ij}\right\vert \leq \epsilon_K$ with success probability $1-\delta_K$.
Line $12$ requires an additional cost of $\Ord{T m^2}$ from the use of Algorithm \ref{algAlphatronMain}.
\end{theorem}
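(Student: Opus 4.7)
The proof splits into three pieces that I would address in order: the preparation cost of the sampling data structures, the run time of the $\Ord{m^2}$ inner-product estimations, and the error analysis that converts per-inner-product accuracy into per-kernel-entry accuracy. For the first piece, each of the $m_1 + m_2 = \Ord{m}$ input vectors sits in $\mathbb{R}^n$ and by Fact~\ref{factSamplingl2} admits an $\ell_2$ sampling data structure that can be built in $\Ord{n}$ time, giving an $\Ord{mn}$ setup cost.

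For the second piece, each call on Line~5 or Line~10 invokes Lemma~\ref{lemmaSamplingl2} on unit-norm vectors with target additive error $\epsilon_K/(3d)$ and failure probability $\delta_K/(m_1^2 + m_1 m_2)$. I would plug these parameters into the cost bound of Lemma~\ref{lemmaSamplingl2}; since $\log((m_1^2+m_1 m_2)/\delta_K)$ is $\tOrd{\log(1/\delta_K)}$, each individual estimate costs $\tOrd{d^2/\epsilon_K^2 \cdot \log(1/\delta_K)}$, and multiplying by the $\Ord{m^2}$ pairs gives the second additive term of the claimed run time. A union bound over all estimates then shows that the joint event that every $z_{ij}$ and $z'_{ij}$ is within $\epsilon_K/(3d)$ of its target inner product holds with probability at least $1-\delta_K$.

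For the error analysis, I would fix an index pair, write $r := \bm{x}_i\cdot\bm{x}_j$ and $\tilde r := z_{ij}$ with $|\tilde r - r|\le \epsilon_K/(3d)$, and view both the true and approximate kernel entries as images of the scalar map $f(r) := \tfrac{1}{d+1}\sum_{k=0}^{d} r^k$. Since $\Vert \bm{x}_i\Vert_2 = \Vert \bm{x}_j\Vert_2 = 1$, both $r$ and $\tilde r$ lie in $[-1-\eta,\,1+\eta]$ with $\eta := \epsilon_K/(3d)$, and on that interval
\be
|f'(r)| \le \frac{1}{d+1}\sum_{k=1}^{d} k\,(1+\eta)^{k-1} \le \frac{d}{2}(1+\eta)^{d-1}.
\ee
For $\epsilon_K \le 1$ the factor $(1+\eta)^{d-1}$ is bounded by a constant, so $f$ is $\Ord{d}$-Lipschitz on the relevant interval and the kernel-entry error obeys $|\widetilde K_{ij}-K_{ij}| = |f(\tilde r)-f(r)| \le \Ord{d}\cdot \epsilon_K/(3d) \le \epsilon_K$ after absorbing constants; the argument for $\widetilde K'_{ij}$ is identical, and I would lean on Appendix~\ref{appLip} for the cleanest packaging of this Lipschitz bound.

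The only real subtlety is the factor-$d$ amplification through $f$: an additive error $\eta$ in $r$ can inflate by roughly $d$ when passed through the degree-$d$ polynomial, which is precisely why the per-estimate tolerance must be set to $\Theta(\epsilon_K/d)$, and this is what produces the $d^2/\epsilon_K^2$ factor (rather than $1/\epsilon_K^2$) in the run time via Lemma~\ref{lemmaSamplingl2}. The additional $\Ord{Tm^2}$ cost of Line~12 is immediate from Theorem~\ref{thmVanilla}, since $\widetilde K$ and $\widetilde K'$ are each accessible in $\Ord{1}$ time once stored.
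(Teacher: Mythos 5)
Your proposal is correct and follows essentially the same route as the paper: cost the $\Ord{m}$ data-structure builds as $\tOrd{mn}$, cost each of the $\Ord{m^2}$ calls to Lemma~\ref{lemmaSamplingl2} with tolerance $\epsilon_K/(3d)$ and per-call failure probability $\delta_K/(m_1^2+m_1m_2)$, take a union bound, and then push the inner-product error through the $\Ord{d}$-Lipschitz polynomial $f(z)=\frac{1}{d+1}\sum_{k=0}^d z^k$ via Lemma~\ref{lemmaLipschitz}, with Line~12 handled by Theorem~\ref{thmVanilla}. The one place you are slightly more careful than the paper is in noting that the estimate $\tilde r=z_{ij}$ may fall slightly outside $[-1,1]$, so the Lipschitz bound must be taken on the enlarged interval $[-1-\eta,1+\eta]$; you correctly observe that for $\eta=\epsilon_K/(3d)$ the factor $(1+\eta)^{d-1}$ stays $\Ord{1}$, so the $\Ord{d}$ Lipschitz constant (and hence the choice of per-estimate tolerance $\Theta(\epsilon_K/d)$) survives, whereas the paper invokes Lemma~\ref{lemmaLipschitz} only with $z_0=1$ and leaves this point implicit.
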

\begin{proof}
For all vectors $\bm{x}_i$ and $\bm{a}_j$, the sampling data structure is prepared in total time $\tOrd {mn}$.
There are $\Ord{m^2}$ inner products to be estimated between these vectors. 
Hence, by Lemma~\ref{lemmaSamplingl2},
each estimation of inner product with additive accuracy $\epsilon_K/ (3d)$ and success probability $1-\delta_K/(m_1^2 + m_1m_2)$
costs $\tOrd {\frac{d^2 }{\epsilon_K^2} \log \frac{m_1^2 + m_1 m_2}{\delta_K}}$ because of $\Vert \bm{x}_i \Vert_2 = \Vert \bm{a}_j \Vert_2 = 1$.
We ignore the $\log (m_1^2 + m_1 m_2)$ factor under the tilde notation compared to $m^2$.
Again, $\Ord{\log d}$ extra time is needed to compute
each multinomial kernel function $\mathcal{K}_d$ from the inner product.
However, we also ignore the $\log d$ factor under the tilde notation.
By Lemma~\ref{lemmaLipschitz} of the Appendix,
the Lipschitz constant for $f(z) = \frac{1}{d+1} \sum_{i=0}^d z^i$
is bounded from above by $3d$, when $z \in [-1, 1]$.
Hence, we obtain 
$\max_{ij} \left\vert \tilde K_{ij} - K_{ij}\right\vert \leq (3d) \cdot \epsilon_K/(3d)$ and $\max_{ij}\left\vert \tilde K'_{ij} - K'_{ij}\right\vert \leq
(3d) \cdot \epsilon_K/(3d)$. 
The last step for calling Algorithm \ref{algAlphatronMain} costs $\Ord{Tm^2}$ again as the matrices are accessible in $\Ord{1}$.
\end{proof}

Since only $m$ sampling state structures are prepared which allow the inner products to be approximated in advance,
Algorithm~\ref{algAlphatronApprox} improves the run time complexity of the Algorithm~\ref{algAlphatronPre}.
However, as the inner products are approximated, we may lose the correctness of
Algorithm~\ref{algAlphatronPre}. 
In Ref.~\cite{pmlr-v99-goel19b}, 
a theoretical upper bound is proven
for the sample complexity
of Algorithm~\ref{algAlphatron}
in the problem setting of Definition \ref{assume:alpha}.
We now show that with approximate pre-computation,
under the same problem and parameter settings as in Ref.~\cite{pmlr-v99-goel19b},
the p-concept error of the output hypothesis does not increase too much.

\begin{theorem} [Correctness of \textsc{Alphatron\_with\_Approx\_Pre}] \label{thm:alpha}
If Definition \ref{assume:alpha} and \ref{assume:param} hold, then
by setting $\delta_K = \delta$, with probability $1 - 3\delta$, Algorithm~\ref{algAlphatronApprox} outputs $\alpha^{t_{\rm out}}$ which describes the hypothesis $h^{t_{\rm out}}(\bm{x}) := u\left(\sum_{i=1}^{m_1} \alpha_i^{t_{\rm out}} \mathcal{K}_d(\mathbf{x}, \mathbf{x}_i)\right)$ such that,
\[
\varepsilon(h) \in \Ord{A_2 + \epsilon_K^2 T^2 + \epsilon_K T },
\]
where $A_2 = L\sqrt{\epsilon} + L \zeta \sqrt[4]{\frac{\log(1/\delta)}{m_1}} + BL\sqrt{\frac{\log(1/\delta)}{m_1}}$.
\end{theorem}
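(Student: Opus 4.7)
My plan is to treat Algorithm~\ref{algAlphatronApprox} as a noisy version of Algorithm~\ref{algAlphatron} in which the only source of noise is the kernel approximation, and then to replay the potential-function proof behind Theorem~\ref{thmOrigLearning} while carefully tracking the perturbation. I would first invoke Theorem~\ref{thmApprox} with $\delta_K=\delta$ to conclude that, with probability $\ge 1-\delta$, both $\max_{ij}|\widetilde K_{ij}-K_{ij}|\le\epsilon_K$ and $\max_{ij}|\widetilde K'_{ij}-K'_{ij}|\le\epsilon_K$; the remainder of the argument is conditioned on this event.

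Let $\widetilde\alpha^t$ be the sequence produced by the algorithm and define, as in the statement, $\widetilde h^t(\bm{x}):=u\bigl(\sum_i\widetilde\alpha_i^t\mathcal K_d(\bm{x},\bm{x}_i)\bigr)$, which uses the \emph{exact} kernel. Setting $\widetilde\beta^t:=\sum_i\widetilde\alpha_i^t\psi_d(\bm{x}_i)$ so that $\widetilde h^t(\bm{x})=u(\langle\widetilde\beta^t,\psi_d(\bm{x})\rangle)$, a short rewrite of the algorithm's update gives
\begin{equation*}
\widetilde\beta^{t+1} = \widetilde\beta^t+\frac{\lambda}{m_1}\sum_{i=1}^{m_1}\bigl(y_i-u(\langle\widetilde\beta^t,\psi_d(\bm{x}_i)\rangle+\varepsilon_i^t)\bigr)\psi_d(\bm{x}_i),
\end{equation*}
with $\varepsilon_i^t:=\sum_j\widetilde\alpha_j^t(\widetilde K_{ij}-K_{ij})$. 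Because $|y_i-u(\cdot)|\le 1$ each update changes any entry of $\widetilde\alpha$ by at most $\lambda/m_1$, so $\|\widetilde\alpha^t\|_1\le\lambda t$, hence $|\varepsilon_i^t|\le\lambda t\,\epsilon_K$. Thus the algorithm performs precisely the exact Alphatron gradient step in feature space, but with an activation evaluated at an argument that is perturbed by at most $\lambda t\,\epsilon_K$.

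Next I would replay the Alphatron potential argument of~\cite{GK2017} with $\Phi^t:=\tfrac12\|\widetilde\beta^t-v\|_2^2$. The per-step decrement that appears in the exact proof picks up two extra contributions: a first-order cross term, which by $L$-Lipschitzness of $u$ and the bound $\|\widetilde\beta^t-v\|_2\le B+\lambda t$ is at most $O(\lambda^2\,t\,\epsilon_K(B+\lambda t))$, and a second-order term from the squared gradient norm, at most $O(\lambda^2 L^2 t^2\epsilon_K^2)$. Telescoping from $t=0$ to $T-1$, using $\Phi^T\ge 0$, $\Phi^0\le B^2/2$, substituting $\lambda=1/L$ and the choice of $T$ from Definition~\ref{assume:param}, and dividing by $T$ reproduces the exact $O(A_2)$ bound on $\min_t\widehat{\mathrm{err}}(\widetilde h^t)-\widehat{\mathrm{err}}(c)$ plus additive perturbation terms of orders $\tfrac1T\sum_{t<T}t\,\epsilon_K=O(\epsilon_K T)$ and $\tfrac1T\sum_{t<T}t^2\epsilon_K^2=O(\epsilon_K^2T^2)$. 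The validation step selects $t_{\rm out}$ by minimising $\tfrac1{m_2}\sum_i(u(\sum_j\widetilde\alpha_j^t\widetilde K'_{ij})-b_i)^2$ rather than the true empirical validation error; by exactly the same Lipschitz/$\|\widetilde\alpha^t\|_1$ argument the two objectives differ by at most $O(\epsilon_K T)$ uniformly in $t$, so the chosen $t_{\rm out}$ is within the same slack of the true validation minimiser. A standard Chernoff bound on the validation set and the Rademacher-complexity step of Appendix~\ref{appRade} then convert empirical error into the generalisation error $\varepsilon(\cdot)$. The three failure events (kernel approximation, training-phase uniform convergence, validation-phase concentration) combine by a union bound to give probability $1-3\delta$, matching the statement.

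The main technical obstacle is preventing the noise from compounding exponentially over $T$ iterations. A naive induction on $\|\widetilde\alpha^t-\alpha^t\|_1$ against an exact Alphatron sequence yields a factor-two blow-up per step because the Alphatron iteration map is not contractive. The key move, taken above, is to work in feature space and interpret the algorithm as Alphatron with a perturbation of size only $\lambda t\,\epsilon_K$ \emph{inside the argument of $u$}; this perturbation enters the potential decrement additively, and its accumulation $\sum_{t<T}\lambda t\,\epsilon_K=O(\lambda T^2\epsilon_K)$ together with its square produce, after averaging over $T$ iterations, precisely the $\epsilon_K T+\epsilon_K^2T^2$ corrections in the statement.
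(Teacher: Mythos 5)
Your proposal captures the paper's key insight and follows its overall structure: condition on the kernel estimation succeeding (probability $\ge 1-\delta$ by a union bound over all $\Ord{m^2}$ entries), interpret the noisy update as an exact Alphatron step whose activation argument is perturbed by $\sum_j\widetilde\alpha_j^t(\widetilde K_{ij}-K_{ij})$, and bound that perturbation by $\|\widetilde\alpha^t\|_1\,\epsilon_K\le \lambda t\,\epsilon_K$ — this is exactly what the paper packages into Lemma~\ref{lem:alphabound}, Lemma~\ref{lem:hypomatrixerror} (giving $\epsilon_I=\epsilon_K\|\widetilde\alpha^t\|_1$), and the modified convergence Lemma~\ref{lemConvergenceModified}. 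Your validation-phase slack of $\Ord{\epsilon_K T}$ matches Lemma~\ref{lemma:ValidationError}, and the Rademacher/Chernoff/union-bound finish is identical.

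Where you diverge is the training-phase argument: the paper does \emph{not} telescope over all $T$ steps and average. Instead it performs the case analysis of Eqs.~(\ref{eqConvergenceCase1})--(\ref{eqConvergenceCase2}), finding the first iteration $t^*$ at which the per-step potential drop falls below $B\eta/L$, and applying Lemma~\ref{lemConvergenceModified} once at $t=t^*$. This distinction matters because the per-step decrement lemma (Lemma~\ref{lem:alphatron}, inherited from \cite{GK2017}) carries the hypothesis $\Vert \bm v(\omega)-\Hyp{v}\Vert_2\le B$. The paper guarantees this at $t^*$ precisely because Case~1 held on $[0,t^*)$, so the potential decreased monotonically and $\Vert \bm v(\widetilde\alpha^{t^*})\Vert_2\le 2B$. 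Your blind telescoping, by contrast, applies the per-step bound at every $t$ while invoking only the crude bound $\|\widetilde\beta^t-\bm v\|_2\le B+\lambda t$, which violates the lemma's hypothesis once $t>0$ and, if used in the cross term, injects a spurious $\lambda t$ factor that would inflate the final error to $\Ord{\epsilon_K T^2/L}$ rather than $\Ord{\epsilon_K T}$. To make the telescoping version rigorous you would need to additionally argue (as the paper does via Case~1) that the potential stays below $B^2$ up to $t^*$, at which point you are back to the paper's case-split. On the positive side, you are more careful than the paper in one respect: you explicitly track the first-order cross term $2\epsilon_I\|\bm v(\omega')-\Hyp{v}\|_2$ that arises when expanding $\|\bm v(\widetilde\omega)-\Hyp{v}\|_2^2$, whereas the proof of Lemma~\ref{lemConvergenceModified} drops it in the step labelled ``triangle inequality'' (Eq.~(\ref{eq:lemConvMod})), which as written gives only $+\epsilon_I^2$ rather than $+\epsilon_I^2+2B\epsilon_I$. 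That omitted term is absorbed into the $\Ord{\epsilon_K T}$ in the final statement, so the conclusion survives, but your version handles it more transparently.
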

We prove this theorem in Appendix \ref{proof.alphatron_approx}.

\subsection{Quantum Pre-computation}

In the previous subsection, we classically estimate the inner products that construct the kernel matrices. Now, given quantum access to the training data, we replace this estimation with a quantum subroutine and obtain a quantum speedup. 
This section presents the quantum algorithm for pre-computing the kernel matrices used in the Alphatron. 
We assume quantum access to the training data, which includes classical access and also superposition queries to the data.

\begin{defn} [Quantum query access]\label{defQA}
Let $c$ and $n$ be two positive integers
and $\bm u$ be
a vector of bit strings $\mathbf{u} \in (\{0,1\}^{c})^n$. 
Define element-wise quantum access to $\bm u$ for $j\in[n]$ by the operation 
\be
\ket j \ket{0^{c}} \to \ket j \ket{ u_j}, 
\ee
on $\Ord{c+ \log n}$ qubits.
We denote this access by $\bm{QA}(\bm u, n,c)$.
\end{defn}

\begin{oracle}\label{inputQuantumAccess1}
For $k \in [n]$, $i\in [m_1]$, and $j \in [m_2]$,
let $\mathbf{x}_{i}$ and $\mathbf{a}_j$ be the input vectors with $\Vert \bm{x}_i\Vert_2 = \Vert \bm{a}_j\Vert_2 = 1$,
and let $\mathbf{x}_{ik}$ and $\mathbf{a}_{jk}$ be the entries of the vectors.
Assume $c=\Ord{1}$ bits are sufficient to store $\mathbf{x}_{ik}$ and $\mathbf{a}_{jk}$. Assume 
that we are given 
$\bm{QA}(\bm x_i, n,c)$ for each $i \in[m_1]$ and 
$\bm{QA}(\bm a_j, n,c)$ for each $j \in[m_2]$.
\end{oracle}
Our first quantum algorithm is constructed in a straightforward manner. 
We replace the classical approximation of the kernel matrix inner products with a quantum estimation. 
For the quantum estimation of inner products refer to Lemma \ref{lemmaInnerProduct} in Appendix \ref{appQuantum}, which requires quantum query access similar to Data Input \ref{inputQuantumAccess1}. The run time of Lemma \ref{lemmaInnerProduct} depends on the $\ell_2$-norms of the input vectors, which here are $1$.
The result is Algorithm \ref{algAlphatronQuantumPre}.
The run time analysis and the 
guarantees for the output hypothesis are
similar to the classical algorithm. We state them below as a corollary.

\begin{algorithm}[H] 
\label{algAlphatronQuantumPre}
\caption{\textsc{Alphatron\_with\_Q\_Pre}}
\label{algPreAlphaQ}
\SetKw{KwReturn}{Return}
\SetKw{KwInput}{Input}
\SetKw{KwOutput}{Output}
\KwInput{Quantum access to training data $(\mathbf{x}_i, y_i)^m_{i=1}$ and testing data $(\mathbf{a}_i, b_i)^N_{i=1}$ according to Data Input \ref{inputQuantumAccess1}, 
error tolerance parameter
$\epsilon_K$, failure probability $\delta_K$, 
function $u: \mathbb{R} \to [0, 1]$, number of iterations $T$, degree of the multinomial kernel $d$, learning rate $\lambda$}
\DontPrintSemicolon
\;
\For {$i \leftarrow 1$ \KwTo $m_1$} {
\For {$j \leftarrow 1$ \KwTo $m_1$} {
$z_{ij} \gets $ Estimate the inner product $\langle \bm{x}_i, \bm{x}_j \rangle$ to $\epsilon_K/(3d)$ additive error
with probability at least $1 - \delta_K/(m_1^2 + m_1m_2)$  via Lemma \ref{lemmaInnerProduct}.\;
$\widetilde{K}_{ij} \gets \frac{1}{d+1} \sum_{0 \le k \le d}z_{ij}^k$\;
}
}

\For {$i \leftarrow 1$ \KwTo $m_2$} {
\For {$j \leftarrow 1$ \KwTo $m_1$} {
$z'_{ij} \gets $ Estimate the inner product $\langle \bm{a}_i, \bm{x}_j \rangle$ to $\epsilon_K/(3d)$ additive error
with probability at least $1 - \delta_K/(m_1^2 + m_1m_2)$ via Lemma \ref{lemmaInnerProduct}.\;
$\widetilde{K}'_{ij} \gets \frac{1}{d+1} \sum_{0 \le k \le d} (z'_{ij})^k$\;
}
}
$\alpha^{t_{out}} \gets$ Call  \textsc{Alphatron\_with\_Kernel} (Algorithm~\ref{algAlphatronMain}) with all inputs as above and $\widetilde{K}_{ij}$ and $\widetilde{K}'_{ij}$.\;
\KwOutput{$\alpha^{t_{out}}$}
\end{algorithm}

\begin{corollary}[Runtime of \textsc{Alphatron\_with\_Q\_Pre}] \label{cor:generalAlphaQ}
Let $\epsilon_K, \delta_K > 0$.
Assume that for all $i \in [m_1]$ and $j \in [m_2]$, we have quantum query access to the vectors $\bm{x}_i$ and $\bm{a}_j$ via Data Input~\ref{inputQuantumAccess1}.
Lines $2-9$ of Algorithm~\ref{algAlphatronQuantumPre} have a run time of 
\[ \tOrd{ \frac{m^2 d \sqrt{n} }{\epsilon_K} \log \frac{1}{\delta_K} 
}\]
and provide the kernel matrices $\tilde K$ and $\tilde K'$ such that
$\max_{ij} \vert \tilde K_{ij} - K_{ij}\vert \leq \epsilon_K$ and $\max_{ij}\vert \tilde K'_{ij} - K'_{ij} \vert \leq \epsilon_K$ with success probability $1-\delta_K$.
\end{corollary}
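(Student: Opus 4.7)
The proof follows the same template as Theorem~\ref{thmApprox}, with the only substantive change being that the classical $\ell_2$-sampling inner product estimator (Lemma~\ref{lemmaSamplingl2}) is replaced by its quantum analogue (Lemma~\ref{lemmaInnerProduct} in Appendix~\ref{appQuantum}). The plan is to verify the run time and the accuracy of the kernel matrices independently, then combine them with a union bound over the failure events.

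For the run time, I would argue as follows. Lines $2$--$9$ perform $m_1^2 + m_1 m_2 \in \Ord{m^2}$ calls to the quantum inner product subroutine. Because Data Input~\ref{inputQuantumAccess1} ensures $\Vert \bm{x}_i\Vert_2 = \Vert \bm{a}_j\Vert_2 = 1$, Lemma~\ref{lemmaInnerProduct} estimates a single inner product to additive accuracy $\epsilon_K/(3d)$ with success probability $1 - \delta_K/(m_1^2 + m_1 m_2)$ in time $\tOrd{\sqrt{n}\,d/\epsilon_K \cdot \log((m_1^2+m_1m_2)/\delta_K)}$. The $\log m$ factor is absorbed under the tilde, giving per-inner-product cost $\tOrd{\sqrt{n}\,d/\epsilon_K \cdot \log(1/\delta_K)}$, and multiplying by $\Ord{m^2}$ yields the claimed total. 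The additional $\Ord{\log d}$ work to build $\widetilde{K}_{ij} = \frac{1}{d+1}\sum_{k=0}^d z_{ij}^k$ via the geometric series identity (as in Theorem~\ref{thmOrig}) is again subsumed by the tilde. Note that, unlike in the classical Theorem~\ref{thmApprox}, no separate $\tOrd{mn}$ pre-processing term appears, since quantum query access is provided directly by Data Input~\ref{inputQuantumAccess1}.

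For the accuracy guarantee, I would reuse the Lipschitz argument of Theorem~\ref{thmApprox} verbatim. By Lemma~\ref{lemmaLipschitz}, the polynomial map $z \mapsto \frac{1}{d+1}\sum_{k=0}^d z^k$ has Lipschitz constant at most $3d$ on $[-1,1]$. Hence, whenever the quantum estimation of an inner product succeeds, the corresponding $z_{ij}$ (respectively $z'_{ij}$) lies within $\epsilon_K/(3d)$ of the true value, and propagating through the Lipschitz map yields $\lvert \widetilde{K}_{ij} - K_{ij}\rvert \le (3d) \cdot \epsilon_K/(3d) = \epsilon_K$, and similarly for $\widetilde{K}'_{ij}$. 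A union bound over the $m_1^2 + m_1 m_2$ estimations makes the joint success probability at least $1-\delta_K$.

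I do not expect any serious obstacle here, as the corollary is a ``drop-in replacement'' of the classical sampling subroutine by the quantum one. The only pieces that need care are: (i) confirming that the $\ell_2$-norm dependence of Lemma~\ref{lemmaInnerProduct} collapses to a constant under the unit-norm assumption of Data Input~\ref{inputQuantumAccess1}, and (ii) choosing the per-call failure probability as $\delta_K/(m_1^2+m_1m_2)$ so that the union bound yields $\delta_K$ overall while contributing only a $\log m$ factor hidden in the tilde. Both are bookkeeping steps rather than new ideas.
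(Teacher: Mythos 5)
Your proposal is correct and follows essentially the same approach as the paper: replace the classical $\ell_2$-sampling estimator by Lemma~\ref{lemmaInnerProduct}, use the unit-norm assumption to collapse the $\Vert\cdot\Vert_2$ dependence, apply the Lipschitz bound from Lemma~\ref{lemmaLipschitz} for the $\epsilon_K$ accuracy, and take a union bound over the $\Ord{m^2}$ estimations. The paper's own proof is in fact terser than yours, simply citing Lemma~\ref{lemmaInnerProduct} with the unit-ball remark and the union bound, so your additional detail (the Lipschitz propagation and the observation that the $\tOrd{mn}$ preprocessing term disappears under quantum query access) is a faithful expansion of the same argument rather than a different route.
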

\begin{proof}
For $\epsilon_K \in (0,1)$, the run time of each invocation of Lemma \ref{lemmaInnerProduct} is $\tOrd{\frac{d \sqrt n}{\epsilon_K}   \log \left (\frac{m}{\delta_K} \right )  }$, using that the input vectors are in the unit ball. All probabilistic steps in Lines $2-9$ of the algorithm succeed with probability $1-\delta_K$ using a union bound. 
\end{proof}
\begin{corollary}[Guarantee for \textsc{Alphatron\_with\_Q\_Pre}] \label{corQuantumPrecompute}
Let $\delta>0$. 
Assume that for all $i \in [m_1]$ and $j \in [m_2]$, we have quantum query access to the vectors $\bm{x}_i$ and $\bm{a}_j$ via Data Input~\ref{inputQuantumAccess1}.
Let Definitions \ref{assume:alpha} and \ref{assume:param} hold. 
If $A_2 \le 1$, and we set $\epsilon_K = \frac{L\sqrt \epsilon}{T}$ and $\delta_K=\delta$, then 
Algorithm~\ref{algAlphatronQuantumPre} with probability at least $1-3\delta$
outputs $\alpha^{t_{\rm out}}$ which describes the hypothesis $h^{t_{\rm out}}(\bm{x}) := u\left (\sum_{i=1}^{m_1} \alpha_i^{t_{\rm out}} \mathcal{K}_d(\mathbf{x}, \mathbf{x}_i)\right)$ 
such that
\[
\varepsilon\left(h^{t_{\rm out}}\right) \in \Ord{A_2},
\]
with a run time of 
\[ \tOrd{ \frac{ m^2 T d  \sqrt{n} }{L\sqrt \epsilon} \log \frac{1}{\delta} + Tm^2 }.\]
\end{corollary}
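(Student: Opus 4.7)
The plan is to observe that Theorem~\ref{thm:alpha} and the subsequent Corollary~\ref{corApproxLearning} are essentially oracle results: their correctness analysis only uses the fact that the kernel matrices $\widetilde K$ and $\widetilde K'$ passed to \textsc{Alphatron\_with\_Kernel} satisfy $\max_{ij} \vert \widetilde K_{ij} - K_{ij}\vert \leq \epsilon_K$ and $\max_{ij}\vert \widetilde K'_{ij} - K'_{ij}\vert \leq \epsilon_K$ with probability $1-\delta_K$. Nothing in that analysis depends on whether those estimates were produced classically (via Lemma~\ref{lemmaSamplingl2}) or quantumly (via Lemma~\ref{lemmaInnerProduct}). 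Thus the first step is simply to invoke Corollary~\ref{cor:generalAlphaQ}, which certifies that Lines~2--9 of Algorithm~\ref{algAlphatronQuantumPre} output matrices with exactly the same entry-wise guarantee required by Theorem~\ref{thm:alpha}.

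Next, I would apply Theorem~\ref{thm:alpha} verbatim with $\delta_K = \delta$ to conclude that, with probability at least $1 - 3\delta$, the output hypothesis satisfies $\varepsilon(h^{t_{\rm out}}) \in \Ord{A_2 + \epsilon_K^2 T^2 + \epsilon_K T}$. Then I would follow the same simplification as in Corollary~\ref{corApproxLearning}: under the hypothesis $A_2 \le 1$, setting $\epsilon_K = L\sqrt{\epsilon}/T$ makes $\epsilon_K T \le L\sqrt{\epsilon} \le A_2$, so $\Ord{\epsilon_K^2 T^2} \subseteq \Ord{\epsilon_K T} \subseteq \Ord{A_2}$, and the whole error bound collapses to $\Ord{A_2}$.

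For the run time, I would add the two contributions. Substituting $\epsilon_K = L\sqrt{\epsilon}/T$ and $\delta_K = \delta$ into the bound of Corollary~\ref{cor:generalAlphaQ} gives a pre-computation cost of
\[
\tOrd{\frac{m^2 d \sqrt{n}}{\epsilon_K} \log \frac{1}{\delta}} = \tOrd{\frac{m^2 T d \sqrt{n}}{L\sqrt{\epsilon}} \log \frac{1}{\delta}}.
\]
The final call to \textsc{Alphatron\_with\_Kernel} contributes an additional $\Ord{Tm^2}$ by Theorem~\ref{thmVanilla}, yielding the claimed total.

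I do not expect a serious obstacle here; this is essentially a bookkeeping corollary that composes Corollary~\ref{cor:generalAlphaQ} (quantum runtime with accuracy guarantees on $\widetilde K, \widetilde K'$) with Theorem~\ref{thm:alpha} (correctness as a function of those guarantees) and Corollary~\ref{corApproxLearning} (the specific choice of $\epsilon_K$ that cancels the extra error terms). The only minor care needed is in the union bound over the three probabilistic ingredients, which already appears inside the proof of Theorem~\ref{thm:alpha}: the quantum estimation failure probability, the Rademacher generalization bound, and the Chernoff bound on the validation step, each contributing a $\delta$ and summing to the stated $3\delta$.
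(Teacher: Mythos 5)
Your proposal is correct and follows exactly the route the paper takes: the paper's own one-line proof ("analogous to Theorem~\ref{thm:alpha}, using Corollary~\ref{cor:generalAlphaQ} for the run time") is precisely the composition you spell out, namely Corollary~\ref{cor:generalAlphaQ} supplies the entry-wise guarantee $\max_{ij}\vert\widetilde K_{ij}-K_{ij}\vert\leq\epsilon_K$ at quantum cost, Theorem~\ref{thm:alpha} converts that guarantee into $\varepsilon(h^{t_{\rm out}})\in\Ord{A_2+\epsilon_K^2T^2+\epsilon_KT}$ with probability $1-3\delta$, and the choice $\epsilon_K=L\sqrt{\epsilon}/T$ with $A_2\leq 1$ collapses the bound to $\Ord{A_2}$ exactly as in Corollary~\ref{corApproxLearning}. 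Your substitution into the run-time bound is also correct.
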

\begin{proof}
The proof is analogous to the proof of Theorem \ref{thm:alpha}, where we use Corollary \ref{cor:generalAlphaQ} for the run time of the inner product estimation. 
\end{proof}
\section{Quantum Alphatron} \label{sec:five}

Up to this point, we have been discussing improvements in the pre-computation step of the Alphatron. 
We always use the same \textsc{Alphatron\_with\_Kernel} algorithm once we prepare the kernel matrices $K$ and $K'$.
If data dimension $n$ is much larger than the other parameters, the quantum pre-computation
costs asymptotically more time than \textsc{Alphatron\_with\_Kernel}. Hence, we do not benefit
much from optimizing \textsc{Alphatron\_with\_Kernel} if the cost of preparing the data of size $n$ is taken into account.

However, if we assume that the pre-computation was already done for us, it makes sense to 
discuss quantum speedups for \textsc{Alphatron\_with\_Kernel}, which is what the remainder of this work is about. 
In other words, we consider the following scenario.
\begin{oracle} \label{inputQueryKernel}
Let there be given two training data sets $(\mathbf{x}_i, y_i)^{m_1}_{i=1} \in \mathbb{B}^{n} \times [0,1]$ and $(\mathbf{a}_i, b_i)^{m_2}_{i=1} \in \mathbb{B}^{n} \times [0,1]$, which define the kernel matrices
$K_{ij} := \mathcal{K}_d(\mathbf{x}_i, \mathbf{x}_j)$ 
and
$K'_{ij} := \mathcal{K}_d(\mathbf{a}_i, \mathbf{x}_j)$. Let each entry $K_{ji}$ and $K'_{ji}$ be specified by $\Ord{1}$ bits.
We assume that
we have query access to each entry in $\Ord{1}$.
\end{oracle}
The bottleneck
of the computation in the \textsc{Alphatron\_with\_Kernel} is the cost of about $\Ord{Tm}$ for the inner product evaluations. By the sampling
techniques and quantum estimation, we may speed them up.

\subsection{Main loop with approximated inner products}

We employ the classical sampling of inner products in the \textsc{Alphatron\_with\_Kernel} algorithm. The result is Algorithm~\ref{algAlphatronSamp}.
For the kernel matrices $K_{ji}$ and $K'_{ji}$, define $K_{\max}$ as an upper bound for $|K_{ji}|$ and $|K'_{ji}|$.

\begin{algorithm}[H] 
\label{algAlphatronSamp}
\caption{\textsc{Alphatron\_with\_Kernel\_and\_Sampling}}
\SetKw{KwReturn}{Return}
\SetKw{KwInput}{Input}
\SetKw{KwOutput}{Output}
\KwInput{
training data $(\mathbf{x}_i, y_i)^{m_1}_{i=1}$, testing data $(\mathbf{a}_i, b_i)^{m_2}_{i=1}$, error parameter
$\epsilon_I$ and failure probability $\delta$, 
function $u: \mathbb{R} \to [0, 1]$, number of iterations $T$, degree of the multinomial kernel $d$, learning rate $\lambda$, query access to $K_{ji}$ and $K'_{ji}$, the upper bound $K_{\max}$
for both $|K_{ij}|$ and $|K'_{ij}|$}
\DontPrintSemicolon

$\alpha^0 \leftarrow \mathbf{0} \in \mathbb{R}^{m_1}$\;
\For {$t \leftarrow 0$ \KwTo $T-1$} {
Prepare sampling data structure for $\alpha^t$ via Fact~\ref{factSamplingl1}\;
\For {$j \leftarrow 1$ \KwTo $m_1$} {
Define $K_j$ as the vector $(K_{j1}, K_{j2}, \cdots, K_{jm_1})$ \;
$r^t_j \gets$ Estimate inner product $\alpha^t \cdot K_{j}$ to additive accuracy $\epsilon_I$
with success probability $1-\delta/(2Tm_1)$ via Lemma \ref{lemmaSamplingl1}\;
$\alpha_j^{t+1} \leftarrow \alpha_j^t + \frac{\lambda}{m_1} (y_j - u(r^t_j) )$\;
}
\For {$j \leftarrow 1$ \KwTo $m_2$} {
Define $K'_j$ as the vector $(K'_{j1}, K'_{j2}, \cdots, K'_{j m_1})$\;
$s^t_j \gets$ Estimate inner product $\alpha^t \cdot K'_{j}$ to additive accuracy $\epsilon_I$ with success probability $1-\delta/(2Tm_2)$ via Lemma \ref{lemmaSamplingl1}\;
}
}
$t_{\rm out} \gets \arg \min_{t \in [T]} \frac{1}{m_2} \sum_{j=1}^{m_2} (u(s^t_j) - b_j)^2$\;
\KwOutput{$\alpha^{t_{\rm out}}$} \label{AMSvalid}\;
\end{algorithm}

\begin{theorem} \label{thm:alphaApproxLoop}
We assume query access Data Input \ref{inputQueryKernel} to the kernel matrices $K$ and $K'$ with known $K_{\max}$.
Let $\epsilon_I,\delta \in (0,1)$.
If the Definitions \ref{assume:alpha} and \ref{assume:param} hold, the Algorithm \ref{algAlphatronSamp} outputs $\alpha^{t_{\rm out}}$ which describes the hypothesis $h^{t_{\rm out}}(\bm{x}) := u\left (\sum_{i=1}^{m_1} \alpha_i^{t_{\rm out}} \mathcal{K}_d(\mathbf{x}, \mathbf{x}_i)\right)$ such that with probability $1 - 3\delta$,
\[
\varepsilon(h^{t_{\rm out}}) \in \Ord{A_2 + L \epsilon_I + L^2 \epsilon_I^2 },
\]
where $A_2$ is defined in Definition~\ref{assume:param}. 
The run time of this algorithm is $\tOrd{Tm + T^3 m \frac{ K_{\max}^2 }{ L^2 \epsilon_I^2} \log \left( \frac{1}{\delta}\right) }$.
Moreover, if $A_2 \le 1$, and we set $\epsilon_I = \sqrt \epsilon$, then we obtain the guarantee
\[
\varepsilon(h^{t_{\rm out}}) \in \Ord{A_2},
\]
and have a run time of $\tOrd{Tm + T^3 m \frac{K_{\max}^2 }{L^2 \epsilon} \log \left( \frac{1}{\delta}\right) }$.
\end{theorem}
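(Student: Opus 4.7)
The plan is to follow closely the structure of the proof of Theorem~\ref{thm:alpha}, but to apply the per-iteration convergence Lemma~\ref{lemConvergenceModified} at \emph{every} step of the main loop rather than only once, since here the approximation happens inside the loop. Let $\widetilde\alpha^t$ denote the sequence actually produced by Algorithm~\ref{algAlphatronSamp}. Because each estimated inner product $r^t_j$ is within $\epsilon_I$ of $\widetilde\alpha^t\cdot K_j$ and $u$ is $L$-Lipschitz, the hypothesis of Lemma~\ref{lemConvergenceModified} is met at every iteration with $|g(\widetilde\alpha^t,K,j) - u(r^t_j)| \le L\epsilon_I$. A union bound over the $\Ord{Tm}$ sampling calls in the training and validation phases makes all of them succeed simultaneously with probability $1-\delta$.

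Conditioned on these success events, I would reuse the Case~1/Case~2 dichotomy verbatim from the proof of Theorem~\ref{thm:alpha}. The telescoping argument based on $\Vert \bm v(\widetilde\alpha^0)-\Hyp{v}\Vert_2^2 \le B^2$ and the Case~1 drop of $B\eta/L$ per iteration shows that an index $t^* \in [T]$ with Case~2 must exist, and applying Lemma~\ref{lemConvergenceModified} there yields $\hat\varepsilon(h^{t^*}) \le BL\eta + L^2 A_1 + L^2\epsilon_I^2$, which combined with the standard bounds on $\eta$ and $\rho$ from Ref.~\cite{GK2017} gives $\hat\varepsilon(h^{t^*}) \in \Ord{A_2 + L^2\epsilon_I^2}$. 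Arguing as before that $\Vert \bm v(\widetilde\alpha^{t^*})\Vert_2 \le 2B$ (using that Case~1 forces the distance to $\Hyp{v}$ to be monotonically decreasing up to $t^*$), the Rademacher generalization bound (Theorem~\ref{thm:Rademacher}) transfers the empirical bound to $\varepsilon(h^{t^*}) \in \Ord{A_2 + L^2\epsilon_I^2}$.

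For the validation phase, Lemma~\ref{lemma:ValidationError} applied with the same $\epsilon_I$ yields $|\hat\err(h^{t_{\rm out}})-\hat\err(h^{t'})| \in \Ord{L\epsilon_I}$, and a Chernoff bound on the $m_2 = \Ord{m_1\log(T/\delta)}$ validation samples gives $|\err(h^t)-\hat\err(h^t)| \in \Ord{A_2}$ uniformly in $t$; combining these via Eq.~(\ref{eq:constErr}) produces $\varepsilon(h^{t_{\rm out}}) \in \Ord{A_2 + L\epsilon_I + L^2\epsilon_I^2}$, as claimed. The specialization $\epsilon_I = \sqrt\epsilon$ under $A_2 \le 1$ makes $L\epsilon_I = L\sqrt\epsilon \le A_2$ and $L^2\epsilon_I^2 \le L\sqrt\epsilon \le A_2$, so both perturbation terms collapse into $\Ord{A_2}$.

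For the runtime, Lemma~\ref{lem:alphabound} still bounds $\Vert\widetilde\alpha^t\Vert_1 \le T/L$ throughout, because its inductive argument uses only that the per-step update is $\le \lambda/m_1$ in magnitude, which remains true once $u(r^t_j)\in[0,1]$. Each call to the $\ell_1$ inner-product estimator (Lemma~\ref{lemmaSamplingl1}) then costs $\tOrd{\Vert\widetilde\alpha^t\Vert_1^2 K_{\max}^2/\epsilon_I^2\cdot\log(1/\delta)} = \tOrd{T^2 K_{\max}^2/(L^2\epsilon_I^2)\log(1/\delta)}$; summed over the $\Ord{m}$ inner products per iteration and the $T$ iterations, and added to the $\Ord{Tm}$ cost of (re)building the $\ell_1$ sampler for $\widetilde\alpha^t$ and of computing the final $\arg\min$, this gives the stated $\tOrd{Tm + T^3 m K_{\max}^2/(L^2\epsilon_I^2)\log(1/\delta)}$. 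The main delicate point, and the step I would scrutinize most carefully, is showing that the per-step sampling error truly enters the final guarantee only as the additive $L^2\epsilon_I^2 + L\epsilon_I$ rather than compounding across iterations; Lemma~\ref{lemConvergenceModified} is precisely what localizes the analysis to a single iteration $t^*$ via the Case~1/Case~2 telescoping, so that no factor of $T$ attaches to the error.
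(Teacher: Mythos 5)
Your proof is correct and follows essentially the same structure as the paper's: identify the first iteration $t^*$ satisfying Case~2 via the telescoping argument, apply Lemma~\ref{lemConvergenceModified} there with $\Gamma_j = u(r^{t^*}_j)$, transfer the empirical bound to generalization error via the Rademacher argument, and close out the validation phase with Lemma~\ref{lemma:ValidationError} together with a Chernoff bound, all yielding $\varepsilon(h^{t_{\rm out}}) \in \Ord{A_2 + L\epsilon_I + L^2\epsilon_I^2}$ with probability $1-3\delta$. Your explicit observation that Lemma~\ref{lem:alphabound} applies because its induction only needs the per-step update term $y_j - u(r^t_j)$ to lie in $[-1,1]$ (not that it equals the exact $g$) is a point the paper leaves implicit, and it is indeed the right justification for carrying over the $\Vert\widetilde\alpha^t\Vert_1 \le T/L$ bound to the approximate update sequence.
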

\begin{proof}
By the definition
of $r^t_j$, we have $| r^t_j - \sum_{i = 1}^{m_1} \alpha^t_i K_{ji} | \le \epsilon_I$,
and by the definition of $s^t_j$, we have $| s^t_j - \sum_{i = 1}^{m_1} \alpha^t_i K'_{ji} | \le \epsilon_I$.
By Definition~\ref{def:hypofunc} and by the Lipschitz condition of $u$,
we obtain that $\left\vert u(r^t_j) -  g(\alpha^t, K, j)\right\vert\le L \epsilon_I$,
and $\left\vert u(s^t_j) -  g(\alpha^t, K', j) \right\vert \le L \epsilon_I$.

Consider the cases in Eqns.~(\ref{eqConvergenceCase1}) and (\ref{eqConvergenceCase2}) in the proof of Theorem~\ref{thm:alpha} for the sequence of $\tilde \alpha^t$ generated by Algorithm \ref{algAlphatronSamp}. Similarily, there exists $t^*$ such that Case 2 holds.
Then by Lemma~\ref{lemConvergenceModified} with $\omega = \alpha^{t^*}$,
we obtain
\be
\Vert \textbf{v}(\tilde \alpha\super{t^*}) - \Hyp{v}\Vert_2^2 - \Vert\textbf{v}(\tilde\alpha\super{t^*+1}) - \Hyp{v}\Vert_2^2
\geq \frac{1}{L^2}\hat{\varepsilon}(h^{t^\ast})- A_{1} - \epsilon_I^2.
\ee
Hence it now holds that
\be
\frac{B \eta}{L} \geq \frac{1}{L^2}\hat{\varepsilon}(h^{t^\ast})- A_{1} - \epsilon_I^2,
\ee
which implies that 
\be
\hat \varepsilon(h^{t^\ast}) \leq B L \eta + L^2 A_{1} + L^2 \epsilon_I^2.
\ee
Using the known bound for $\eta$ we have 
\be
\hat{\varepsilon}(h^{t^*}) \in \Ord{A_2+
L^2 \epsilon_I^2 }.
\ee
Again, by the Rademacher analysis in proof of Theorem~\ref{thm:alpha},
we obtain
\be
\varepsilon(h^{t^*}) \leq \hat{\varepsilon}(h^{t^*}) + \Ord{BL \sqrt{\frac{1}{m_1}}
+  \sqrt{\frac{\log (1/\delta)}{m_1}}} \in \Ord{A_2 + L^2 \epsilon_I^2}.\label{eq:cd1}
\ee
We define $t' := \arg\min_{t \in [T]} \varepsilon(h^t)$.
Then $\varepsilon(h^{t'}) \le \varepsilon(h^{t^*})$.
By Lemma~\ref{lemma:ValidationError} with $\Gamma_j^t = u(s_j^t)$, at Line~\ref{AMSvalid} in
Algorithm~\ref{algAlphatronSamp}, we obtain $h^{\tilde t}$ such that
\be
| \hat{\err}(h^{\tilde t}) - \hat{\err}(h^{t'}) | \le L \epsilon_I. \label{eq:cd2}
\ee
As in the proof of the Theorem~\ref{thm:alpha}, by Chernoff bound,
setting $m_2 \in \Ord{m_1 \log (T / \delta) }$, with probability $1 - \delta$,
we have
\be
\forall t \in [T], | \err(h^t) - \hat{\err}(h^t) | \in \Ord{A_2}.\label{eq:cd3}
\ee
Using the same idea as
in the last part of the proof of the Theorem~\ref{thm:alpha},
we relate above inequalities (\ref{eq:cd1}), (\ref{eq:cd2}), and (\ref{eq:cd3}), and obtain that for the output hypothesis $h^{\tilde t}$,
\be
\varepsilon\left(h^{\tilde t}\right) \in \Ord{A_2 + L \epsilon_I + L^2 \epsilon_I^2 }.
\ee
For the run time complexity, the total time of preparing the sampling data structure for $\alpha^t$
is $\tOrd{Tm}$ because we prepare $\Ord{T}$ such structures and preparing each of them costs $\tOrd{m}$.
By Lemma~\ref{lem:alphabound}, we have the upper bound $
\max_t \Vert \alpha^t \Vert_1 \le  \frac{T}{L}$.
Hence, the run time of each estimation $r^t_j$ and $s^t_j$ via Lemma~\ref{lemmaSamplingl1} is bounded by
$\tOrd{\frac{T^2 K_{\max}^2 }{ L^2\epsilon_I^2}\log {1 /\delta}}$.
Then the total run time is 
\be
\tOrd{Tm + T^3 m \frac{ K_{\max}^2 }{L^2 \epsilon_I^2} \log \left( \frac{1}{\delta}\right) }.
\ee
Setting $\epsilon_I= \sqrt \epsilon$ obtains $\epsilon(h) \in \Ord{A_2}$ with the run stated in the theorem. 
\end{proof}
Now, replace the classical sampling of the inner product with the quantum estimation of the inner product.
With the Lemma \ref{lemmaInnerUnitary} in Appendix \ref{appQuantum}, 
we can remove the explicit dimension dependence of the inner product estimation, at the expense of using a QRAM, see Definition \ref{defn:QRAM} in the next section. 

\subsection{Quantum speedup for the main loop}

For the quantum algorithm, we assume the quantum query access to the kernel matrices $K$ and $K$.
Note the definition of quantum query access in Definition \ref{defQA} in Appendix \ref{appQuantum}.
\begin{oracle}\label{input:quantumkernel}
Assume Data Input \ref{inputQueryKernel} for the training data and the kernel matrices. 
For all $j \in [m_1]$,
define $K_j$ as the vector $(K_{j1}, K_{j2}, \cdots, K_{jm_1})$, and 
for all $j \in [m_2]$,
define $K'_j$ as the vector $(K'_{j1}, K'_{j2}, \cdots, K'_{jm_1})$. 
Assume the availability of the quantum access $\bm{QA}(K_{j},m_1, \Ord{1})$, for all $j \in [m_1]$, and 
the quantum access $\bm{QA}(K'_{j},m_1, \Ord{1})$ , for all $j \in [m_2]$. \end{oracle}
Based on this input a simple circuit prepares query access to the non-negative versions of the vectors.
\begin{lemma} \label{lemKernelPosNeg}
Assume Data Input \ref{inputQueryKernel} and 
define the non-negative vectors $(K_j)^+$, $(K_j)^-$, with $K_j =(K_j)^+ - (K_j)^-$ and 
the non-negative vectors $(K'_j)^+$, $(K'_j)^-$, with $K'_j =(K'_j)^+ - (K'_j)^-$.
Given Data Input \ref{input:quantumkernel}, then query accesses  $\bm{QA}((K_{j})^+,m_1, \Ord{1}), \bm{QA}((K_{j})^-,m_1, \Ord{1}), \forall j\in[m_1]$ and query accesses
$\bm{QA}((K'_{j})^+,m_1, \Ord{1}), \bm{QA}((K'_{j})^-,m_1, \Ord{1}), \forall j\in[m_2]$
can be provided with two queries to the respective inputs and a constant depth circuit of quantum gates. 
\end{lemma}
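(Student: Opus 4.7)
The plan is to construct, for each of the sign-split vectors, a standard query oracle that uses one call and one inverse call to the given kernel oracle together with a constant-depth reversible circuit on the $\Ord{1}$ bits of each entry. By Definition~\ref{defEncoding}, the entries $K_{ji}$ are stored in a sign-magnitude encoding with a dedicated sign bit and $\Ord{1}$ magnitude bits, so that extracting the sign and conditionally copying the magnitude to a fresh register can be performed by a classical reversible sub-circuit of constant depth.

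First, consider the construction of $\bm{QA}((K_j)^+, m_1, \Ord{1})$. On input $\ket{i}\ket{0^{c'}}$ together with an auxiliary register of $\Ord{1}$ zero-qubits, apply $\bm{QA}(K_j, m_1, \Ord{1})$ acting on $\ket{i}$ and the auxiliary register to obtain $\ket{i}\ket{K_{ji}}_{\mathrm{anc}}\ket{0^{c'}}$. Then apply a constant-depth reversible circuit that reads the sign bit of the auxiliary register and, if the sign is non-negative, copies the magnitude bits into the output register; otherwise it leaves the output register as $\ket{0^{c'}}$. The output register now encodes $(K_j)^+_i$. Finally, apply $\bm{QA}(K_j, m_1, \Ord{1})^\dagger$ on the index and auxiliary registers to uncompute the auxiliary back to $\ket{0^c}$, yielding the desired map $\ket{i}\ket{0^{c'}} \mapsto \ket{i}\ket{(K_j)^+_i}$ after discarding the now-zero auxiliary.

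The constructions of $\bm{QA}((K_j)^-, m_1, \Ord{1})$, $\bm{QA}((K'_j)^+, m_1, \Ord{1})$, and $\bm{QA}((K'_j)^-, m_1, \Ord{1})$ are identical in form. For $(K_j)^-$ the conditional copy of the magnitude is triggered when the sign bit is negative, while for $(K'_j)^\pm$ the oracle $\bm{QA}(K'_j, m_1, \Ord{1})$ replaces $\bm{QA}(K_j, m_1, \Ord{1})$. In every case the construction invokes exactly two queries to the respective underlying oracle (one forward, one inverse) and the remaining gates form a constant-depth reversible circuit acting on $\Ord{1}$ bits, so the overall gate overhead is constant and independent of $n$, $m_1$, and $m_2$. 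There is no substantive obstacle: correctness is immediate from the sign-magnitude identity $K_{ji} = (K_j)^+_i - (K_j)^-_i$ at the bit-string level, and the sole purpose of the inverse query is to guarantee that no garbage is left in the auxiliary register so that the constructed map is a genuine query unitary of the form required by $\bm{QA}$.
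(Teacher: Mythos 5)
The paper states Lemma~\ref{lemKernelPosNeg} without a proof, evidently regarding it as routine, so there is no argument in the paper to compare against directly. Your construction is correct and is the natural one: a forward query to $\bm{QA}(K_j, m_1, \Ord{1})$ loads the sign-magnitude encoding of $K_{ji}$ (cf.\ Definition~\ref{defEncoding}) into an ancilla, a constant-depth sign-controlled copy of the magnitude bits writes $(K_j)^+_i$ (respectively $(K_j)^-_i$) into the output register, and an inverse query uncomputes the ancilla, giving a garbage-free map of the form required by $\bm{QA}$ at the cost of exactly two queries and $\Ord{1}$ gates. Your justification of the uncompute step is the right one to make explicit: the map $K_{ji} \mapsto (K_j)^\pm_i$ is not injective (all non-positive entries map to $0$ under the $+$ split), so an in-place transformation with a single query is impossible and the two-query compute-copy-uncompute pattern is actually necessary, not just convenient. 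One minor observation you could add is that the same two queries suffice to populate both the $(K_j)^+$ and the $(K_j)^-$ output registers simultaneously, since the controlled copies into the two registers are triggered by complementary values of the sign bit and commute; this yields the tighter reading of ``two queries'' should the reader interpret it as two queries per index $j$ for the pair $((K_j)^+, (K_j)^-)$ rather than per constructed oracle.
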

For our quantum version for the main loop of the \textsc{Alphatron} algorithm, we will also require a dynamic quantum data structure for the $\alpha$ vector which allows us to obtain efficient quantum sample access. 
\begin{defn} [Quantum sample access]\label{defQS}
Let $c_1$, $c_2$, and $n$ be positive integers
and $\bm v'\in (\{0,1\}^{c_1})^n$ and $\bm v''\in (\{0,1\}^{c_2})^n$ be
vectors of bit strings. 
Define quantum sample access to a vector $\bm v$ via the operation 
\be
\ket{\bar 0} \to \frac{1}{\sqrt{\Vert \mathcal Q(\bm v', \bm v'')\Vert_1 }} \sum_{j=1}^n  \sqrt{\mathcal Q(v'_j,v''_j)} \ket j,
\ee 
on $\Ord{\log n}$ qubits. 
We denote this access by $\bm{QS}(\bm v, n,c_1,c_2)$. For the sample access to a vector $\bm v$ which approximates a vector with components in $[0,1]$, we use the shorthand notation $\bm{QS}(\bm v, n, c_2) := \bm{QS}(\bm v, n,1,c_2)$.
\end{defn}
One way to obtain such an access is via quantum random access memory (QRAM) \cite{Giovannetti2008,Giovannetti2008_2, Arunachalam2015}. Such a device stores the data 
in (classical) memory cells, but 
allows for superposition queries to the data. If all the partial sums 
are also stored, then QRAM can provide quantum sample access via the Grover-Rudolph procedure, see Ref.~\cite{GR2002}.
This costs resources proportional to the length of the vector to set up, but then can provide the the superposition state in a run time logarithmic in the length of the vector. 

\begin{defn}[Quantum RAM] 
\label{defn:QRAM}
Let $c$ and $m$ be positive integers.
Let $\bm v$ be a vector of dimension $m$, where each element of $v$ is a bit string of length $c$, i.e., $v \in \mathbb (\{0,1\}^{c})^{m}$. Quantum RAM is defined such that  with a one-time cost of $\tOrd{c\ m}$ we can construct quantum query and sampling access
$\bm {QA}(v, m, c)$ and $\bm {QS}(v, m, c)$, see Definitions \ref{defQA} and \ref{defQS} in Appendix \ref{appQuantum}. 
Each query costs $\Ord{c\ {\rm poly} \log m}$.
\end{defn}
Based on Data Input \ref{input:quantumkernel} and Definition~\ref{defn:QRAM}, Lemma~\ref{lemmaInnerUnitary} in Appendix \ref{appQuantum} allows us to estimate the
inner products between $\alpha^t$ and $K_j$ more efficiently than the equivalent estimation in Algorithm~\ref{algAlphatronSamp}.
We have the following algorithm.

\begin{algorithm}[H] 
\label{algQAlphatron2}
\caption{\textsc{Quantum\_Alphatron}}
\SetKw{KwReturn}{Return}
\SetKw{KwInput}{Input}
\SetKw{KwOutput}{Output}
\KwInput{training data $(\mathbf{x}_i, y_i)^{m_1}_{i=1}$, testing data $(\mathbf{a}_j, b_j)^{m_2}_{j=1}$, error tolerance parameters
$\epsilon_I$ and $\delta_I$, 
function $u: \mathbb{R} \to [0, 1]$, number of iterations $T$, degree of the multinomial kernel $d$, learning rate $\lambda$, quantum query access to $K_j, \forall j\in[m_1]$ and $K'_j, \forall j\in[m_2]$ via Data Input~\ref{input:quantumkernel} }
\DontPrintSemicolon

$\alpha^0 \leftarrow \mathbf{0} \in \mathbb{R}^{m_1}$\;

\label{line:qmf1} \For {$j \leftarrow 1$ \KwTo $m_1$} {
$p^{\max}_j \gets \max_i \vert K_{ji}\vert$ via quantum maximum finding with success probability $1 - \delta_I / (4m_1)$\;
Define the non-negative vectors $(K_j)^+$, $(K_j)^-$, with $K_j =(K_j)^+ - (K_j)^-$\;
From query access to $K_j$ provide query access to $(K_j')^+$, $(K_j')^-$ via Lemma \ref{lemKernelPosNeg}\;
}
\For {$j \leftarrow 1$ \KwTo $m_2$} {
$q^{\max}_j \gets \max_i \vert K'_{ji}\vert$ via quantum maximum finding with success probability $1 - \delta_I / (4m_2)$\;
Define the non-negative vectors $(K_j')^+$, $(K_j')^-$, with $K_j =(K_j')^+ - (K_j')^-$\;
From query access to $K_j$ provide query access to $(K_j')^+$, $(K_j')^-$ via Lemma \ref{lemKernelPosNeg}\;
}\label{line:qmf2}
\For {$t \leftarrow 0$ \KwTo $T-1$} {
Store in QRAM (see Definition~\ref{defn:QRAM}) 
the non-negative vectors $(\alpha^t)^+,(\alpha^t)^-$, where $\alpha^t = (\alpha^t)^+ - (\alpha^t)^-$, where each element of the vector is stored using $\left \lceil \log (\lambda T/m_1) \right \rceil + \left \lceil \log\left(\frac{2K_{\max} m_1}{\epsilon_I} \right) \right\rceil $ bits \label{line:qram}\;
$w_t \gets \Vert \alpha^t \Vert_1$\;
\For {$j \leftarrow 1$ \KwTo $m_1$} {
$r_j^t \gets$ 
Estimate inner product $\alpha^t \cdot K_j$, by estimating  $(\alpha^t)^+ \cdot (K_{j})^+$, $(\alpha^t)^+ \cdot (K_{j})^-$, $(\alpha^t)^- \cdot (K_{j})^+$, and $(\alpha^t)^- \cdot (K_{j})^-$ via Statement (iii) of Lemma~\ref{lemmaInnerUnitary} (using $w_t$ and $p^{\max}_j$),
each to additive accuracy $\epsilon_I/8$ with success probability $1 - \delta_I / (16Tm_1)$\;
$\alpha_j^{t+1} \leftarrow \alpha_j^t + \frac{\lambda}{m_1} (y_j - u(r_j^t) )$\;	
}
\For {$j \leftarrow 1$ \KwTo $m_2$} {
$s^t_j \gets$ Estimate inner product $\alpha^t \cdot K'_j$, by estimating $(\alpha^t)^+ \cdot (K'_{j})^+$, $(\alpha^t)^+ \cdot (K'_{j})^-$, $(\alpha^t)^- \cdot (K'_{j})^+$, and $(\alpha^t)^- \cdot (K'_{j})^-$ via Statement (iii) of Lemma~\ref{lemmaInnerUnitary} (using $w_t$ and $q^{\max}_j$),
each to additive accuracy $\epsilon_I/8$ with success probability $1 - \delta_I / (16Tm_2)$\;
}
}
$t_{out}\leftarrow \arg\min_{t\in [T]}\frac{1}{m_2}\sum_{j=1}^{m_2} (u(s_j^t)-b_j)^2$ \;
\KwOutput{$\alpha^{t_{out}}$ }\;
\end{algorithm}

\begin{theorem} [Quantum Alphatron] \label{thm:alphaQuantumLoop}
We assume quantum query access to the vectors $K_j$ and $K'_j$ via Data Input~\ref{input:quantumkernel}.
Again, let $K_{\max}$ be maximum of all 
entries in $K$ and $K'$.
Let $\delta \in (0,1)$.
Given Definitions \ref{assume:alpha} and \ref{assume:param} and $\delta_I = \delta$, Algorithm \ref{algQAlphatron2} 
outputs $\alpha^{t_{out}}$ such that the hypothesis $h^{t_{\rm out}} = u\left(\sum_j\alpha_j^{t_{\rm out}} \psi(\bm x_j) \cdot \psi(\bm x)\right)$ satisfies with probability $1 - 3\delta$,
\[
\varepsilon(h^{t_{\rm out}}) \in \Ord{A_2+L\epsilon_I +L^2 \epsilon_I^2},
\]
where $A_2$ is defined in Theorem~\ref{thm:alpha}. The run time of this algorithm is 
\[\tOrd{m^{1.5}\log \left( \frac{1}{\delta} \right) + Tm + T^2 m \frac{ K_{\max}}{L \epsilon_I} \log \left( \frac{1}{\delta}\right) }.\]
If $A_2 \le 1$ and we set $\epsilon_I = \sqrt \epsilon$ then we further obtain the guarantee
\[
\varepsilon(h^{t_{\rm out}}) \in \Ord{A_2},
\]
and the run time is 
\[ \tOrd{m^{1.5}\log \left( \frac{1}{\delta} \right) + Tm + T^2 m \frac{ K_{\max} }{L \sqrt \epsilon} \log \left( \frac{1}{\delta}\right) }.\]
\end{theorem}
\begin{proof}
First, consider the numerical error from truncating the $\alpha$ vectors. 
Recall that in the classical steps of the algorithm we work in the arithmetic model where all the steps occur at infinite precision. 
Let $\alpha^t\in \mathbb R^{m_1}$ be the vector given to infinite precision (arithmetic model) with known $0 <  \alpha_{\max} \leq \lambda T/ m_1$ (Lemma \ref{lem:alphabound}). Set $c_1 \geq \lceil \log (\lambda T/m_1) \rceil$ and
$c_2 \geq \left \lceil \log\left(\frac{2K_{\max} m_1}{\epsilon_I} \right) \right \rceil$.
Let $\tilde \alpha \in \mathbb \{0,1\}^{c_1+c_2}$ be  the element-wise $c_1 + c_2$ bit approximation of $\alpha^t$ (stored in QRAM).
Note that 
\be \label{eqqnum}
\vert \alpha^t \cdot K_i - \tilde \alpha^t \cdot K_i \vert \leq m_1 K_{\max} \max_{j\in [m_1]} \vert \alpha^t_j -\tilde \alpha^t_j\vert \leq   \frac{m_1 K_{\max}}{2^{c_2+1}} \leq \frac{\epsilon_I}{2}.
\ee
Aside from the estimation of the inner products,
the remaining part of Algorithm~\ref{algQAlphatron2} is the same
as Algorithm~\ref{algAlphatronSamp}.
Compared to Algorithm~\ref{algAlphatronSamp}, we change the accuracy of the inner product estimation to $\epsilon_I/2$, 
hence we achieve that
\be
\vert r_j^t - \tilde \alpha^t \cdot K_j\vert \leq \frac{\epsilon_I}{2}, \forall t\in[T], \forall j\in[m_1],\\
\vert s_j^t - \tilde \alpha^t \cdot K'_j\vert \leq \frac{\epsilon_I}{2}, \forall t\in[T], \forall j\in[m_2],
\ee
with the stated success probabilities.
Using Eq.~(\ref{eqqnum}) for the numerical error,
we obtain that
\be
\vert r_j^t - \alpha^t \cdot K_j\vert \leq \epsilon_I, \forall t\in[T], \forall j\in[m_1],\\
\vert s_j^t - \alpha^t \cdot K'_j\vert \leq \epsilon_I, \forall t\in[T], \forall j\in[m_2],
\ee
with the same success probabilities.
Hence the same accuracy guarantees holds as in the proof of Theorem~\ref{thm:alphaApproxLoop}.
For the output hypothesis $h^{t_{\rm out}}$, we have
\be
\varepsilon(h^{t_{\rm out}}) \in \Ord{A_2 + L \epsilon_I +
L^2 \epsilon_I^2}.
\ee
For the run time complexity, there are three terms.
From Line~\ref{line:qmf1} to Line~\ref{line:qmf2}, we perform $\Ord{m_1+m_2} = \Ord{m}$ 
quantum maximum findings. 
The run time of a single run of the quantum maximum finding is bounded by $\tOrd{\sqrt{m} \log (1/\delta)}$ \cite{DH96}. Hence, this part of the algorithm takes $\Ord{m^{1.5} \log (1/\delta)}$ time.
In Line~\ref{line:qram},
the time of storing all $\alpha^t$ in QRAM is $\tOrd{Tm}$ because we have $\Ord{T}$ vectors and 
storing each of them costs $\tOrd{m}$ time.
For each step $t$, the run time of the estimations
$r^t_j$ and $s^t_j$ depends on the norm $\Vert \alpha^t \Vert_1 \leq T/L$. 
Hence, the run time of each estimation $r^t_j$ and $s^t_j$ via (iii) of Lemma~\ref{lemmaInnerUnitary} is bounded by
$\tOrd{\frac{ T K_{\max} }{ L\epsilon_I} \log \left( \frac{1}{\delta}\right)}$,
and we need to estimate $\Ord{Tm}$ inner products.
Then the overall run time is 
\be
\tOrd{m^{1.5}\log \left( \frac{1}{\delta} \right) + Tm + T^2 m \frac{ K_{\max}}{L \epsilon_I} \log \left( \frac{1}{\delta}\right) }.
\ee
If we set $\epsilon_I= \sqrt \epsilon$, we obtain $\epsilon(h) \le O(A_2)$ and the run time is
\be
\tOrd{m^{1.5}\log \left( \frac{1}{\delta} \right) + Tm + T^2 m \frac{ K_{\max} }{L\sqrt\epsilon} \log \left( \frac{1}{\delta}\right) }.
\ee
\end{proof}

\section{Discussion}
\label{sec:discuss}

In this section, we summarize the results from previous sections and discuss the improvement on the run time complexity by pre-computation and quantum estimation.
In Section~\ref{sec:four}, we have introduced \textsc{Alphatron\_with\_Pre},\textsc{Alphatron\_with\_Approx\_Pre}, and \textsc{Alphatron\_with\_Q\_Pre}, which improve the
original \textsc{Alphatron}. The scenario is that the dimension of the data $n$ is  much larger than 
the other parameters, a situation that is relevant for many practical applications.
Without the pre-computation, we have a run time $\Ord{Tm^2 n}$ compared to the run time with the pre-computation of $\Ord{ m^2 n + m^2 \log d + Tm^2}$. The factor of the $n$ dependent term loses a factor $T$ which is a small improvement.
Moreover, by quantum amplitude estimation, we gain a quadratic speedup in the dimension $n$.
We list the results of Section~\ref{sec:four} in Table~\ref{tablePreGeneral} for comparison.

\begin{table}[H]
\begin{center}
\begin{tabular} { |c | c | c | c | c }
\hline
Name & Pre-computation & Main loop & Proved in \\
\hline \hline
\textsc{Alphatron} &  not applicable
& $\Ord{Tm^2n}$ &\cite{pmlr-v99-goel19b}, also Theorem~\ref{thmOrig} \\
\hline
\textsc{Alphatron\_with\_Pre} & $\Ord{ m^2 n + m^2 \log d}$ & $\Ord{Tm^2}$ & Theorem~\ref{thmPre}\\
\hline
\textsc{Alphatron\_with\_Approx\_Pre} & $\tOrd{mn + \frac{m^2 d^2 }{\epsilon_K^2} \log \frac{1}{\delta_K}}$ & $\Ord{Tm^2}$ & Theorem~\ref{thmApprox} 
\\
\hline
\textsc{Alphatron\_with\_Q\_Pre} & $\tOrd{ \frac{m^2 d \sqrt{n} }{\epsilon_K} \log \frac{1}{\delta_K} }$ & $\Ord{Tm^2}$ & Corollary~\ref{cor:generalAlphaQ}\\
\hline
\end{tabular}
\caption{Comparison of the first set of algorithms in  Section~\ref{sec:four}. 
We separate the pre-computation of the multinomial kernel function from the main loop and also estimate the training set inner products instead of computing them exactly, which can improve the time complexity of the computation of the kernel function.
For all algorithms, we indicate the general result without using the learning setting in Definition \ref{assume:alpha}. For  \textsc{Alphatron\_with\_Approx\_Pre} and
\textsc{Alphatron\_with\_Q\_Pre}, the relevant kernel functions are estimated to accuracy
$\epsilon_K$ with failure probability $\delta_K$. 
To obtain the weak p-concept learning result of Theorem \ref{thmLearning} for all these algorithms, take the concept class defined in Definition \ref{assume:alpha} and the parameter settings for the algorithms of Definition \ref{assume:param}. Also, set $\epsilon_K = L\sqrt \epsilon / T$ and $\delta_K = \delta$. We do not further evaluate the formulas (using, e.g., the expressions for $T$ and $m_1$) as the main focus of this table is on the  dependency on $n$
which dominates all other parameters.} 
\label{tablePreGeneral}
\end{center}
\end{table}

Section~\ref{sec:five} has introduced \textsc{Alphatron\_with\_Kernel\_and\_Sampling} and
\textsc{Quantum\_} \textsc{Alphatron}.
In this scenario, we assume constant time query access to the kernel matrices (i.e., to the result of the pre-computation), while
the quantum version requires quantum query access.
These algorithms only focus on the main loop part of the \textsc{Alphatron}.
Hence, these algorithms can be viewed as the improvements for \textsc{Alphatron\_with\_Kernel}.
We list the results of Section~\ref{sec:five} in Table~\ref{tableOracleGeneral}.
For comparison, we also list the time complexity of \textsc{Alphatron\_with\_Kernel}.

\begin{table}[t]
\begin{center}
\begin{tabular} { |c | c | c | c | c }
\hline
Name &  Main loop & Theorem\\
\hline \hline
\textsc{Alphatron\_with\_Kernel} &  $\Ord{Tm^2}$ &Theorem \ref{thmVanilla} \\
\hline
\textsc{Atron\_with\_Kernel\_and\_Sampling} &  $\tOrd{Tm + T^3 m \frac{  K_{\max}^2 }{ L^2\epsilon_I^2}\log \frac{1}{\delta}  }$ &Theorem \ref{thm:alphaApproxLoop} \\
\hline
\textsc{Quantum\_Alphatron} &  $\tOrd{m^{1.5}\log\frac{1}{\delta} + Tm + T^2 m \frac{  K_{\max} }{L \epsilon_I} \log \frac{1}{\delta} }$ & Theorem \ref{thm:alphaQuantumLoop} \\
\hline
\end{tabular}
\caption{Comparison of the second set of algorithms \textsc{Atron\_with\_Kernel\_and\_Sampling} and
\textsc{Quantum\_Alphatron}, which are discussed in Section~\ref{sec:five}, to \textsc{Alphatron\_with\_Kernel}. These algorithms change the main loop part by using an inner product estimation.
The inner product estimation is performed to accuracy $\epsilon_I$ and the total success probability of the algorithm is $1-\delta$.
Here, we indicate the general result without the learning setting in Definition \ref{assume:alpha}.
}
\label{tableOracleGeneral}
\end{center}
\end{table}

For Table~\ref{tableOracleGeneral}, it is not
obvious that the quantum algorithm has a speedup compared to
the \textsc{Alphatron\_with\_Kernel}.
As mentioned in the preliminary, $\mathcal{K}_d (\bm{x}, \bm{y}) \le 1$ for all
$\bm{x}, \bm{y} \in \mathbb{B}^{n}$. Hence, 
we can use $K_{\max} \leq 1$.
Recall from Theorem \ref{thmLearning} 
that if $m_1 \geq \max \left \{m'_1, m''_1 \right \}$, with
$m'_1 = \frac{16 \zeta^4}{ \epsilon^2} \log(1/\delta)$ and $m''_1 = \frac{ 4 B^2}{ \epsilon} \log(1/\delta)$,
then the concept class in Definition \ref{assume:alpha} is weak p-concept learnable up to $2 C^{\prime\prime} L \sqrt \epsilon$ by the \textsc{Alphatron} algorithm.

\textbf{Case $m'_1>m''_1$.} Consider the first case, which is equivalent to
$4\zeta^4 > B^2 \epsilon$. Hence, $m_1 \in \Ord{m'_1}$ leads to learnability, which we can use to simplify
$T = C B L \sqrt{\frac{m_1}{\log (1/\delta)}} \in \Ord{ B L \frac{ \zeta^2}{ \epsilon}}
$.
In addition, $m=m_1+m_2$, 
and we have $m_2 = C^\prime m_1\log(T/\delta)$, hence,
$
m= (1+ C' \log T + C' \log 1/\delta ) m_1 \in \tOrd{ \frac{ \zeta^4}{ \epsilon^2} \log^2 \frac{1}{\delta}}.
$
From Theorem \ref{thm:alphaQuantumLoop}, we have the run time 
\be
\tOrd{m^{1.5}\log\frac{1}{\delta} + Tm + T^2 m \frac{1 }{L\sqrt \epsilon}\log \frac{1}{\delta}  } &=& \tOrd{  \frac{\zeta^6}{\epsilon^3} \log^4 \frac{1}{\delta} + B L \frac{\zeta^4}{ \epsilon^3}  \log^2 \frac{1}{\delta} +  B^2 L \frac{\zeta^8}{ \epsilon^{4.5}} \log^3 \frac{1}{\delta}} \nonumber
\\ &=&\tOrd{  B^2 L \frac{\zeta^8}{ \epsilon^{4.5}} \log^4 \frac{1}{\delta}}. 
\ee
For the classical run time, we simplify $\Ord{ T m^2 } \subseteq \tOrd{B L \frac{\zeta^{10}}{ \epsilon^5} \log^4 \frac{1}{\delta}}$.

\textbf{Case $m''_1>m'_1$.} Consider the second case, which is equivalent to
$4\zeta^4 < B^2 \epsilon$. Hence, $m_1 = \Ord{ m''_1}$ leads to learnability, which we can use to simplify
$T = C B L \sqrt{\frac{m_1}{\log (1/\delta)}} \in \Ord{  \frac{B^2 L }{\sqrt \epsilon}}$.
In addition, 
$m \in  \tOrd{ \frac{ B^2}{ \epsilon} \log^2 \frac{1}{\delta}}$.
From Theorem \ref{thm:alphaQuantumLoop}, we have the run time $\tOrd{\frac{B^6 L }{ \epsilon^{2.5}}   \log^4 \frac{1}{\delta}}$. 
For the classical run time we simplify
$\Ord{Tm^2} \subseteq \tOrd{\frac{  B^6 L}{ \epsilon^{2.5}} \log^4\frac{1}{\delta}}.
$
\begin{table}[t]
\begin{center}
\begin{tabular} { |c | c | c | c | c }
\hline
Case &  Classical run time & Quantum run time & Advantage \\
\hline \hline
$4\zeta^4 > B^2 \epsilon$ &  $\tOrd{ \frac{B L \zeta^{10}}{ \epsilon^5} \log^4 \frac{1}{\delta}}$ & 
$\tOrd{ \frac{B^2 L \zeta^8}{ \epsilon^{4.5}} \log^4 \frac{1}{\delta}}$ & yes
\\
\hline
$4\zeta^4 \leq B^2 \epsilon$ &  $\tOrd{\frac{  B^6 L}{ \epsilon^{2.5}} \log^4\frac{1}{\delta}}$ & $\tOrd{\frac{B^6 L }{ \epsilon^{2.5}}   \log^4 \frac{1}{\delta}}$
& no
\\
\hline
\end{tabular}
\caption{Comparison of the algorithms \textsc{Alphatron\_with\_Kernel} and \textsc{Quantum\_Alphatron} for the learning setting in Definition \ref{assume:alpha}. Only in the first case a quantum advantage is obtained.
}
\label{tableOracleAdvantage}
\end{center}
\end{table}
This analysis of the two cases is summarized in Table \ref{tableOracleAdvantage} and allows us to state our final theorem. 
\begin{theorem}[Quantum p-concept learnability via the \textsc{Quantum\_Alphatron}]
\label{thmQuantumLearning}
Let the concept class and distribution be defined by Definition \ref{assume:alpha}. For this concept class, let $4\zeta^4 > B^2 \epsilon$.
In addition, let there be given quantum access to the kernel matrices via Data Input~\ref{input:quantumkernel}.
Then, 
the concept class in Definition \ref{assume:alpha} is weak p-concept learnable up to $2 C^{\prime\prime} L \sqrt \epsilon$ by the \textsc{Quantum\_Alphatron} algorithm with a run time that shows an advantage by a factor $\sim \frac{\zeta^2}{B \sqrt \epsilon}$ over the classical algorithm given the same input.
\end{theorem}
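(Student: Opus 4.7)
The plan is to combine the runtime bound from Theorem~\ref{thm:alphaQuantumLoop} with the sample complexity bounds from Theorem~\ref{thmLearning}, then specialize to the regime $4\zeta^4 > B^2\epsilon$ and compare with the classical runtime of Theorem~\ref{thmVanilla} (via \textsc{Alphatron\_with\_Kernel}). The learnability statement itself is essentially immediate: setting $\epsilon_I = \sqrt\epsilon$ in Theorem~\ref{thm:alphaQuantumLoop} gives $\varepsilon(h^{t_{\rm out}}) \in \Ord{A_2}$, and under the parameter settings of Definition~\ref{assume:param} together with $m_1 \geq \max\{m_1',m_1''\}$, Theorem~\ref{thmLearning}'s proof shows that $C'' A_2 \leq 2C''L\sqrt\epsilon$, giving weak p-concept learnability up to $2C''L\sqrt\epsilon$.

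The main work is the runtime comparison. First I would translate the parameter relations. In the case $4\zeta^4 > B^2\epsilon$, the bound $m_1 \geq \max\{m_1',m_1''\}$ is dominated by $m_1'$, so we may take $m_1 \in \Ord{\zeta^4\log(1/\delta)/\epsilon^2}$. Then from Definition~\ref{assume:param}, $T = CBL\sqrt{m_1/\log(1/\delta)} \in \Ord{BL\zeta^2/\epsilon}$, and $m = m_1 + m_2 \in \tOrd{m_1} \subseteq \tOrd{\zeta^4\log^2(1/\delta)/\epsilon^2}$. Using $K_{\max} \leq 1$ (since $\mathcal K_d(\bm x, \bm y) \leq 1$), the quantum runtime of Theorem~\ref{thm:alphaQuantumLoop} with $\epsilon_I = \sqrt\epsilon$ becomes
\[
\tOrd{m^{1.5}\log\tfrac{1}{\delta} + Tm + \frac{T^2 m}{L\sqrt\epsilon}\log\tfrac{1}{\delta}},
\]
and I would substitute the expressions for $T$ and $m$, observing that the third term dominates, yielding $\tOrd{B^2 L \zeta^8 \log^4(1/\delta)/\epsilon^{4.5}}$. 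For the classical side, Theorem~\ref{thmVanilla} gives $\Ord{Tm^2}$, which substitutes to $\tOrd{BL\zeta^{10}\log^4(1/\delta)/\epsilon^5}$.

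Finally, I would take the ratio of classical to quantum runtime, which equals $\zeta^2/(B\sqrt\epsilon)$ up to logarithmic factors, establishing the advertised advantage factor. The hypothesis $4\zeta^4 > B^2\epsilon$ is equivalent to $\zeta^2 > (B/2)\sqrt\epsilon$, so this advantage factor is strictly larger than a constant in the stated regime, confirming that the speedup is genuine rather than illusory. The hardest part of the proof, modest as it is, lies in verifying which of the three terms in the Quantum Alphatron runtime dominates after substitution, since this requires checking polynomial exponents in $\zeta$, $B$, $\epsilon$, and $\log(1/\delta)$ against each other; once that dominance is established the rest is direct algebra.
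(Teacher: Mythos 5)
Your proposal is correct and follows essentially the same route as the paper: set $\epsilon_I = \sqrt\epsilon$ in Theorem~\ref{thm:alphaQuantumLoop}, invoke Theorem~\ref{thmLearning} for the learnability guarantee, specialize to the case $4\zeta^4 > B^2\epsilon$ (so that $m_1'$ dominates), substitute $T \in \Ord{BL\zeta^2/\epsilon}$ and $m \in \tOrd{\zeta^4\log^2(1/\delta)/\epsilon^2}$ with $K_{\max} \leq 1$ into both the quantum runtime and the classical $\Ord{Tm^2}$, and take the ratio to obtain the factor $\zeta^2/(B\sqrt\epsilon)$. One small imprecision worth noting: after substitution the third term of the quantum runtime actually contributes $\log^3(1/\delta)$ (the $\log^4$ comes from the first term $m^{1.5}\log(1/\delta)$), so the statement that ``the third term dominates'' is only true for the polynomial part; the paper makes the same final bound by taking the max over log powers, so your conclusion matches.
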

A note on the condition $4\zeta^4 > B^2 \epsilon$ for the speedup. By Definition \ref{assume:alpha}, $\zeta$ determines the range of the noise function, while $\epsilon$ is an upper bound to the variance of the noise function. For any function the variance will be smaller or equal to the range. Hence, the condition $4\zeta^4 > B^2 \epsilon$ is reasonably easy to satisfy and we may obtain a quantum advantage for a broad concept class of functions. 
In Appendix \ref{appComb}, we combine the algorithms for the kernel matrix estimation and the inner loop estimations, both for the classical and quantum cases.

\section{Applications for learning two-layer neural networks}
\label{sec:application}

In this section, we describe how to use algorithms described above to learn neural networks with two nonlinear layers in the $p$-concept model.
Following previous works \cite{pmlr-v99-goel19b,10.5555/3305890.3305989}, first consider an one-layer neural network with $k$ units
\begin{align}
\mathcal{N}_1:\mathbf{x} \rightarrow \sum_{i=1}^k \mathbf{b}_i\sigma(\mathbf{a}_i\cdot\mathbf{x}),
\end{align}
where $\mathbf{x}\in\mathbb{R}^n$, $\mathbf{b}\in \mathbb{B}^{k}$, $\mathbf{a}_i\in\mathbb{B}^{n}$ for $i\in\{1,\dots,k\}$, and $\sigma:\mathbb{R}\rightarrow\mathbb{R}$ is the activation function.
In the following, we consider the sigmoid function $\sigma=1/(1+e^{-x})$ as the activation function.
Subsequently, we define a neural work with two nonlinear layers with one unit in the second layer
\begin{align}
\mathcal{N}_2:\mathbf{x} \rightarrow \sigma'\left(\sum_{i=1}^k \mathbf{b}_i\sigma(\mathbf{a}_i\cdot\mathbf{x})\right),
\end{align}
where $\sigma':\mathbb{R}\rightarrow\mathbb{R}$ is an $L$-Lipschitz non-decreasing function.
Ref.~\cite{pmlr-v99-goel19b} proved the following lemma, which states that such kinds of neural networks are efficiently $p$-concept learnable.

\begin{lemma}[Efficient p-concept learning for two-layer neural networks \cite{pmlr-v99-goel19b}]
\label{lemma_nn}
Consider samples $\left(\boldsymbol{x}_i, y_i\right)_{i=1}^m$ drawn i.i.d.~from a distribution $\mathcal{D}$ on $\mathbb{B}^{n} \times[0,1]$ such that $E[y \mid \boldsymbol{x}]=\mathcal{N}_2(\boldsymbol{x})$ with $\sigma^{\prime}: \mathbb{R} \rightarrow[0,1]$ is a known L-Lipschitz non-decreasing function and $\sigma$ is the sigmoid function. There exists an algorithm that outputs a hypothesis $h$ such that with probability $1-\delta$,
$$
\mathbb{E}_{\boldsymbol{x}, y \sim \mathcal{D}}\left[\left(h(\boldsymbol{x})-\mathcal{N}_2(\boldsymbol{x})\right)^2\right] \leq \epsilon,
$$
for $m=\left(\frac{k L}{\epsilon}\right)^{O(1)} \cdot \log^2 (1 / \delta)$. The algorithm runs in time polynomial with $m$ and $n$.
\end{lemma}
The statement of the previous work mentions the training sample complexity $m_1$ instead of total sample complexity $m$, where the latter depends on $\log^2 (1 / \delta)$.
Here, we focus on the total sample complexity.

We assume corresponding query access to the kernel matrices, both for the classical and the quantum scenario, respectively.
We focus on the comparison of time complexity for \textsc{Alphatron\_with\_Kernel} and \textsc{Quantum\_Alphatron}.
The sigmoid function can be uniformly approximated by low-degree polynomials.
By Lemma $8$ and $12$ in Ref.~\cite{pmlr-v99-goel19b}, we have that there exists a constant $C_{\rm sig}$ and a $\mathbf{v} \in \mathcal{H}_d$ with $\|\mathbf{v}\|_2\leq (\sqrt{k}/\epsilon_0)^{C_{sig}}$ for $d=O(\log (1 / \epsilon_0))$ such that
\be \label{eqApproxLayer1}
\left|\sum_{i=1}^k \mathbf{b}_i\sigma(\mathbf{a}_i\cdot\mathbf{x})-\langle \mathbf{v},\psi(\mathbf{x})\rangle \right|\leq \sqrt{k}\epsilon_0,
\ee
for every $\mathbf{x}\in \mathbb{B}^n$.
Hence, in this case we have
$\mathcal{N}_1(\mathbf{x})=\langle\mathbf{v}, \psi(\mathbf{x})\rangle+\xi(\mathbf{x})$ for some function $\xi: \mathcal{X} \rightarrow[-\sqrt{k}\epsilon_0, \sqrt{k}\epsilon_0]$,
and $\mathbb E[y\mid\mathbf{x}]=\sigma'(\langle\mathbf{v}, \psi(\mathbf{x})\rangle+\xi(\mathbf{x}))$ with $\mathbb{E}[\xi(\mathbf{x})^2]\leq k\epsilon_0^2$. 
From these, we have $\zeta=\sqrt{k}\epsilon_0$, $B=(\sqrt{k}/\epsilon_0)^{C_{\mathrm{sig}}}$ and $\epsilon=k\epsilon^2_0$.

Based on the discussion in Section \ref{sec:discuss}, we can compare the value of $4\zeta^4$ and $B^2\epsilon$ to determine the possibility for a quantum speedup.
In this case, we have $4\zeta^4 = 4 (\sqrt{k}\epsilon_0)^4$ and $B^2 \epsilon=k^2(\sqrt{k}/\epsilon_0)^{2C_{\rm sig}-2}$.
The condition for the quantum speedup is $4\zeta^4>B^2\epsilon$, which is equivalent to $4>k^2(\sqrt{k}/\epsilon_0)^{2C_{\rm sig}-2}$.
As $k\geq 1$, $0<\epsilon_0<1$, and $C_{\rm sig}>2$ \cite{10.5555/2968826.2968922}, in general the inequality will not hold, and hence we will not achieve a quantum speedup for learning two-layer neural networks with the sigmoid function.

In the following final part, we discuss a type of neural network which is in a regime where there is a quantum advantage according to Section \ref{sec:discuss}.
The key idea is that if the neural network has some intrinsic errors, these errors dominate and we may achieve a quantum speedup. We first define the neural-network architecture.

\begin{defn}[Erroneous neural networks]
Let $\gamma_1>0$ and $\gamma_2>0$ and let there be given a function $\xi_1(\mathbf{x})\in [-\gamma_1,\gamma_1]$ and $\mathbb{E}[\xi_1^2(x)]\leq \gamma_2$.
We say that we have an $(\gamma_1,\gamma_2)$-erroneous one-layer neural network with $k$ units if it can be written in the form
\begin{align}
\mathcal{N}^{\rm err}_{1}:x\rightarrow \sum_{i=1}^k \mathbf{b}_i\sigma(\mathbf{a}_i\cdot\mathbf{x})+\xi_1(\mathbf{x}),
\end{align}
where $\mathbf{x}\in\mathbb{R}^n$, $\mathbf{b}\in \mathbb{B}^{k}$, $\mathbf{a}_i\in\mathbb{B}^{n}$ for $i\in\{1,\dots,k\}$, and $\sigma:\mathbb{R}\rightarrow\mathbb{R}$ is the sigmoid function.
Based on the one-layer network, we define a two-layer $(\gamma_1,\gamma_2)$-erroneous neural network with
\begin{align}
\mathcal{N}^{\rm err}_{2}:x\rightarrow \sigma'\left(\sum_{i=1}^k \mathbf{b}_i\sigma(\mathbf{a}_i\cdot\mathbf{x})+\xi_1(\mathbf{x})\right),
\end{align}
where $\sigma':\mathbb{R}\rightarrow\mathbb{R}$ is an $L$-Lipschitz non-decreasing function.
\end{defn}
This neural network allows for a quantum advantage for learning it. 
\begin{theorem}
There is a quantum speedup for learning two-layer $(\gamma_1,\gamma_2)$-erroneous neural networks based on the $p$-concept learning model if $\gamma_1> \max \{\frac{1}{\sqrt{2}}(\sqrt{k}/\epsilon_0)^{C_{\rm sig}}, \frac{1}{\sqrt{2}}\sqrt{\gamma_2}\}$.
\end{theorem}

\begin{proof}
Using Eq.~\ref{eqApproxLayer1}, we have that $\forall x\in \mathcal{X}$, $\mathcal{N}^{\rm err}_1=\langle \mathbf v,\psi(\mathbf x) \rangle+\xi'(\mathbf x)$ for some function $\xi':\mathcal{X}\rightarrow 
[-\zeta_{\rm err},\zeta_{\rm err}]$ with $\zeta_{\rm err}:=\sqrt{k}\epsilon_0+\gamma_1$ and with $\mathbb{E}[\xi'^2]\leq (\gamma_2+k\epsilon_0^2+2\gamma_1\sqrt{k}\epsilon_0) =: \epsilon_{\rm err}$.
The $\ell_2$-norm of the hypothesis vector $\|\boldsymbol{v}\|_2$ is bounded by $B_{\rm err}:=(\sqrt{k}/\epsilon_0)^{C_{\rm sig}}$.
To achieve the quantum speedup, we need to satisfy $4\zeta_{\rm err}^4> B_{\rm err}^2\epsilon_{\rm err}$.
It suffices to satisfy
\begin{align}
4 (\sqrt{k}\epsilon_0+\gamma_1)^4 > (\sqrt{k}/\epsilon_0)^{2C_{\rm sig}}(\gamma_2+k\epsilon_0^2+2\gamma_1\sqrt{k}\epsilon_0).
\end{align}
This inequality holds when the following holds
\begin{align*}
2(\sqrt{k}\epsilon_0+\gamma_1)^2>& (\sqrt{k}/\epsilon_0)^{2C_{\rm sig}},\\
2(\sqrt{k}\epsilon_0+\gamma_1)^2>& (\gamma_2+k\epsilon_0^2+2\gamma_1\sqrt{k}\epsilon_0).
\end{align*}
These two inequalities can be achieved if $\gamma_1> \max \{\frac{1}{\sqrt{2}}(\sqrt{k}/\epsilon_0)^{C_{\rm sig}}, \frac{1}{\sqrt{2}}\sqrt{\gamma_2}\}$.
\end{proof}
Hence, we have shown that the original two-layer neural network does not admit a quantum speedup with our approach, while a noisy version of the two-layer network does. 

\section{Acknowledgements}

This work was supported by the Singapore National Research Foundation, the Prime Minister’s Office, Singapore, the Ministry of Education, Singapore under the Research Centres of Excellence programme under research grant R 710-000-012-135.
In addition, this research is supported by the National Research Foundation, Singapore under its CQT Bridging Grant.

\appendix

\section{Lipschitz condition for multinomial kernel function}
\label{appLip}

\begin{lemma} \label{lemmaLipschitz}
Let  $f : \mathbb R \to \mathbb R$ be defined by $f(z) = \frac{1}{d+1}\sum_{i =0}^d z^i $. Then $f(z)$ is Lipschitz continuous with Lipschitz constant $L \in \Ord{d z_0^d + d z_0 + \frac{1}{d}}$, that is
\be
\vert f(z) - f(z')\vert \leq L \vert z - z'\vert,
\ee
for all $z, z' \in [-z_0, z_0]$.
\end{lemma}
\begin{proof}
First, $\frac {d}{dz} f(z) = \frac{1}{d+1}\sum_{i =0}^{d-1} (i+1) z^i \le \frac{1}{d+1} (1 +  \sum_{i =1}^{d-1} d z^i)$.
When $0 < z < 1$, $z^i \le z$ for $1 \le i \le d-1$. And when $z \ge 1$, $z^i \le z^d$ for $1 \le i \le d-1$.
Thus $z^i \le z + z^d$ for $1 \le i \le d-1$.
Hence $L \leq \max_{z\in[-z_0,z_0]} |\frac {d}{dz} f(z)| \le \frac{d}{dz} f(z) \mid_{z = z_0} \in \Ord{d z_0^d + d z_0 + \frac{1}{d}}$.
\end{proof}
Note that throughout this paper, it always holds that $z_0 = 1$. In this case,
the Lipschitz constant is bounded by $\Ord{d}$.

\section{Proof of Theorem~\ref{thm:alpha} \label{proof.alphatron_approx}}

We first introduce several definitions and lemmas for proving the theorem. 
Given the coefficients $\mathbf{\alpha}_i$,
we generate a hypothesis vector $\bm{v}(\mathbf{\alpha})$
in the feature space by taking the linear combination over vectors $\psi_d (\bm{x}_i)$.

\begin{defn}[Auxiliary definitions] \label{def:hypovec}
In the setting of Definitions \ref{assume:alpha} and \ref{assume:param},
define the generated hypothesis mapping $\bm v: \mathbb R^{m_1} \to \mathbb R^{n_d}$ as
the linear combination
\be 
\bm {v}(\mathbf{\alpha}) := \sum_{i=1}^{m_1} \alpha_i \psi_d (\bm{x}_i).
\ee
In addition, define $\beta \in \mathbb{R}^{m_1}$ as
\be
\beta_i := \frac{1}{m_1} (y_i - u(\langle \textbf{v}, \psi_d (\bm{x}_i)\rangle ) + \xi(\bm{x}_i)),
\ee
using $\textbf v$ from the concept class
and
$\Delta := \bm{v} (\beta)$ with the norm
$\eta := \Vert \Delta \Vert_2$.
Finally, define $\rho := \frac{1}{m_1} \sum_{i=1}^{m_1} \xi (\bm{x}_i)^2$ as the average quadratic noise over the input data.
\end{defn}
We note the subtle difference of the symbols $\bm v(\alpha)$ and $\Hyp{v}$ but 
emphasize that $\bm v(\alpha)$ will always have the parenthesis with the input value, while $\Hyp{v}$ is static and fixed by the element of the concept class. 
To adapt to matrices $K_{ij}$ (with dimension $m_1 \times m_1$) and $K'_{ij}$ (with dimension $m_2 \times m_1$),
Definition~\ref{def:hypofunc} and Lemma~\ref{lem:hypomatrixerror} are stated in general terms.

\begin{defn} [Hypothesis function] \label{def:hypofunc}
Let $u:\mathbb{R} \to [0, 1]$ be an 
$L$-Lipschitz function.
Define the hypothesis function $g: \mathbb R^{N_2} \times \mathbb R^{N_1 \times N_2} \times [N_1] \to [0,1]$ as
\be
{g}(\mathbf{\alpha}, M, i) := u\left (\sum_{j=1}^{N_2} {\alpha}_j M_{ij}\right ).
\ee
\end{defn}
In Lemma~\ref{lem:hypomatrixerror}, we show that if we have a good enough estimation $\widetilde M$ for the matrix $M$,
then the estimated result $g(\mathbf{\alpha}, \widetilde M, i)$ is not too far from the exact value $g(\mathbf{\alpha}, M, i)$, with a dependence on $\Vert \alpha \Vert_1$. 

\begin{lemma}\label{lem:hypomatrixerror}
Let $M, \tilde M \in \mathbb R^{N_1 \times N_2}$ be matrices of dimension $N_1 \times N_2$.
Let $\epsilon \in (0,1)$.
Let $u: \mathbb{R} \to [0, 1]$ be $L$-Lipschitz.
If $\max_{i \in[N_1], j\in[N_2]}\vert M_{ij} - \tilde M_{ij}\vert \leq \epsilon$,
then for all $\alpha \in \mathbb R^{N_2}$,
we have 
\[ \max_{i \in [N_1]} \left \vert g(\alpha, M,i) - g(\alpha, \tilde M, i)\right \vert \le  L \epsilon\Vert \alpha \Vert_1. \]
\end{lemma}
\begin{proof}
For all $i \in [N_1]$,
\be
\vert g(\alpha, M,i) - g(\alpha, \tilde M, i)\vert &\leq& L \left \vert \sum_{j=1}^{N_2} {\alpha}_j (M_{ij}- \tilde M_{ij} ) \right \vert \\
& & (\textrm{by the $L$-Lipschitz conditon of $u$}) \nonumber\\
&\leq& L \epsilon \sum_{j=1}^{N_2} \left \vert {\alpha}_j \right \vert \\
& & (\textrm{by assumption}) \nonumber\\
& = &  L \epsilon \Vert \alpha \Vert_1.
\ee
\end{proof}

In the following lemma we show that if we update $\alpha^t$ according to 
the \textsc{Alphatron} algorithm, then
the max norm of $\alpha^t$ can be bounded in terms of $T$, $\lambda$ and $m_1$.
\begin{lemma}\label{lem:alphabound}
For arbitrary $K \in \mathbb R^{m_1\times m_1}$,  $y\in[0,1]^{m_1}$, and $\lambda \in \mathbb R_+$, if the initial vector is $\alpha^0 = \bm 0$,
then by performing the updates $\alpha^{t+1}_i \gets \alpha^{t}_i + \frac{\lambda}{m_1} (y_i - g( \alpha^t, K,i))$, for each entry
$T$ times, we have $\max_i \vert \alpha^T_i\vert \le \frac{T\lambda}{m_1}$.
\end{lemma}
\begin{proof}
We prove the statement by induction. The base case is obviously true.
Note that $y$ is from $[0,1]$ and the range of $g$ is also $[0,1]$.
Hence, 
\[ \left \vert {\alpha}^{t}_i \right\vert \le \left \vert{\alpha}^{t-1}_i \right \vert + \frac{\lambda}{m_1}\left \vert y_i - {g}(\alpha^{t-1},K,i)\right\vert
\le \frac{(t-1)\lambda}{m_1} + \frac{\lambda}{m_1} = \frac{t\lambda}{m_1}.\]
\end{proof}

We state the convergence result from the original work Ref.~\cite{pmlr-v99-goel19b}
in Lemma~\ref{lem:alphatron}
before proving our own Lemma~\ref{lemConvergenceModified}.
Intuitively, these lemmas prove that we indeed make progress towards the target by each iteration.
The norm $\Vert {\bm{v}}(\omega) - \textbf{v}\Vert^2_2$ measures the distance between the current vector and the target.
If we show that this value decreases as we run the algorithm and if we lower bound this value in term of the empirical error $\hat{\varepsilon}$ of the current vector,
then either we made progress, or the quality of the current vector is already good enough.

\begin{lemma}[Convergence of Algorithm~\ref{algAlphatron} from \cite{pmlr-v99-goel19b}]\label{lem:alphatron}
Consider Definition \ref{assume:alpha} and \ref{assume:param} for the setting and the algorithm parameters,
as well as the training labels $y \in [0,1]^{m_1}$,
and the kernel matrix $K \in [-1,1]^{m_1\times m_1}$.
For any vector $\omega \in \mathbb{R}^{m_1}$,
let $\omega' \in \mathbb{R}^{m_1}$ be the vector defined as
\[ \omega'_i := \omega_i + \frac{\lambda}{m_1} (y_i - g(\omega, K,i)). \]
Let $h$ be the hypothesis function defined as
$h(\bm{x}) = u(\langle \bm{v}(\omega), \psi_d(\bm{x}) \rangle)$,
and recall from Definition~\ref{def:hypovec} that $\eta = \Vert \Delta \Vert_2$.
If $\Vert {\bm {v}}(\omega) - \textbf{v}\Vert_2 \leq B$, for $B > 1$,
and $\eta < 1$, then
\[
\Vert {\bm{v}}(\omega) - \textbf{v}\Vert^2_2 - \Vert \bm v(\omega') - \textbf{v}\Vert^2_2 \geq \frac{1}{L^2}\hat{\varepsilon}(h) - A_{1},
\]
where by definition $A_1 :=  \frac{2}{L}\sqrt{\rho} + \frac{2B\eta}{L} + \frac{\eta^2}{L^2}  + \frac{2\eta}{L^2} $.
\end{lemma}

We claim the following modified convergence result for our Algorithm \ref{algAlphatronApprox}. The difference to the previous lemma is the appearance of a term $-\epsilon_I^2$ in the convergence bound. 
\begin{lemma} \label{lemConvergenceModified}
Consider  Definition \ref{assume:alpha} and \ref{assume:param} for the setting and the algorithm parameters,
as well as the training labels $y \in [0,1]^{m_1}$,
and the kernel matrix $K \in [-1,1]^{m_1\times m_1}$.
Let $\epsilon_I > 0$.
Suppose that $\Gamma_i$ is an estimation of $g(\alpha^t, K,i)$, for all $i \in [m_1]$,
such that
\[\max_{i \in [m_1]} \left \vert g(\alpha^t, K,i) - \Gamma_i \right\vert \le  L \epsilon_I.\]
For any vector $\omega \in \mathbb{R}^{m_1}$,
let $\omega' \in \mathbb{R}^{m_1}$ be the vector defined as
\[ \omega'_i := \omega_i + \frac{\lambda}{m_1} (y_i - g(\omega, K,i)), \]
and $\tilde{\omega} \in \mathbb{R}^{m_1}$ be the vector defined as
\[ \tilde{\omega}_i := {\omega}_i + \frac{\lambda}{m_1} (y_i - \Gamma_i).\]
Let $h$ be the hypothesis function defined as $h(\bm{x}) = u(\langle \bm{v}(\omega), \psi_d(\bm{x}) \rangle)$.
Let $A_1$ be defined as in Lemma~\ref{lem:alphatron}.
Recall that $\eta = \Vert \Delta \Vert_2$.
If $\Vert {\bm {v}}(\omega) - \textbf{v}\Vert_2 \leq B$, for $B > 1$,
and $\eta < 1$, then
\be
\Vert {\bm{v}}(\omega) - \textbf{v}\Vert^2_2 - \Vert \bm v(\tilde \omega) - \textbf{v}\Vert_2^2 \geq
\frac{1}{L^2}\hat{\varepsilon}(h)- A_{1} - \epsilon_I^2. 
\ee
\end{lemma}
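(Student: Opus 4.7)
The plan is to treat the perturbed update $\tilde\omega$ as a perturbation of the exact update $\omega'$ and reduce to Lemma \ref{lem:alphatron}. The pivotal quantity is
\[
E \,:=\, \bm{v}(\tilde\omega) - \bm{v}(\omega') \,=\, \frac{\lambda}{m_1}\sum_{i=1}^{m_1}\bigl(g(\omega,K,i) - \Gamma_i\bigr)\,\psi_d(\bm{x}_i).
\]
Using $|g(\omega,K,i) - \Gamma_i|\le L\epsilon_I$, the convention $\lambda = 1/L$ from Definition \ref{assume:param}, and $\|\psi_d(\bm{x}_i)\|_2\le 1$ (from the observations before Definition \ref{assume:alpha}), the triangle inequality gives $\|E\|_2 \le \epsilon_I$.

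Next I would split $\bm{v}(\tilde\omega) - \textbf{v} = (\bm{v}(\omega') - \textbf{v}) + E$ and polarize to obtain
\[
\|\bm{v}(\omega) - \textbf{v}\|_2^2 - \|\bm{v}(\tilde\omega) - \textbf{v}\|_2^2 \,=\, \bigl[\|\bm{v}(\omega) - \textbf{v}\|_2^2 - \|\bm{v}(\omega') - \textbf{v}\|_2^2\bigr] \,-\, 2\langle \bm{v}(\omega') - \textbf{v},E\rangle \,-\, \|E\|_2^2.
\]
Applying Lemma \ref{lem:alphatron} to the bracket yields a lower bound of $\frac{1}{L^2}\hat{\varepsilon}(h) - A_1$, and $\|E\|_2^2 \le \epsilon_I^2$ accounts exactly for the advertised loss.

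The technical step, which I expect to be the main obstacle, is showing that the cross term $-2\langle \bm{v}(\omega') - \textbf{v},E\rangle$ does not degrade the bound beyond $-\epsilon_I^2$. A black-box Cauchy--Schwarz bound using $\|\bm{v}(\omega') - \textbf{v}\|_2 \le B$ would only yield $O(B\epsilon_I)$, which is too crude. To attain the clean $-\epsilon_I^2$ target I would avoid treating Lemma \ref{lem:alphatron} as a black box and instead rerun its proof with the substitution $y_i - \Gamma_i = (y_i - g(\omega,K,i)) + (g(\omega,K,i) - \Gamma_i)$: the first summand regenerates the $\frac{1}{L^2}\hat{\varepsilon}(h) - A_1$ bound via the same monotonicity-plus-Lipschitz chain as in \cite{GK2017}, while the residual terms, each of size $O(L\epsilon_I)$, are absorbed via AM--GM ($2ab \le \alpha a^2 + b^2/\alpha$) into the positive $\frac{1}{L^2}\hat{\varepsilon}(h)$ slack and into $\|E\|_2^2 \le \epsilon_I^2$. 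The crucial auxiliary bounds are $\|E\|_2 \le \epsilon_I$ and $\|\bm{v}(\omega') - \bm{v}(\omega)\|_2 \le \lambda = 1/L$ (from the update rule and $y_i, g_i \in [0,1]$), which together keep every error contribution quadratic in $\epsilon_I$ once the internal structure of the update is exploited.
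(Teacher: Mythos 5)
Your diagnosis of the obstacle is correct and in fact sharper than the paper's own proof. The paper's argument at the corresponding point reads
\[
\Vert \bm{v}(\tilde\omega) - \textbf{v}\Vert_2^2 \;\leq\; \Vert \bm{v}(\tilde\omega) - \bm{v}(\omega')\Vert_2^2 + \Vert \bm{v}(\omega') - \textbf{v}\Vert_2^2,
\]
invoked as ``the triangle inequality.'' But the triangle inequality applies to norms, not to squared norms: expanding $\Vert E + (\bm{v}(\omega')-\textbf{v})\Vert_2^2$ (with $E := \bm{v}(\tilde\omega)-\bm{v}(\omega')$) introduces the cross term $2\langle E,\bm{v}(\omega')-\textbf{v}\rangle$, which is exactly the quantity you flagged as the ``main obstacle.'' This cross term is bounded only by $2\Vert E\Vert_2\,\Vert \bm{v}(\omega')-\textbf{v}\Vert_2$, and since from Lemma~\ref{lem:alphatron} one can extract only $\Vert\bm{v}(\omega')-\textbf{v}\Vert_2 \le \sqrt{B^2 + A_1}$, you get a contribution of order $B\epsilon_I$. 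There is no sign constraint on $E$ (the estimation errors $g(\omega,K,i)-\Gamma_i$ can have arbitrary signs), so the cross term cannot be discarded. Your instinct that a black-box Cauchy--Schwarz yields only $O(B\epsilon_I)$ is correct, and the paper's proof as written does not supply a mechanism to avoid it.

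What I am less convinced by is your proposed repair. You suggest reopening the proof of Lemma~\ref{lem:alphatron} and absorbing the residuals via AM--GM into the positive $\frac{1}{L^2}\hat{\varepsilon}(h)$ slack and into $\Vert E\Vert_2^2$. But after expanding $\tilde p := \bm{v}(\tilde\omega)-\bm{v}(\omega) = p + E$ (with $p := \bm{v}(\omega')-\bm{v}(\omega)$), the offending residual is $-2\langle \bm{v}(\omega)-\textbf{v}, E\rangle$, which lives on the $B$-scale; AM--GM of the form $2ab \le \alpha a^2 + b^2/\alpha$ either leaves a term $\propto \epsilon_I^2/\alpha$ that grows when $\alpha$ shrinks, or a term $\propto B^2\alpha$ that cannot be absorbed into $A_1$ or $\hat{\varepsilon}(h)$ without changing their definitions. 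So the $B\epsilon_I$ scale does not become quadratic in $\epsilon_I$ by this route alone. As far as I can see, the honest statement of the lemma should carry an extra loss of order $\sqrt{B^2+A_1}\,\epsilon_I$ (or simply $B\epsilon_I + \epsilon_I/L$), which would propagate a $B$-dependent factor into the downstream error bound of Theorem~\ref{thm:alpha} and its corollaries. In short: you found a genuine gap, your diagnosis is exactly right, but neither your proposed repair nor the paper's proof actually establishes the stated $-\epsilon_I^2$ bound.
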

\begin{proof}
Recall from Definition~\ref{def:hypovec} that $\bm{v}(\alpha) = \sum_{i=1}^{m_1} \alpha_i \psi(\bm{x}_i)$.
Then we have
\be
\Vert \bm v(\tilde \omega) - \bm{v}( \omega')\Vert_2^2 &=&
\frac{\lambda^2}{m_1^2}  \left \Vert \sum_{j=1}^{m_1} ( \Gamma_j - g(\omega,K,j) )\psi(\bm{x}_j)\right \Vert_2^2 \\
& & \textrm{(expand the definition of \textbf{v}, } \omega'\textrm{, and } \tilde{\omega}\textrm{)} \nonumber \\
&\leq& \frac{\lambda^2}{m_1^2}  \left ( \sum_{j=1}^{m_1} \left \vert \Gamma_j - g(\omega,K,j) \right \vert \Vert \psi(\bm{x}_j)\Vert_2 \right)^2 \\
& & \textrm{(by triangle's inequality)} \nonumber \\
&\leq& \frac{\lambda^2 L^2 \epsilon_I^2 }{m_1^2} m_1^2  \\
& & \textrm{(by assumptions)} \nonumber \\
&=& \epsilon_I^2 .
\ee
Therefore, using the triangle inequality, we can deduce that
\be
\Vert \bm{v}(\tilde {\omega}) - \Hyp{v}\Vert_2^2 &\leq& \Vert \bm{v}(\tilde{\omega}) - \bm{v}( \omega')\Vert_2^2 +
\Vert \bm {v}(\omega') - \Hyp{v}\Vert_2^2 \\
&\leq& \epsilon_I^2 + \Vert \bm{v}(\omega') - \Hyp{v}\Vert_2^2. \label{eq:lemConvMod}
\ee
Note that except for the vector $\tilde{\omega}$ and the related conditions, Lemma~\ref{lemConvergenceModified} has
the same settings as Lemma~\ref{lem:alphatron}.
Thus by the conclusion of Lemma~\ref{lem:alphatron},
we can lower bound $\Vert {\bm{v}}(\omega) - \Hyp{v}\Vert^2_2 - \Vert \bm{v}(\omega') - \Hyp{v}\Vert_2^2$ by
$\frac{1}{L^2} \hat{\varepsilon}(h) - A_1$.
Together with equation~(\ref{eq:lemConvMod}), we obtain the required bound.
\end{proof}

In the last step of Algorithm \ref{algAlphatronMain},
we pick the hypothesis with
the minimum $\hat{\err}$ value. We also lose accuracy here as we use entries from the approximation of the matrix $K'$.
In Lemma~\ref{lemma:ValidationError}, we show that these errors are acceptable.
\begin{lemma}
\label{lemma:ValidationError}
Consider the training data samples $(\bm{x}_i, y_i) \in \mathbb{B}^{n} \times [0, 1]$, for $i \in [m_1]$,
validation data samples $(\bm{a}_i, b_i) \in \mathbb{B}^{n} \times [0, 1]$, for $i \in [m_2]$,
the corresponding kernel matrix $K' \in [-1,1]^{m_2 \times m_1}$ where
$K'_{ij} = \mathcal{K}_d(\bm{a}_i, \bm{x}_j)$, and the vectors $\alpha^t \in \mathbb R^{m_1}$, for $t \in [T]$.
Let $\epsilon_I > 0$.
Suppose that $\Gamma^t_i$ is an
estimation of $g(\alpha^t, K',i)$, for all $t \in [T]$ and $i \in [m_2]$,
such that
\[ \max_{i \in [m_2]} \left \vert g(\alpha^t, K',i) - \Gamma^t_i \right\vert \le  L \epsilon_I.\]
We define the hypothesis functions $h^t$ as $h^t (\bm{x}) = u(\langle \bm{v}(\alpha^t), \psi_d(\bm{x}) \rangle)$.
Let $$\tilde t = \arg\min_{t \in [T]} \left\{ \frac{1}{m_2} \sum_{i=1}^{m_2} (\Gamma^{t}_i - b_i)^2 \right\}$$
and let $t' = \arg\min_{t \in [T]} \hat{\err}(h^t)$.
Then
\be
\hat {\err}(h^{\tilde t}) - \hat{\err}(h^{t'})  \in \Ord{L\epsilon_I}.
\ee
\end{lemma}
\begin{proof}
By Definition~\ref{def:hypofunc} and the assumption on the kernel matrix $K'$,
we obtain
\be
h^t(\bm{a}_i) = u(\langle \bm{v}(\alpha^t), \psi_d(\bm{a}_i) \rangle) = g(\alpha^t, K', i). \label{eq:lemValidError}
\ee
We have $m_2$ samples for validation.
Recall the definition of the empirical error \\ $\hat{\err}(h) = \frac{1}{m_2}\sum_{i=1}^{m_2} (h(\bm{a}_i) - b_i)^2$ in the Preliminary.
For fixed $t$, 
\be
\left \vert \hat{\err}(h^t) - \frac{1}{m_2} \sum_{i=1}^{m_2} (\Gamma^t_i - b_i)^2 \right\vert & = & \frac{1}{m_2}  \left \vert \sum_{i=1}^{m_2} (g(\alpha^t, K', i)^2 - (\Gamma^t_i)^2 - 2b_i (g(\alpha^t, K', i) - \Gamma^t_i ) ) \right \vert\\
& & \textrm{(by empirical error with respect to $(\bm{a}_i, b_i)$ and equation~(\ref{eq:lemValidError}))} \nonumber \\
& \le & \max_{i \in [m_2]} \left \vert g(\alpha^t, K', i)^2 - (\Gamma^t_i)^2 - 2b_i (g(\alpha^t, K', i) - \Gamma^t_i ) \right \vert\\
& & \textrm{(the maximum is at least the average)} \nonumber \\
& \le & \max_{i \in [m_2]} \left \vert (g(\alpha^t, K', i) + \Gamma^t_i - 2b_i )(g(\alpha^t, K', i) - \Gamma^t_i )\right \vert\\
& \le & \max_{i \in [m_2]} \left \vert g(\alpha^t, K', i) + \Gamma^t_i - 2b_i \right \vert \cdot \vert g(\alpha^t, K', i) - \Gamma^t_i \vert\\
& \le & \max_{i \in [m_2]} 2\left \vert g(\alpha^t, K', i) - \Gamma^t_i \right\vert\\
& & \left ( g(\alpha^t, K', i), \Gamma^t_i, b_i \textrm{ are in range } [0, 1] \right ) \nonumber \\
& \le & 2 L \epsilon_I. \label{eq:lemmaVE}\\
& & \textrm{(by assumption)} \nonumber
\ee
Hence, we have both
\be
\hat{\err}(h^{t'}) + 2L\epsilon_I \ge \frac{1}{m_2} \sum_{i=1}^{m_2} (\Gamma^{t'}_i - b_i)^2, \label{eq:lemValidError1}
\ee
when $t = t'$ in (\ref{eq:lemmaVE}), and
\be
\frac{1}{m_2} \sum_{i=1}^{m_2} (\Gamma^{\tilde t}_i - b_i)^2 \ge \hat{\err} (h^{\tilde t}) - 2L\epsilon_I,\label{eq:lemValidError2}
\ee
when $t = \tilde{t}$ in (\ref{eq:lemmaVE}). And by the minimization of $\tilde t$,
\be
\frac{1}{m_2} \sum_{i=1}^{m_2} (\Gamma^{t'}_i - b_i)^2 \ge \frac{1}{m_2} \sum_{i=1}^{m_2} (\Gamma^{\tilde t}_i - b_i)^2.\label{eq:lemValidError3}
\ee
From the last three equations by transitivity 
we deduce the required $\hat {\err}(h^{\tilde t}) - \hat{\err}(h^{t'})\in \Ord{L\epsilon_I}$.
\end{proof}

Now, we are going to prove the main theorem.
As the original work, the theorem requires a generalization bound which involves the Rademacher complexity of the function class considered here. The required result is Theorem~\ref{thm:Rademacher} in the Appendix \ref{appRade}.

\begin{proof}[Proof of Theorem \ref{thm:alpha}]
We first consider the success probability of the approximation part.
According to Algorithm~\ref{algAlphatronApprox},
we estimate each inner product with success probability $1-\delta_K/(m_1^2 + m_1m_2)$.
A union bound for these $m_1^2 + m_1 m_2$ estimations gives the total success probability to be at least $1-\delta_K = 1 - \delta$.
Then it is sufficient to show that the main body itself succeeds with probability at least $1 - \delta$ and indeed produces a good enough hypothesis.

The remaining proof consists of three parts. In the first part, we show that there exists $t^*  \in [T]$ such that the empirical error of $h^{t^*}$ is good enough.
In the second part, we show using the Rademacher complexity that for any specific hypothesis in the concept class we introduced, the generalization error is not very far from empirical error. 
In the third part, we show that by using $m_2$ additional samples to validate all generated hypotheses, as done in the algorithm, we are able to find a hypothesis with similar error as $h^{t^*}$.

First, we show that there exists a good enough hypothesis according to the empirical error.
Recall the notations $\Delta$ and $\rho$ in Definition~\ref{def:hypovec}, and
let $\eta =\Vert \Delta \Vert_2$.
Ref.~\cite{pmlr-v99-goel19b} shows that $\eta \le \frac{1}{\sqrt{m_1}} (1 + \sqrt{2 \log (1/\delta)})$, and $\rho \le \sqrt{\epsilon} + \Ord{\sqrt[4]{\frac{\log (1/\delta)}{m_1}}}$ by Hoeffding's inequality.
Since $\eta$ and $\rho$ only depend on the setting in Definition \ref{assume:alpha}, the modification done in Algorithm~\ref{algAlphatronApprox} compared to Algorithm~\ref{algAlphatronPre} keeps the bounds for $\eta$ and $\rho$ the same. 
Hence, we use the same bounds in this proof.

In the Algorithm~\ref{algAlphatronMain}, which is used in Algorithm~\ref{algAlphatronApprox},
assume we are presently at the iteration $t$ for computing the vector $\widetilde \alpha^{t+1}$ from $\widetilde \alpha^t$.
In this proof, we use the tilde above the $\alpha$ to emphasize that we indeed construct a different sequence $(\widetilde \alpha^t)_{t\in [T]}$  compared to the sequence $(\alpha^t)_{t\in [T]}$ of  Algorithm~\ref{algAlphatronMain} with the exact kernel matrices.
One of the following two cases is satisfied,
\be
\textbf{ Case 1} : \left\Vert\textbf{v}\left (\widetilde \alpha\super{t}\right) - \Hyp{v} \right\Vert_2^2 - \left\Vert\textbf{v}\left(\widetilde\alpha\super{t+1}\right)- \Hyp{v}\right\Vert_2^2 &>& \frac{ B \eta}{L}, \label{eqConvergenceCase1}\\
\textbf{ Case 2}: \left\Vert \textbf{v}\left(\widetilde \alpha\super{t}\right) - \Hyp{v}\right\Vert_2^2 - \left\Vert\textbf{v}\left(\widetilde\alpha\super{t+1}\right) - \Hyp{v}\right\Vert_2^2 &\leq& \frac{ B \eta}{L}.
\label{eqConvergenceCase2}
\ee
Let $t^*$ be the first iteration where Case 2 holds. We show that such an iteration exists.
Assume the contradictory, that is, Case 2 fails for each iteration.
Since $\Vert \textbf{v}(\widetilde \alpha\super {0}) - \Hyp{v}\Vert^2_2 = \Vert \mathbf{0} - \Hyp{v}\Vert^2_2 \leq B^2$ by assumption, however,
\be
B^2 &\geq& \left\Vert\textbf{v}\left(\widetilde \alpha\super{0}\right) - \Hyp{v}\right\Vert_2^2  \geq \left \Vert\textbf{v}\left(\widetilde \alpha\super{0}\right) - \Hyp{v}\right\Vert_2^2 - \left \Vert\textbf{v}\left(\widetilde \alpha\super{k}\right) - \Hyp{v}\right \Vert_2^2 \\
&=& \sum_{t=0}^{k-1} \left (\left\Vert\textbf{v}\left(\widetilde \alpha\super{t}\right) - \Hyp{v}\right\Vert_2^2 - \left\Vert\textbf{v}\left(\widetilde\alpha\super{t+1}\right)- \Hyp{v}\right\Vert_2^2 \right) \geq \frac{ kB\eta}{L},
\ee
for $k$ iterations.
Hence, in at most $\frac{BL}{\eta}$ iterations Case 1 will be violated and Case 2 will have to be true.
By Assumption~\ref{assume:param} and the bound on $\eta$, we have that
$T \ge \frac{BL}{\eta}$, and then $t^* \in [T]$ must exist such that Case 2 is true.
For all $t \in [T]$, define the hypothesis function $h^t$ as $h^t(\bm{x}) = u(\langle \bm{v}(\widetilde{\alpha}^t), \psi_d(\bm{x}) \rangle)$.
By Theorem~\ref{thmApprox}, we have that $\max_{ij} \left \vert K_{ij} - \widetilde{K}_{ij} \right \vert \le \epsilon_K$.
Define the shorthand $\Gamma_i := g(\widetilde{\alpha}^{t^*}, \widetilde K, i)$, with the hypothesis function from Definition \ref{def:hypofunc}.
Then, with Lemma~\ref{lem:hypomatrixerror},
we obtain $\max_{i \in [m_1]} \left \vert g(\widetilde{\alpha}^{t^*}, K,i) - \Gamma_i\right \vert \le  L \epsilon_K \left\Vert \widetilde{\alpha}^{t^*} \right\Vert_1$.
Then, by Lemma~\ref{lemConvergenceModified} with $\omega = \widetilde{\alpha}^{t^*}$ and $\epsilon_I = \epsilon_K \Vert \widetilde{\alpha}^{t^*} \Vert_1$,
we obtain
\be
\left \Vert \textbf{v}\left (\widetilde \alpha\super{t^*}\right ) - \Hyp{v}\right \Vert_2^2 - \left\Vert\textbf{v}\left(\widetilde\alpha\super{t^*+1}\right) - \Hyp{v}\right\Vert_2^2
\geq \frac{1}{L^2}\hat{\varepsilon}(h^{t^\ast})- A_{1} - \epsilon_K^2 \left\Vert \widetilde \alpha^{t^\ast} \right\Vert_1^2.
\ee
Note that Case 2 holds for the iteration $t^*$.
Together with the upper bound in Eq.~(\ref{eqConvergenceCase2}),   
it holds by transitivity that
\be
\frac{ B \eta }{L} \geq \frac{1}{L^2} \hat{\varepsilon}\left (h^{t^\ast}\right)- A_{1} - \epsilon_K^2 \left\Vert \widetilde \alpha^{t^\ast} \right\Vert_1^2,
\ee
which implies that 
\be
\hat \varepsilon\left(h^{t^\ast}\right) \leq B L \eta + L^2 A_{1} + L^2 \epsilon_K^2 \left\Vert \widetilde \alpha\super{t^\ast} \right\Vert_1^2.
\ee
Recall the definition of $A_2 \triangleq L\sqrt{\epsilon} + L \zeta \sqrt[4]{\frac{\log(1/\delta)}{m_1}} + BL\sqrt{\frac{\log(1/\delta)}{m_1}}$.
Using the known bounds for $\eta$ and $\rho$, we have 
\be
\hat{\varepsilon}\left (h^{t^*}\right) \in \Ord{A_2+
L^2 \epsilon_K^2 \left \Vert \widetilde \alpha\super{t^\ast}\right \Vert_1^2 }.
\ee
The last term can be bounded  
as
\be
L^2 \epsilon_K^2 \left \Vert \widetilde \alpha\super{t^\ast} \right\Vert_1^2 \leq T^2 \epsilon_K^2 , 
\ee
where we use $\Vert \widetilde \alpha\super{t^\ast} \Vert_1 \leq T/L$ from Lemma \ref{lem:alphabound}.

As a next step, we would like to bound $\varepsilon(h^{t^*})$ in terms of $\hat{\varepsilon}(h^{t^*})$.
An argument based on the Rademacher complexity gives us the same bound as in the original
work \cite{pmlr-v99-goel19b}.
Define a function class
$\mathcal{Z} = \{ \textbf{x} \rightarrow u(\langle \bm{z}, \psi_d(\bm{x})\rangle) - \mathbb{E}_y [y | \bm{x}] : \Vert \bm{z} \Vert_2 \le 2B \}$.
Ref.~\cite{pmlr-v99-goel19b} shows that the Rademacher complexity of $\mathcal{Z}$ is $\mathcal{R}_m (\mathcal{Z}) \in \Ord{BL\sqrt{1/m}}$.

Let us show that $\left \Vert\textbf{v}\left (\widetilde \alpha\super{t^*}\right) \right\Vert_2$ satisfies the norm bound $2B$, same as the $\bm z$ in class $\mathcal Z$. Note that in the first $t^* - 1$ iterations, Case 1 holds. By Eq.~(\ref{eqConvergenceCase1}), we have 
$\left \Vert\textbf{v}\left(\widetilde \alpha\super{t}\right) - \Hyp{v}\right\Vert_2^2 - \left\Vert\textbf{v}\left(\widetilde\alpha\super{t+1}\right)- \Hyp{v}\right\Vert_2^2 > \frac{ B \eta}{L} \ge 0$,
for $t \in [t^* - 1]$.
In other words, the distance between $\textbf{v}\left(\widetilde \alpha\super{t}\right)$ and $\Hyp{v}$ decreases when $t$ increases in $[t^* - 1]$.
Thus, we conclude that
\be
\left \Vert\textbf{v}\left(\widetilde \alpha\super{t^*}\right) - \Hyp{v}\right\Vert_2^2 \le \left\Vert\textbf{v}(\widetilde\alpha\super{0})- \Hyp{v}\right\Vert_2^2 = \left\Vert \Hyp{v} \right\Vert_2^2 \le B^2,
\ee
and hence by the triangle inequality, $\Vert\textbf{v}(\widetilde \alpha\super{t^*}) \Vert_2 \le 2B$.
Denote $f(\bm{x})$ as $h^{t^*}(\bm{x}) - \mathbb{E}_y [y | \bm{x}]$.
Since $h^{t^*}(\bm{x}) = u\left(\left \langle \textbf{v}\left(\widetilde \alpha\super{t^*}\right), 
\psi_d(\bm{x})\right \rangle \right)$,
the function $f(\bm{x})$ 
is an element of $\mathcal{Z}$.
Define the loss function $\mathcal{L} : [0,1] \times [0,1] \to [-1, 1]$ as $\mathcal{L} (a, a') = a^2$ which ignores the second argument.
According to Theorem~\ref{thm:Rademacher} in the Appendix \ref{appRade}, with $ \mathcal{J}(f, \mathcal{D}) =\varepsilon(h^{t^*})$
and $\hat{\mathcal{J}}(f, S) = \hat{\varepsilon}(h^{t^*})$ and $b = 1$,
and with probability $1- \delta$, 
\be
\varepsilon\left (h^{t^*}\right) \leq \hat{\varepsilon}\left(h^{t^*}\right) + \Ord{BL \sqrt{\frac{1}{m_1}}
+  \sqrt{\frac{\log (1/\delta)}{m_1}}} \in \Ord{A_2 + T^2 \epsilon_K^2}.
\ee
The above proof shows the existence of a good hypothesis $h^{t^*}$ for some $t^* \in [T]$.
We define the index of the best hypothesis as
\be t' := \arg\min_{t\in[T]} \varepsilon(h^t), \ee
which immediately implies that
\be
\varepsilon(h^{t'}) \le \varepsilon(h^{t^*}) \in \Ord{A_2 + T^2 \epsilon_K^2}.  \label{eq:conv1}
\ee
In the last part of this proof, we show that at Line~\ref{algLine:valid} of \textsc{Alphatron\_with\_Kernel}, 
we indeed find and output a good enough hypothesis (though this hypothesis may be different from the hypothesis derived from the output of \textsc{Alphatron\_with\_Kernel}).
Our goal is to find a hypothesis $h^t$ which minimizes $\varepsilon(\cdot)$.
However, $\varepsilon(\cdot)$ is hard to compute according to the definition.
From Eq.~(\ref{eq:constErr}), we have that for arbitrary hypotheses $h_1, h_2$,
\be
\varepsilon(h_1) - \varepsilon(h_2) = \err(h_1) - \err(h_2). \label{eq:conv2}
\ee
Hence, we may find the best hypothesis by minimizing $\err(\cdot)$ instead of $\varepsilon(\cdot)$.
Formally, it holds that
\be t' = \arg\min_{t\in[T]} \err(h^t). 
\ee
As we do not know the distribution $\mathcal{D}$, we are unable to compute $\err(\cdot)$.
However, it is possible to compute the empirical version $\hat{\err}(\cdot)$.
In Algorithm \ref{algAlphatronMain}, we use a fresh sample set $(\bm{a}_i, b_i)$ of size $m_2$ as the validation data set,
and we compute the empirical error $\hat{\err}(h^t)$, for each $h^t$, on this data set.
Let $\epsilon' = 1/\sqrt{m_1}$.
For fixed $t$, since $\hat{\err}(h^t)$ is in $[0,1]$, by a Chernoff bound on $m_2 \in \Ord {\log(T/\delta)/(\epsilon')^2}$ samples,
with probability $1 - \delta/T$, we obtain
\be \left \vert \err\left(h^t\right) - \hat{\err}\left(h^t\right)\right \vert \le \epsilon'. \label{eq:errorChernoff} \ee
Then by the union bound, with probability $1 - \delta$, the inequality~(\ref{eq:errorChernoff}) holds simultaneously for all $t \in [T]$. 
Since $\epsilon' \in \Ord{BL\sqrt{\frac{\log 1/\delta}{m_1}}}$, we obtain the bound
\be \left \vert \err\left(h^t\right) - \hat{\err}\left(h^t\right) \right \vert \in \Ord{A_2}, \label{eq:conv3}
\ee
for all $t \in [T]$.	
However, in  Algorithm~\ref{algAlphatronApprox}, to find the hypothesis with the minimum empirical error, we use the estimated kernel matrix $\widetilde{K}'$ instead of the exact inner products. 
Thus, we have additional errors in computing $\hat{\err}(h^t)$.
Let 
\be \widetilde{t} = \arg\min_{t \in [T]} \left\{ \frac{1}{m_2} \sum_{i=1}^{m_2} \left (u\left(\sum_{j=1}^{m_1} \alpha^t_i \cdot \widetilde{K}'_{ij}\right) - b_i\right)^2 \right\}
= \arg\min_{t \in [T]} \left\{ \frac{1}{m_2} \sum_{i=1}^{m_2} \left (g\left (\alpha^t, \widetilde{K}', i\right) - b_i\right )^2 \right\} \ee
be the index of the hypothesis of the output
in Algorithm~\ref{algAlphatronApprox}.
As before, the exact kernel matrix $K'$ is $K'_{ij} = \mathcal{K}_d(\bm{a}_i, \bm{x}_j)$.
According to Theorem \ref{thmApprox},
we have $\left \vert \widetilde{K}'_{ij} - K'_{ij} \right\vert \le \epsilon_K$.
Via Lemma~\ref{lem:hypomatrixerror}, we obtain
$\vert g(\alpha^t, \widetilde{K}', i) - g(\alpha^t, K', i) \vert \le L \epsilon_K \Vert \alpha^t \Vert_1$.
By using
$\epsilon_I = \epsilon_K \Vert \alpha^t \Vert_1$ and $\Gamma^t_i = g(\alpha^t, \widetilde{K}', i)$, the upper bound on the estimation error $\vert \hat{\err}(h^{\widetilde{t}}) - \hat{\err}(h^{t'}) \vert$ is shown to be in Lemma~\ref{lemma:ValidationError}, as
\be \left \vert \hat{\err}\left(h^{\widetilde{t}}\right) - \hat{\err}\left(h^{t'}\right) \right\vert \le L \epsilon_I. \label{eq:conv4} \ee 
We have $\Vert \alpha^t \Vert_1 \le T/L$. Thus $\epsilon_I = \epsilon_K \Vert \alpha^t \Vert_1 \le T \epsilon_K/L$.
From Eq.~(\ref{eq:conv3}) and Eq.~(\ref{eq:conv4}), we obtain
\be 
\left \vert \err\left(h^{\widetilde{t}}\right) - \err\left(h^{t'}\right) \right \vert \in \Ord{L \epsilon_I + A_2} \subseteq \Ord{T \epsilon_K + A_2 }. 
\ee
With Eq.~(\ref{eq:conv2}), we obtain $\left \vert \varepsilon\left(h^{\widetilde{t}}\right) - \varepsilon\left(h^{t'}\right)\right \vert \in \Ord{T \epsilon_K + A_2}$. 
Hence, we have
$\varepsilon\left(h^{\widetilde{t}}\right) \le \varepsilon\left(h^{t'}\right) + \Ord{T \epsilon_K + A_2}$. 
From Eq.~(\ref{eq:conv1}), $\varepsilon(h^{t'})$ is bounded by $\Ord{A_2 + T^2 \epsilon_K^2}$. Thus, 
\be
\varepsilon(h^{\widetilde t}) \in \Ord{A_2 + T^2\epsilon_K^2 + T\epsilon_K}.
\label{eq:conv5}
\ee
The union bound of the probabilistic steps of estimating the full kernel matrix, the Rademacher generalization bound, and the Chernoff bound leads to a total success probability of $1-3\delta$.
\end{proof}

Since by definition $\varepsilon(h) \le 1$, for any hypothesis $h$,
it is reasonable to assume that $A_2 \le 1$ if we want a useful bound.
Then, by setting $\epsilon_K = \frac{A_2}{T}$, we have $\epsilon_K T \le 1$.
Thus, $\Ord{\epsilon_K^2 T^2} \subseteq \Ord{\epsilon_K T}$,
and we can simplify the right hand side of Eq.~(\ref{eq:conv5}) to $\Ord{A_2 + \epsilon_K T}$.
From the runtime analysis in Theorem~\ref{thmApprox} and 
the accuracy analysis in Theorem~\ref{thm:alpha},
we have the following corollary. 
\begin{corollary} \label{corApproxLearning}
In the same setting as Theorem~\ref{thm:alpha}, if $L\sqrt \epsilon \le 1$,
and we set $\epsilon_K = \frac{L\sqrt \epsilon}{T} $, 
then Algorithm~\ref{algAlphatronApprox} with probability
at least $1 - 3\delta$ outputs $\alpha^{t_{\rm out}}$ which describes the hypothesis $h^{t_{\rm out}}(\bm{x}) := u\left (\sum_{i=1}^{m_1} \alpha_i^{t_{\rm out}} \mathcal{K}_d(\mathbf{x}, \mathbf{x}_i)\right)$ such that
\[
\varepsilon\left(h^{t_{\rm out}}\right) \in \Ord{A_2},
\]
with a run time of 
\[ \tOrd{ mn + \frac{  m^2 d^2 T^2}{L^2 \epsilon}\log \frac{1}{\delta} + Tm^2 }.\]
\end{corollary}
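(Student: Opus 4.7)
The plan is to derive the corollary directly by substituting the prescribed choice $\epsilon_K = L\sqrt{\epsilon}/T$ into the two previous main results: the accuracy bound of Theorem \ref{thm:alpha} and the run time bound of Theorem \ref{thmApprox}. Since both theorems have already been established under the same parameter regime, no new convergence analysis is required; the work is essentially simplification under the additional hypothesis $L\sqrt{\epsilon} \le 1$, together with a reminder of what $A_2$ contains.

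For the error bound, I would start from Theorem \ref{thm:alpha}, which guarantees $\varepsilon(h^{t_{\rm out}}) \in \Ord{A_2 + \epsilon_K^2 T^2 + \epsilon_K T}$ with probability $1 - 3\delta$. Substituting $\epsilon_K = L\sqrt{\epsilon}/T$ gives $\epsilon_K T = L\sqrt{\epsilon}$ and $\epsilon_K^2 T^2 = L^2 \epsilon$. Under the hypothesis $L\sqrt{\epsilon} \le 1$, we have $L^2 \epsilon \le L\sqrt{\epsilon}$, so both error contributions collapse to $\Ord{L\sqrt{\epsilon}}$. Finally, since $A_2 = L\sqrt{\epsilon} + L\zeta\sqrt[4]{\log(1/\delta)/m_1} + BL\sqrt{\log(1/\delta)/m_1}$ already contains $L\sqrt{\epsilon}$ as its first summand, we absorb both $\epsilon_K T$ and $\epsilon_K^2 T^2$ into $\Ord{A_2}$, yielding the desired bound $\varepsilon(h^{t_{\rm out}}) \in \Ord{A_2}$.

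For the run time, I would invoke Theorem \ref{thmApprox} directly. The pre-computation phase (lines 2--11) runs in time $\tOrd{mn + m^2 d^2 \epsilon_K^{-2} \log(1/\delta_K)}$, and with the choices $\epsilon_K = L\sqrt{\epsilon}/T$ and $\delta_K = \delta$ the second term becomes $\tOrd{m^2 d^2 T^2/(L^2\epsilon) \cdot \log(1/\delta)}$. The subsequent call to Algorithm \ref{algAlphatronMain} contributes an additive $\Ord{Tm^2}$ by Theorem \ref{thmVanilla}. Summing these produces the claimed $\tOrd{mn + m^2 d^2 T^2/(L^2 \epsilon) \log(1/\delta) + Tm^2}$.

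There is no real obstacle here: the corollary is a packaging step, and the only mild care needed is verifying that the hypothesis $L\sqrt{\epsilon} \le 1$ is what allows us to dominate the quadratic term $\epsilon_K^2 T^2$ by the linear term $\epsilon_K T$, so that the two additive overheads in Theorem \ref{thm:alpha} both disappear into $A_2$. Once that observation is made, the statement of the corollary is immediate from combining the two preceding theorems.
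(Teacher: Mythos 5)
Your proposal is correct and follows essentially the same route as the paper: substitute $\epsilon_K = L\sqrt{\epsilon}/T$ into the error bound of Theorem~\ref{thm:alpha} and the run time of Theorem~\ref{thmApprox}, use $L\sqrt{\epsilon} \le 1$ to dominate $\epsilon_K^2 T^2$ by $\epsilon_K T$, and absorb both into $\Ord{A_2}$ since $L\sqrt{\epsilon}$ is a summand of $A_2$. The paper's prose phrases the domination step slightly differently (first contemplating $\epsilon_K = A_2/T$ under $A_2 \le 1$ before stating the corollary with $\epsilon_K = L\sqrt{\epsilon}/T$), but the substance is identical and your version is if anything cleaner since it works directly with the parameter choice appearing in the corollary.
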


\section{Classical sampling}
\label{appClassical}
The next facts discuss the construction of a data structure to sample from a vector and the next lemmas discuss the approximation of an inner product of two vectors by sampling. Both  $\ell_1$ and $\ell_2$ cases are required in this work. 
The $\bm{SQ}$ label can be understood as ``sample query". The arithmetic model allows us to assume infinite-precision storage of the real numbers. 
\begin{fact} [$\ell_1$-sampling \cite{Vos91,Wal74}]\label{factSamplingl1}
Given an $n$-dimensional vector $\bm{u} \in \mathbb R^n$,
there exists a data structure to sample an index $j\in[n]$ with probability
$\vert u_j\vert/\Vert \bm{u}\Vert_1$ which can be constructed in time $\tOrd{n}$. One sample can be obtained in time $\Ord{\log n}$. We call this data structure $\bm{SQ1}(\bm u, n)$.
\end{fact}
\begin{fact} [$\ell_2$-sampling \cite{Vos91,Wal74}]\label{factSamplingl2}
Given an $n$-dimensional vector $\bm{u} \in \mathbb R^n$,
there exists a data structure to sample an index $j\in[n]$ with probability $u_j^2/\Vert \bm{u}\Vert^2_2$
which can be constructed in time $\tOrd{n}$. One sample can be obtained in time $\Ord{\log n}$. 
We call this data structure $\bm{SQ2}(\bm u, n)$.
\end{fact}
Next, we show the estimation of inner products via sampling.
The number of samples scales with $1/\epsilon^2$ classically, in contrast to using quantum 
amplitude estimation which scales with $1/\epsilon$.
Lemma \ref{lemmaSamplingl1} is adapted from \cite{Tang2018} and Lemma \ref{lemmaSamplingl2} is taken directly from \cite{Tang2018}.
\begin{lemma}[Inner product with $\ell_1$-sampling]
\label{lemmaSamplingl1}
Let $\epsilon,\delta \in (0,1)$. Given query access to $\bm{v} \in\mathbb R^n$ and
$\bm{SQ1}(\bm u, n)$ access to $\bm{u} \in\mathbb R^n$,
we can determine $ \bm{u}\cdot \bm{v}$ to additive error $\epsilon$ with success probability at least
$1-\delta$ with $\Ord{\frac{\Vert \bm{u} \Vert_{1}^2 \Vert \bm{v} \Vert_{\max}^2}{\epsilon^2} \log \frac{1}{\delta}}$
queries and samples, and $\tOrd{\frac{\Vert \bm{u} \Vert_{1}^2 \Vert  \bm{v}\Vert_{\max}^2}{\epsilon^2} \log \frac{1}{\delta}}$ time complexity.
\end{lemma}
\begin{proof}
Define a random variable $Z$ with outcome ${\rm sgn}(u_j) \Vert \bm{u}\Vert_1 v_j$ with probability $\vert u_j\vert/\Vert \bm{u}\Vert_1$.
Note that $\mathbb E[Z] = \sum_j {\rm sgn}(u_j) \Vert \bm{u}\Vert_1 v_j \vert u_j\vert /\Vert \bm{u}\Vert_1 = \bm{u} \cdot \bm{v}$.
Also, $\mathbb V[Z] \leq \mathbb E[Z^2] = \sum_j \Vert \bm{u}\Vert_1^2 v_j^2 \vert u_j\vert/\Vert \bm{u}\Vert_1 \leq
\Vert \bm{u}\Vert_1^2 \Vert  \bm{v} \Vert_{\max}^2$.
Take the median of $6 \log 1/\delta$ evaluations of the mean of
$9\Vert \bm{u}\Vert_1^2 \Vert  \bm{v} \Vert_{\max}^2/(2\epsilon^2)$ samples of $Z$. Then, by using the Chebyshev and Chernoff inequalities, we obtain an $\epsilon$ additive error estimation of $\bm{u} \cdot \bm{v}$
with probability at least $1-\delta$ in $\Ord{\frac{\Vert \bm{u}\Vert_1^2 \Vert  \bm{v} \Vert_{\max}^2}{\epsilon^2}
\log \frac{1}{\delta}}$ queries. 
\end{proof}
\begin{lemma}[Inner product with $\ell_2$-sampling]
\label{lemmaSamplingl2}
Let $\epsilon,\delta \in (0,1)$.
Given query access to $\bm{v} \in\mathbb R^n$ and $\bm{SQ2}(\bm u, n)$ access to  $\bm{u} \in\mathbb R^n$,
we can determine $ \bm{u}\cdot \bm{v}$ to additive error $\epsilon$ with success probability at least $1-\delta$
with $\Ord{\frac{\Vert \bm{u} \Vert_{2}^2 \Vert \bm{v} \Vert_{2}^2}{\epsilon^2} \log \frac{1}{\delta}}$ queries and samples,
and $\tOrd{\frac{\Vert \bm{u} \Vert_{2}^2 \Vert \bm{v} \Vert_{2}^2}{\epsilon^2} \log \frac{1}{\delta}}$ time complexity.
\end{lemma}
\begin{proof}
Define a random variable $Z$ with outcome $\Vert \bm{u} \Vert_2^2 v_j / u_j$ with probability $u_j^2 / \Vert \bm{u} \Vert_2^2$.
Note that $\mathbb E[Z] = \sum_j \Vert \bm{u} \Vert_2^2 v_j u_j^2 / (u_j \Vert \bm{u} \Vert_2^2) = \bm{u} \cdot \bm{v}$.
Also, $\mathbb V[Z] \leq \mathbb E[Z^2] = \sum_j \Vert \bm{u} \Vert_2^2 v_j^2 = \Vert \bm{u} \Vert_2^2 \Vert \bm{v} \Vert_{2}^2$.
Take the median of $6 \log 1/\delta$ evaluations of the mean of $9\Vert \bm{u} \Vert_2^2 \Vert \bm{v}\Vert_{2}^2/(2\epsilon^2)$ samples of $Z$.
Then, by using the Chebyshev and Chernoff inequalities,  we obtain an $\epsilon$ additive error estimation of $\bm{u} \cdot \bm{v}$
with probability at least $1-\delta$ in $\Ord{\frac{\Vert \bm{u}\Vert_2^2 \Vert  \bm{v} \Vert_2^2}{\epsilon^2}
\log \frac{1}{\delta}}$ queries.
\end{proof}

By the above Fact~\ref{factSamplingl2} and Lemma~\ref{lemmaSamplingl2}, given vector $\bm{u}, \bm v \in \mathbb R^n$,
the sampling data structure 
for $\bm{u}$ can be constructed in $\tOrd{n}$ time and
an estimation of $\bm{u} \cdot \bm{v}$ with $\epsilon $ additive error can be obtained 
with probability at least $1 - \delta$  at a run time cost of
$\tOrd {\frac{\Vert \bm{u} \Vert_2 \Vert \bm{v} \Vert_2}{\epsilon^2} {\log \frac{1}{\delta}}}$.

\section{Quantum subroutines}
\label{appQuantum}

First, we recall the quantum access used for vectors.
\begin{defn} [Quantum query access]
Let $c$ and $n$ be two positive integers
and $\bm u$ be
a vector of bit strings $\mathbf{u} \in (\{0,1\}^{c})^n$. 
Define element-wise quantum access to $\bm u$ for $j\in[n]$ by the operation 
\be
\ket j \ket{0^{c}} \to \ket j \ket{ u_j}, 
\ee
on $\Ord{c+ \log n}$ qubits.
We denote this access by $\bm{QA}(\bm u, n,c)$.
\end{defn}
For the following part of this Appendix, recall Definition \ref{defEncoding} regarding the fixed-point encoding of real numbers. 
In addition, we
define the
quantum sample access to a normalized semi-positive vector $\bm v/\Vert \bm v\Vert_1$ which is a fixed-point approximation of a real semi-positive vector.  Each component of the vector $\bm v$ is represented with $c_1$ bits before the decimal point and with $c_2$ bits after the decimal point. 
\begin{defn} [Quantum sample access]
Let $c_1$, $c_2$, and $n$ be positive integers
and $\bm v'\in (\{0,1\}^{c_1})^n$ and $\bm v''\in (\{0,1\}^{c_2})^n$ be
vectors of bit strings. 
Define quantum sample access to a vector $\bm v$ via the operation 
\be
\ket{\bar 0} \to \frac{1}{\sqrt{\Vert \mathcal Q(\bm v', \bm v'')\Vert_1 }} \sum_{j=1}^n  \sqrt{\mathcal Q(v'_j,v''_j)} \ket j,
\ee 
on $\Ord{\log n}$ qubits.
We denote this access by $\bm{QS}(\bm v, n,c_1,c_2)$. For the sample access to a vector $\bm v$ which approximates a vector with components in $[0,1]$, we use the shorthand notation $\bm{QS}(\bm v, n, c_2) := \bm{QS}(\bm v, n,1,c_2)$.
\end{defn}

As stated in \cite{QHedge2020}, 
we have the
following lemma for estimating the $\ell_1$-norm of a vector with entries in $[0,1]$ and preparing states encoding
the square root of the vector elements.
\begin{lemma}[Quantum state preparation and norm estimation] \label{lemmaU}
Let $c$ and $n$ be two positive integers and $\mathbf{u} \in (\{0,1\}^{c+1})^n$. Assume quantum access to $\mathbf{u}$ via $\bm{QA}(\bm u, n,c+1)$.
Let  $\max_j \mathcal Q(u_j) = 1$. 
Then:
\begin{enumerate}
\item There exists a quantum circuit 
that prepares the state
$\frac{1}{\sqrt{n}}  \sum_{j=1}^n \ket j  \left( \sqrt{ \mathcal Q(u_j) } \ket{0} + \sqrt{1- \mathcal Q(u_j)} \ket{1} \right)$ with two queries to $\bm{QA}(\bm u, n,c)$ and $\Ord{\log n + c}$ gates.
\item \label{lemmaU:l1-norm} Let $\epsilon, \delta \in (0,1)$. 
There exists a quantum algorithm that  provides an estimate $\Gamma_{\mathbf{u}}$ of the $\ell_1$-norm
$\Vert \mathcal Q(\mathbf{u}) \Vert_1$  such that
$\left \vert \Vert \mathcal Q(\mathbf{u}) \Vert_1 - \Gamma_{\mathbf{u}}\right \vert \leq \epsilon \Vert \mathcal Q(\mathbf{u})\Vert_1$,
with probability at least $1-\delta$. The quantum circuit of this algorithm
makes $\Ord{\frac{1}{\epsilon}\sqrt{\frac{ n}{\Vert \mathcal Q(\mathbf{u})\Vert_1}} \log(1/\delta)}$  queries to $\bm{QA}(\bm u, n,c+1)$ and has 
$\tOrd{\frac{c}{\epsilon}\sqrt{\frac{ n}{\Vert \mathcal Q(\mathbf{u})\Vert_1}}\log\left (1/\delta\right)}$ gates.
\end{enumerate}
\end{lemma}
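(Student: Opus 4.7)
The plan is to handle the two parts in sequence, with Part 2 invoking Part 1 as a subroutine inside amplitude estimation.

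For Part 1, I would prepare three registers $\ket{0^{\lceil\log n\rceil}}\ket{0^{c+1}}\ket{0}$ and apply the following steps. First, a layer of Hadamards on the index register produces the uniform superposition $\frac{1}{\sqrt n}\sum_{j=1}^n \ket j \ket{0^{c+1}}\ket{0}$. Second, a single query to $\bm{QA}(\bm u, n, c+1)$ writes the bit string into the second register, giving $\frac{1}{\sqrt n}\sum_j \ket j \ket{u_j}\ket{0}$. Third, a controlled rotation on the last qubit, conditioned on $\ket{u_j}$, implements $R_y(\theta_j)\ket{0} = \sqrt{\mathcal Q(u_j)}\ket{0}+\sqrt{1-\mathcal Q(u_j)}\ket{1}$ with $\theta_j = 2\arccos\sqrt{\mathcal Q(u_j)}$; this is a valid rotation because the hypotheses $u_j \in \{0,1\}^{c+1}$ and $\max_j \mathcal Q(u_j)=1$ together force $\mathcal Q(u_j)\in[0,1]$. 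Finally, a second query to $\bm{QA}(\bm u, n, c+1)$ uncomputes the data register. The result is precisely the claimed state, produced with two oracle calls, $\lceil \log n\rceil$ Hadamards, and an $\Ord{c}$-gate controlled rotation.

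For Part 2, let $U$ denote the unitary of Part 1. Measuring the last qubit of $U\ket{\bar 0}$ in the computational basis yields outcome $0$ with probability
\begin{equation*}
p \;=\; \frac{1}{n}\sum_{j=1}^n \mathcal Q(u_j) \;=\; \frac{\Vert \mathcal Q(\mathbf u)\Vert_1}{n}.
\end{equation*}
Hence I would run quantum amplitude estimation on $U$ with ``good'' projector equal to the identity on the index register tensored with $\ket{0}\bra{0}$ on the final qubit. The standard amplitude estimation guarantee yields an estimate $\tilde p$ satisfying $|\tilde p - p|\le \epsilon p$ using $\Ord{1/(\epsilon\sqrt{p})}$ applications of $U$ (and $U^{\dagger}$), which translates to $\Ord{(1/\epsilon)\sqrt{n/\Vert \mathcal Q(\mathbf u)\Vert_1}}$ queries to $\bm{QA}$. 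Setting $\Gamma_{\mathbf u}:= n\tilde p$ then gives a relative-$\epsilon$ estimate of $\Vert \mathcal Q(\mathbf u)\Vert_1$. Standard median-of-means amplification over $\Ord{\log(1/\delta)}$ independent repetitions boosts the success probability above $1-\delta$, producing the claimed query count. The gate count picks up the factor $c$ from each instance of the controlled rotation in $U$.

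The main obstacle is the $\Ord{c}$-gate bound for the controlled rotation in Part 1: implementing $\theta_j = 2\arccos\sqrt{\mathcal Q(u_j)}$ with $c$-bit precision is, strictly speaking, not linear in the bits of $u_j$, so naively one would need arithmetic subcircuits of size $\operatorname{poly}(c)$. In the arithmetic model adopted earlier in the paper, however, one can treat each fixed-precision reversible arithmetic operation as costing a constant per bit, which justifies the $\Ord{c}$ bound; I would note this reliance on the paper's arithmetic model explicitly. A secondary, smaller point is the conversion between the additive error guarantee provided by amplitude estimation on the amplitude $\sqrt{p}$ and the relative error on $p$, which I would handle by a routine calculation $|\tilde p - p| \le (\sqrt{\tilde p}+\sqrt{p})|\sqrt{\tilde p}-\sqrt{p}|$ to absorb factors into the $\Ord{\cdot}$ and $\tOrd{\cdot}$ notation.
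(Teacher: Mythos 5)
Your proposal is correct and follows the standard construction: Hadamards plus a query, a controlled rotation by $2\arccos\sqrt{\mathcal Q(u_j)}$ (valid since $\max_j \mathcal Q(u_j)=1$), uncomputation for part 1, and relative-error amplitude estimation on the probability $p=\Vert \mathcal Q(\mathbf u)\Vert_1/n$ with median boosting for part 2, including the appropriate caveat about the $\Ord{c}$ cost of the rotation in the paper's arithmetic model. The paper itself does not prove this lemma but imports it from the cited reference, and your argument matches both that source and the paper's own analogous proof of Lemma~\ref{lemmaInnerUnitary}, so there is nothing essentially different to compare.
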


By using Lemma~\ref{lemmaU}, we estimate the inner product of two vectors $\bm u$ and $\bm v$ with additive errors as follows.  The vectors can be considered as fixed point approximations to real vectors with elements restricted to $[-1,1]$.

\begin{lemma} [Quantum inner product estimation with additive accuracy] \label{lemmaInnerProduct}
Let $\epsilon,\delta \in(0,1)$.
Let $c$ and $n$ be two positive integers.
Let 
two non-zero vectors of bit strings be $\mathbf{u},\bm v \in (\{0,1\}^{c+2})^n$, which leaves one bit for the sign of each component, one bit for the number before the decimal point, and $c$ bits for the number after the decimal point.  
Let there be given quantum access to $\mathbf{u}$ and $\mathbf{v}$ as $\bm{QA}(\bm u, n,c+2)$ and $\bm{QA}(\bm u, n,c+2)$, respectively. 
Let the norms $\Vert \mathcal Q(\mathbf{u}) \Vert_2$ and $\Vert \mathcal Q(\mathbf{v}) \Vert_2$ be known.
Then,
there exists a quantum algorithm which provides an estimate $I$ for the inner product 
such that
$\vert I - \mathcal Q(\mathbf{u}) \cdot \mathcal Q(\mathbf{v}) \vert \leq \epsilon$ with success probability $1-\delta$.
This estimate is obtained with $\Ord{ \left (\frac{ \Vert \mathcal Q(\mathbf{u}) \Vert_2 \Vert \mathcal Q(\mathbf{v}) \Vert_2 }{\epsilon} + 1 \right) \sqrt n \log \left (\frac{1}{\delta} \right )  }$
queries and $\tOrd{\left (\frac{ \Vert \mathcal Q(\mathbf{u}) \Vert_2 \Vert \mathcal Q(\mathbf{v}) \Vert_2 }{\epsilon} + 1 \right) \sqrt n \log \left (\frac{1}{\delta} \right )  }$ quantum gates.
\end{lemma}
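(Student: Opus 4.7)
The plan is to split the signs of the two vectors into four inner-product estimations between non-negative vectors in $[0,1]^n$, and to handle each by combining the state preparation of Lemma~\ref{lemmaU}(1) on two independent flag qubits with quantum amplitude estimation. Write $\mathcal{Q}(\mathbf{u})=\bm p-\bm q$ and $\mathcal{Q}(\mathbf{v})=\bm r-\bm s$, where $\bm p,\bm q,\bm r,\bm s$ are the component-wise positive and negative parts, all with entries in $[0,1]$. Since each encoded entry of $\mathbf{u}$ and $\mathbf{v}$ carries an explicit sign bit, quantum query access to $\bm p,\bm q,\bm r,\bm s$ encoded with $c+1$ bits each can be produced from $\bm{QA}(\bm u,n,c+2)$ and $\bm{QA}(\bm v,n,c+2)$ by a constant-depth reversible circuit. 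The identity
\[ \mathcal{Q}(\mathbf{u})\cdot\mathcal{Q}(\mathbf{v})\;=\;\bm p\cdot\bm r\;-\;\bm p\cdot\bm s\;-\;\bm q\cdot\bm r\;+\;\bm q\cdot\bm s \]
then reduces the problem to estimating four non-negative inner products, each to additive accuracy $\epsilon/4$ and failure probability at most $\delta/4$.

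For one such pair $\bm w,\bm z\in[0,1]^n$, I apply Lemma~\ref{lemmaU}(1) independently to two separate flag qubits, producing
\[ |\Psi_{\bm w,\bm z}\rangle\;=\;\frac{1}{\sqrt n}\sum_{j=1}^n|j\rangle\otimes\bigl(\sqrt{w_j}|0\rangle+\sqrt{1-w_j}|1\rangle\bigr)\otimes\bigl(\sqrt{z_j}|0\rangle+\sqrt{1-z_j}|1\rangle\bigr). \]
A direct calculation shows that the probability of simultaneously observing $|0\rangle$ on both flag qubits equals $p_{\bm w,\bm z}=(\bm w\cdot\bm z)/n$. Quantum amplitude estimation on this projector, combined with the usual median-of-$O(\log(1/\delta))$ boosting, yields an estimate of $\bm w\cdot\bm z$ to additive error $\epsilon/4$ with failure probability $\delta/4$. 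The standard AE error bound $O(\sqrt{p(1-p)}/M+1/M^2)$, together with Cauchy--Schwarz giving $\bm w\cdot\bm z\le\|\mathcal{Q}(\mathbf{u})\|_2\|\mathcal{Q}(\mathbf{v})\|_2$ and the easy monotonicity $\|\bm p\|_2,\|\bm q\|_2\le\|\mathcal{Q}(\mathbf{u})\|_2$ (similarly for $\bm v$), translates into a query cost of $O\!\bigl(\sqrt{n\,\|\mathcal{Q}(\mathbf{u})\|_2\|\mathcal{Q}(\mathbf{v})\|_2}/\epsilon+\sqrt{n/\epsilon}\bigr)$. A short case analysis on whether the norm product $\|\mathcal{Q}(\mathbf{u})\|_2\|\mathcal{Q}(\mathbf{v})\|_2$ exceeds $1$ collapses this to the claimed $O\!\bigl((\|\mathcal{Q}(\mathbf{u})\|_2\|\mathcal{Q}(\mathbf{v})\|_2/\epsilon+1)\sqrt n\,\log(1/\delta)\bigr)$ bound. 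A union bound over the four subroutines and the triangle inequality on the four sub-estimates then recover the overall additive-$\epsilon$ guarantee at total failure probability $\delta$.

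The main obstacle is the two-sided error analysis of amplitude estimation when the underlying amplitude $p_{\bm w,\bm z}=(\bm w\cdot\bm z)/n$ is small, possibly of order $\epsilon^2/n^2$: the $1/M^2$ contribution to the AE error must be absorbed into the $\sqrt n$ part of the final bound, either by a case split or by a preliminary coarse estimate that outputs zero in the very-small-amplitude regime. The rest --- the constant-depth sign-split oracles, the gate counts absorbed into the $\tOrd{\cdot}$ notation, and the routine collection of $\log(1/\delta)$ factors from median-boosting --- is bookkeeping.
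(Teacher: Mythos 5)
Your decomposition and the paper's are genuinely different: you expand $\mathcal{Q}(\mathbf{u})\cdot\mathcal{Q}(\mathbf{v})$ into four non-negative inner products and estimate each one directly with a two-flag amplitude-estimation circuit, whereas the paper groups the terms into $\|\mathcal{Q}(\mathbf{z}^+)\|_1 - \|\mathcal{Q}(\mathbf{z}^-)\|_1$ (with $\mathcal{Q}(z_i^\pm)$ being a product of query outputs), finds $z^\pm_{\max}$ by quantum maximum finding, and then invokes the \emph{relative-error} $\ell_1$-norm estimator of Lemma~\ref{lemmaU}(2) on $\mathcal{Q}(\mathbf{z}^\pm)/z^\pm_{\max}$. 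Both are conceptually sound routes, and yours is arguably more direct. However, there is a genuine quantitative gap in your run-time analysis, and the fix you sketch does not close it.

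The issue is that you run amplitude estimation with the encoding amplitude $p=(\bm w\cdot\bm z)/n$, which has no lower bound: with $M$ Grover iterations the AE error in $p$ is $\Ord{\sqrt{p}/M + 1/M^2}$, so the error in $\bm w\cdot\bm z = np$ is $\Ord{\sqrt{nN}/M + n/M^2}$ with $N:=\|\mathcal{Q}(\mathbf{u})\|_2\|\mathcal{Q}(\mathbf{v})\|_2$. Driving that below $\epsilon$ forces $M\in\Omega\bigl(\sqrt{nN}/\epsilon + \sqrt{n/\epsilon}\bigr)$, which is \emph{not} within $\Ord{(N/\epsilon+1)\sqrt n}$ when $\epsilon < N < 1$: the term $\sqrt{nN}/\epsilon$ already exceeds $N\sqrt n/\epsilon$ by a factor $N^{-1/2}$, and your ``case split on $N\ge 1$'' only helps when $N\ge 1$. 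Likewise, the ``preliminary coarse estimate that outputs zero'' cannot rescue you: distinguishing $p\lesssim\epsilon/n$ from zero already costs $\Omega(\sqrt{n/\epsilon})$ queries, which is exactly the term you were trying to avoid. The paper evades the whole problem because the normalized vector $\mathcal{Q}(\mathbf{z}^\pm)/z^\pm_{\max}$ has $\ell_1$-norm at least $1$, so the amplitude inside Lemma~\ref{lemmaU}(2) is at least $1/n$, and its output guarantee is relative, matching the scale of the target. The clean fix for your route is to rescale each positive/negative part by the given $\ell_2$-norm (e.g.\ $\bm p/\|\mathcal{Q}(\mathbf{u})\|_2$, with entries still in $[0,1]$ because $\|\cdot\|_\infty\le\|\cdot\|_2$); then the amplitude becomes $(\bm w\cdot\bm z)/(nN)$, the error in $\bm w\cdot\bm z$ is $\Ord{N\sqrt n/M + nN/M^2}$, and after discarding the trivial case $N\le\epsilon$ one gets $M\in\Ord{N\sqrt n/\epsilon}$ as claimed. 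That rescaling is the missing idea; the rest of your write-up (sign split, $|00\rangle$ amplitude, Cauchy--Schwarz, $\|\bm p\|_2\le\|\mathcal{Q}(\mathbf{u})\|_2$, union bound) is correct.
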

\begin{proof}
Define the 
vectors $\mathbf{u}^{+}$ and $\mathbf{u}^{-}$ as follows
\[ u^+_i := \left \{ \begin{array}{ll}
u_i & {\rm if~} {\rm sign} (u_i) = 1\\
0 &  {\rm otherwise}\\ 
\end{array}\right. \quad \quad \quad \quad
u^-_i = \left \{ \begin{array}{ll}
0 & {\rm if~} {\rm sign} (u_i) = 1\\
-u_i & {\rm otherwise}.\\ 
\end{array}\right.\]
It is easy to see that $\mathcal Q(\mathbf{u}) = \mathcal Q(\mathbf{u}^{+}) - \mathcal Q(\mathbf{u}^{-})$.
Define the vectors $\mathbf{v}^{+}$ and $\mathbf{v}^{-}$ in a similar way.
Then,
\be
\mathcal Q(\mathbf{u}) \cdot \mathcal Q(\mathbf{v}) = \mathcal Q(\mathbf{u}^+) \cdot \mathcal Q(\mathbf{v}^+) + \mathcal Q(\mathbf{u}^{-}) \cdot \mathcal Q(\mathbf{v}^{-})
- \mathcal Q(\mathbf{u}^{+}) \cdot \mathcal Q(\mathbf{v}^{-}) - \mathcal Q(\mathbf{u}^{-}) \cdot \mathcal Q(\mathbf{v}^{+}).
\ee
Define two more vectors of bit strings $\mathbf{z}^+$ and $\mathbf{z}^-$ from
$\mathcal Q(z^+_i) = \mathcal Q(u^+_i)\mathcal Q( v_i^+) + \mathcal Q(u^-_i) \mathcal Q(v^-_i)$ and $\mathcal Q(z^-_i) = \mathcal Q(u^+_i) \mathcal Q(v_i^-) + \mathcal Q(u^-_i v^+_i)$.
Then
\be
\mathcal Q(\mathbf{u}) \cdot \mathcal Q(\mathbf{v}) = \Vert \mathcal Q(\mathbf{z}^+) \Vert_1 - \Vert \mathcal Q(\mathbf{z}^-) \Vert_1.
\ee
In the following, we use the standard $\pm$ notation to denote that a statement holds for both the $+$ and the $-$ case. 
Determine the index of $z^\pm_{\max} := \Vert \mathcal Q(\mathbf{z}^\pm) \Vert_{\max}$ 
with the quantum maximum finding algorithm
with success probability $1 - \delta / 4$,
with $\Ord{ \sqrt{n}  \log \left (\frac{1}{\delta} \right )  }$ queries and
$\tOrd{ \sqrt{n} \log \left (\frac{1}{\delta} \right )  }$ quantum gates \cite{DH96}.
In case that $z^\pm_{\max} = 0$, 
we infer that $ \mathbf{z}^{\pm} = \bm{0}$, 
and if both are true we return the estimate $0$.
Otherwise, for non-zero vector $ \mathbf{z}^{\pm}$, we apply Statement~\ref{lemmaU:l1-norm} of Lemma~\ref{lemmaU} on the vectors of bit strings corresponding to  $\mathcal Q(\mathbf{z}^{\pm}) / z^{\pm}_{\max}$, respectively.
These vectors of bit strings can be computed efficiently from the query access and the result of the maximum finding. 
We obtain estimates $\Gamma^+$ and $\Gamma^-$ such
that
\be
\left \vert \left \Vert \frac{\mathcal Q(\mathbf{z}^\pm)}{ z^\pm_{\max}} \right \Vert_1 - \Gamma^\pm \right \vert \le \epsilon' \left \Vert \frac{\mathcal Q(\mathbf{z}^\pm) }{ z^\pm_{\max} } \right\Vert_1,
\ee
with success probability at least $1 - \delta/4$ for each of them,
with $\Ord{\frac{1}{\epsilon'} \sqrt{\frac{ n z_{\max}^{\pm}}{\Vert \mathcal Q(\mathbf{z}^{\pm})\Vert_1}} \log \left (\frac{1}{\delta} \right )  }$ queries and
$\tOrd{\frac{1}{\epsilon'} \sqrt{\frac{ n z_{\max}^{\pm}}{\Vert \mathcal Q(\mathbf{z}^{\pm})\Vert_1}} \log \left (\frac{1}{\delta} \right )  }$ quantum gates.
Note that  
\be
\Vert \mathcal Q(\mathbf{z}^+) \Vert_1 &=& \mathcal Q(\mathbf{u}^+) \cdot \mathcal Q(\mathbf{v}^+) + \mathcal Q(\mathbf{u}^-) \cdot \mathcal Q(\mathbf{v}^-) \\&\leq& \Vert \mathcal Q(\mathbf{u}^+)\Vert_2 \Vert \mathcal Q(\mathbf{v}^+) \Vert_2 +  \Vert \mathcal Q(\mathbf{u}^-) \Vert_2 \Vert \mathcal Q(\mathbf{v}^-)\Vert_2 \leq 2 \Vert \mathcal Q(\mathbf{u}) \Vert_2 \Vert \mathcal Q(\mathbf{v})\Vert_2,
\ee
where the first inequality follows from Cauchy-Schwarz.
Similarly, we have that $\Vert \mathcal Q(\mathbf{z}^-) \Vert_1 \leq 2 \Vert \mathcal Q(\mathbf{u}) \Vert_2 \Vert \mathcal Q(\mathbf{v})\Vert_2$.

Hence, we obtain an estimate $I = z^+_{\max} \Gamma^+ - z^-_{\max} \Gamma^-$ such that
\[ \begin{array}{lll}
\vert \mathcal Q( \mathbf{u}) \cdot \mathcal Q( \mathbf{v}) - I \vert & = & \vert \Vert \mathcal Q(\mathbf{z}^+) \Vert_1 -
\Vert \mathcal Q(\mathbf{z}^-) \Vert_1 - (z^+_{\max} \Gamma^+ - z^-_{\max}\Gamma^-) \vert \\
& \le & \vert \Vert \mathcal Q(\mathbf{z}^+) \Vert_1 - z^+_{\max} \Gamma^+ \vert +
\vert \Vert \mathcal Q(\mathbf{z}^-) \Vert_1 - z^-_{\max} \Gamma^- \vert \\
& \le & 4\epsilon'\Vert \mathcal Q(\mathbf{u}) \Vert_2 \Vert \mathcal Q (\mathbf{v}) \Vert_2.\\
\end{array}\]
Since $\Vert \mathcal Q(\mathbf{u}) \Vert_2$ and $\Vert \mathcal Q(\mathbf{v}) \Vert_2$ are given, choosing $\epsilon' = \epsilon / (4\Vert \mathcal Q(\mathbf{u}) \Vert_2 \Vert \mathcal Q(\mathbf{v}) \Vert_2)$ 
leads to the result.
The run time of the $\pm$ estimation is then, using $\epsilon' = \epsilon / (4\Vert \mathcal Q(\mathbf{u}) \Vert_2 \Vert \mathcal Q(\mathbf{v}) \Vert_2)$,
\[ \Ord{\frac{\Vert\mathcal Q(\bm u) \Vert_2 \Vert \mathcal Q(\bm v) \Vert_2 \sqrt{n} }{\epsilon} \sqrt{\frac{ z_{\max}^{\pm}}{\Vert \mathcal Q(\mathbf{z}^{\pm})\Vert_1}} \log \left (\frac{1}{\delta} \right )  }.\]
In the absence of more knowledge about the vectors, we take the bound
$z^{\pm}_{\max} / \Vert \mathcal Q(\bm z^{\pm}) \Vert_1 \le 1$. Then the run time is
$\Ord{\frac{\Vert \mathcal Q(\bm u) \Vert_2 \Vert \mathcal Q(\bm v) \Vert_2 \sqrt{n} }{\epsilon} \log \left (\frac{1}{\delta} \right )  }$.
Combining these resource bounds with the resource bounds for maximum finding leads to the stated result. 
\end{proof}
With the following Lemma, we can remove the explicit dimension dependence of the inner product estimation.
For this lemma we suppose 
that one vector is given via quantum query access as before and that 
the other vector is given via access to a quantum subroutine that prepares an amplitude encoding of the vector. In our work, the quantum sampling access  is provided via QRAM in Definition \ref{defn:QRAM}.
The vectors in this lemma are considered to be fixed-point approximations to real vectors with elements restricted to $[0,1]$.
\begin{lemma}[Inner product estimation with quantum sampling and query access] \label{lemmaInnerUnitary}
Let $c$ and $n$ be two positive integers.
Let $\mathbf{u} \in (\{0,1\}^{c+1})^n$ be
a non-zero vector of bit strings, and let $\mathbf{v} \in (\{0,1\}^{c+1})^n$  be another vector of bit strings.
Assume quantum query access to $\bm u$ via $\bm{QA}(\bm u,n,c +1)$, 
and quantum sample access $\bm v$ via $\bm{QS}(\bm v,n,c+1)$.
Then:
\begin{enumerate}[(i)]
\item If $\max_j \mathcal Q(u_j) = 1$, 
then there exists a quantum circuit that prepares the state
$$ \frac{1}{\sqrt{\Vert \mathcal Q(\bm v)\Vert_1}} \sum_{j=1}^n \sqrt{\mathcal Q(v_j)} \ket j  \left( \sqrt{ \mathcal Q(u_j) } \ket{0} + \sqrt{1- \mathcal Q(u_j)} \ket{1} \right)$$ with three queries and $\Ord{c + \log n}$ additional gates.

\item  Let $\epsilon,\delta \in (0,1)$. If $\max_j \mathcal Q(u_j) = 1$, then there exists a quantum algorithm that  provides an estimate $\Gamma$ of
$\frac{\mathcal Q(\bm  v) \cdot \mathcal Q(\bm u)}{\Vert \mathcal Q(\bm v)\Vert_1} $  such that
$\left \vert \frac{\mathcal Q(\bm v) \cdot \mathcal Q(\bm u)}{\Vert \mathcal Q(\bm v)\Vert_1} - \Gamma\right \vert \leq \epsilon$,
with probability at least $1-\delta$. The algorithm requires $\Ord{\frac{1}{\epsilon } \log \frac{1}{\delta}}$  queries and $\tOrd{\frac{1}{\epsilon }   \log\frac{1}{\delta}}$ gates.
\item 
Let $\epsilon,\delta \in (0,1)$. Let the norm $\Vert \mathcal Q(\bm v)\Vert_1$ and $j_{\max} := \arg \max_j \mathcal Q(u_j)$ be known. 
There is a quantum algorithm, similar to (ii), which provides
an estimate $\Gamma'$ of
$ \mathcal Q(\bm v) \cdot \mathcal Q(\bm u)$  such that
$\left \vert \mathcal Q(\bm v) \cdot \mathcal Q(\bm u) - \Gamma'\right \vert \leq \epsilon $,
with probability at least $1-\delta$. 
The algorithm requires $\Ord{\frac{\Vert \mathcal Q(\bm v)\Vert_1 \mathcal Q(u_{j_{\max}}) }{\epsilon } \log \frac{1}{\delta}}$  queries and $\tOrd{\frac{\Vert \mathcal Q(\bm v)\Vert_1 \mathcal Q(u_{j_{\max}})}{\epsilon }   \log \frac{1}{\delta}}$ gates.
\end{enumerate}
\end{lemma}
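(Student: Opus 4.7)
The plan is to establish the three statements in order, each building on the previous. For statement (i), I would start with the sample access $\bm{QS}(\bm v, n, c+1)$, which by definition produces $\frac{1}{\sqrt{\Vert \mathcal Q(\bm v)\Vert_1}} \sum_j \sqrt{\mathcal Q(v_j)} \ket j$ on $\lceil \log n \rceil$ qubits with a single invocation. Append an ancilla register of $c+1$ qubits (for loading $u_j$) together with a one-qubit rotation register. Apply $\bm{QA}(\bm u,n,c+1)$ controlled on the $\ket j$ register to obtain $\frac{1}{\sqrt{\Vert \mathcal Q(\bm v)\Vert_1}} \sum_j \sqrt{\mathcal Q(v_j)} \ket j \ket{u_j}\ket 0$. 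Then perform a unitary on the rotation qubit, controlled on the bit representation $u_j$, that implements $\ket{u_j}\ket 0 \mapsto \ket{u_j}\bigl(\sqrt{\mathcal Q(u_j)}\ket 0 + \sqrt{1-\mathcal Q(u_j)}\ket 1\bigr)$. This controlled rotation costs $\Ord{c}$ gates since $\mathcal Q(u_j)\in[0,1]$ is encoded in $c+1$ bits. Finally uncompute the $\ket{u_j}$ register by applying $\bm{QA}(\bm u,n,c+1)$ again. The total cost is three queries plus $\Ord{c+\log n}$ additional gates.

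For statement (ii), observe that in the state from (i), the probability of measuring $0$ in the rotation qubit is exactly
\be
\sum_{j=1}^n \frac{\mathcal Q(v_j)\, \mathcal Q(u_j)}{\Vert \mathcal Q(\bm v)\Vert_1} = \frac{\mathcal Q(\bm v) \cdot \mathcal Q(\bm u)}{\Vert \mathcal Q(\bm v)\Vert_1}.
\ee
Applying standard amplitude estimation to the $\ket 0$ component of the rotation qubit then yields an estimate of this probability to additive error $\epsilon$ using $\Ord{1/\epsilon}$ applications of the state preparation unitary from (i), and hence $\Ord{1/\epsilon}$ queries to $\bm{QA}(\bm u,n,c+1)$ and $\bm{QS}(\bm v,n,c+1)$. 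To boost the success probability to $1-\delta$, I would take the median of $\Ord{\log 1/\delta}$ independent amplitude-estimation calls, with total query cost $\Ord{\frac{1}{\epsilon}\log\frac{1}{\delta}}$ and gate cost $\tOrd{\frac{1}{\epsilon}\log\frac{1}{\delta}}$.

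For statement (iii), we no longer assume $\max_j \mathcal Q(u_j)=1$, but the index $j_{\max}$ and hence the value $\mathcal Q(u_{j_{\max}})$ is known. The key idea is to rescale: define $\tilde u_j := u_j/\mathcal Q(u_{j_{\max}})$, so that $\max_j \mathcal Q(\tilde u_j) = 1$, where the quantum query access to $\tilde{\bm u}$ is obtained from the access to $\bm u$ together with the known divisor $\mathcal Q(u_{j_{\max}})$ via a simple arithmetic circuit. Apply part (ii) to the pair $(\bm v, \tilde{\bm u})$ with internal accuracy $\epsilon' := \epsilon / \bigl(\Vert \mathcal Q(\bm v)\Vert_1\, \mathcal Q(u_{j_{\max}})\bigr)$, producing an estimate $\Gamma$ of $\frac{\mathcal Q(\bm v)\cdot \mathcal Q(\tilde{\bm u})}{\Vert \mathcal Q(\bm v)\Vert_1}$ with additive error $\epsilon'$. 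Rescale by outputting $\Gamma' := \Vert \mathcal Q(\bm v)\Vert_1 \mathcal Q(u_{j_{\max}}) \Gamma$, which satisfies $|\mathcal Q(\bm v)\cdot \mathcal Q(\bm u) - \Gamma'| \le \epsilon' \cdot \Vert \mathcal Q(\bm v)\Vert_1 \mathcal Q(u_{j_{\max}}) = \epsilon$. Plugging $\epsilon'$ into the resource bound of (ii) gives the claimed query and gate complexities.

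The main obstacle I foresee is merely bookkeeping in statement (i), specifically verifying that the controlled rotation producing $\sqrt{\mathcal Q(u_j)}$ in the amplitude can be implemented reversibly from the bit-string encoding of $u_j$ without introducing additional error beyond what is absorbed by the fixed-point representation with $c+1$ bits; this is routine, since the rotation angle $\arcsin\sqrt{\mathcal Q(u_j)}$ is a well-defined function of a $(c+1)$-bit input and can be implemented by standard reversible circuits. The amplitude estimation and median boosting in (ii) and the rescaling argument in (iii) are then straightforward.
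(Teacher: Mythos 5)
Your proof follows the paper's own argument essentially step for step: the same three-operation state preparation for (i) (sample access, query, controlled rotation, uncompute), standard amplitude estimation on the rotation qubit with median boosting for (ii), and the rescaling by $\mathcal Q(u_{j_{\max}})$ combined with the known norm $\Vert \mathcal Q(\bm v)\Vert_1$ to reduce (iii) to (ii). No substantive differences; the proposal is correct.
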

\begin{proof} 
For (i), with quantum sample access and the quantum query access, perform
\be
\ket {\bar 0} \ket {\bar 0} \ket {0}\to \frac{1}{\sqrt{\Vert  \mathcal Q(\bm v)\Vert_1}}  \sum_{j=1}^n \sqrt{\mathcal Q(v_j)} \ket j  \ket{ \bar 0} \ket { 0} \to \frac{1}{\sqrt{\Vert \mathcal Q(\bm v)\Vert_1}}  \sum_{j=1}^n \sqrt{\mathcal Q(v_j)} \ket j  \ket{u_j} \ket { 0} \\
\to  \frac{1}{\sqrt{\Vert \mathcal Q(\bm v)\Vert_1}}   \sum_{j=1}^N \sqrt{\mathcal Q(v_j)}\ket j  \ket{u_j}  \left( \sqrt{\mathcal Q(u_j)} \ket{0} + \sqrt{1-\mathcal Q(u_j) } \ket{1} \right). \nonumber
\ee
The first step consists of an oracle query to the vector $\bm v$ on the first register.
The second step consists of an oracle query to the vector $\bm u$ which puts the vector component in the second register depending on the index in the first register.
The last step consists of a controlled rotation. The rotation is well-defined as $ \mathcal Q(u_j) \leq \max_j \mathcal Q(u_j) = 1$ 
and can be implemented with 
$\Ord{c}$ gates. Then we uncompute the data register $\ket{u_j } $ with another oracle query.

For (ii),
define a unitary $\mathcal U = U_1 \left(\mathbb I - 2 \ket{\bar 0}\bra{\bar 0}\right) U_1^\dagger$, where $U_1$ is the unitary obtained in (i). Define another unitary by $\mathcal V = \mathbb I-2 \mathbb I \otimes \ket{0} \bra{0}$.
Using $K$ applications of $\mathcal U$ and $\mathcal V$, Amplitude Estimation \cite{Brassard2002} allows to provide an estimation $\tilde a$ of the quantity
$a = \frac{ \mathcal Q(\bm v) \cdot \mathcal Q(\bm u) }{\Vert \mathcal Q(\bm v)\Vert_1 }$
to accuracy
\be
\vert \tilde a - a \vert \leq 2 \pi \frac{\sqrt{a(1-a)} }{K} + \frac{\pi^2}{K^2}.
\ee
Note that $0 \leq a = \frac{ \mathcal Q(\bm v) \cdot \mathcal Q(\bm u) }{\Vert \mathcal Q(\bm v)\Vert_1 } \leq \max_j \mathcal Q(u_j) = 1$.
Set $K> \frac{3 \pi }{\epsilon}$. Then we obtain
\be
\vert \tilde a - a \vert
&\leq& \frac{\pi}{K}\left( 2\sqrt{a}+ \frac{\pi}{K} \right)
< \frac{\epsilon}{3 }  \left( 2\sqrt{a}+  \frac{\epsilon}{3}\right)\leq  \frac{\epsilon}{3 }  3 \leq \epsilon.
\ee
Performing a single run of amplitude estimation with $K$ steps requires
$\Ord{K} =\Ord{\frac{1}{\epsilon }}$
queries to the oracles and
$\Ord{\frac{1}{\epsilon }}$
gates and succeeds with probability $8/\pi^2$. The success probability can be boosted to $1-\delta$ with $\Ord{\log(1/\delta)}$ repetitions of  amplitude estimation. 

For (iii), from the index $j_{\max} = \arg \max_j \mathcal Q(u_j)$ we can obtain the bit string $u_{j_{\max}}$ and its corresponding value $\mathcal Q(u_{j_{\max}})$. This allows us to prepare the quantum circuit for the transformation
\be
\ket {u_j} \ket{0} \to \ket j \ket{ u_j} 
\left(\sqrt{\frac{\mathcal Q(u_j)}{\mathcal Q(u_{j_{\max}})}} \ket{0} + \sqrt{1-\frac{\mathcal Q(u_j)}{\mathcal Q(u_{j_{\max}})} } \ket{1} \right ),
\ee
from the original query access to $\bm u$ and basic arithmetic quantum circuits for the division.
Then we run the same steps as in (ii) with vector $\mathcal Q(\bm u)/\mathcal Q(u_{j_{\max}})$, quantum sample access to vector $\bm v$, and error parameter $\epsilon / (\Vert \mathcal Q(\bm v) \Vert_1 Q(u_{j_{\max}}))$.
We obtain an estimate $\Gamma$ from (ii)
such that
\be
\left \vert \Gamma - \frac{ \mathcal Q(\bm v)}{\Vert \mathcal Q(\bm v)\Vert_1 } \cdot \frac{\mathcal Q(\bm u)}{\mathcal Q(u_{j_{\max}})} \right \vert \le \frac{\epsilon} { \Vert \mathcal Q(\bm v) \Vert_1 \mathcal Q(u_{j_{\max}})}. \label{eq:xf1}
\ee
Then by multiplying both sides of (\ref{eq:xf1}) with $\Vert \mathcal Q(\bm v) \Vert_1 \mathcal Q(u_{j_{\max}})$, we obtain the required estimate
$\Gamma' = \Gamma \Vert \mathcal Q(\bm v) \Vert_1 \mathcal Q(u_{j_{\max}})$.
\end{proof}

\section{Combined algorithm}
\label{appComb}

In this section, we combine the algorithm of the kernel matrix pre-computation and gradient estimation for the learning process. Note that the kernel matrix estimation step introduces an additional error during the process. The classical algorithm is as follows.

\begin{algorithm}
\caption{\textsc{Alphatron\_combined}}\label{alg.alphatron.all}
\SetKw{KwReturn}{Return}
\SetKw{KwInput}{Input}
\SetKw{KwOutput}{Output}
\KwInput{Training data $(\mathbf{x}_i, y_i)^m_{i=1}$ and testing data $(\mathbf{a}_i, b_i)^N_{i=1}$, 
error tolerance parameter
$\epsilon_I$, failure probability $\delta$,
function $u: \mathbb{R} \to [0, 1]$, number of iterations $T$, degree of the multinomial kernel $d$, learning rate $\lambda$}
\DontPrintSemicolon
\;
$\tilde{K}_{ij}, \tilde{K}^{'}_{ij} \gets$ Call \textsc{Alphatron\_with\_Approx\_Pre} (Algorithm~\ref{algAlphatronApprox}) with all inputs as above.\;
Prepare query access to $\tilde{K}_{ij}, \tilde{K}^{'}_{ij}$ as Data Input \ref{inputQueryKernel}.\;
$\alpha^{t_{out}} \gets$ Call  \textsc{Alphatron\_With\_Kernel\_And\_Sampling}(Algorithm~\ref{algAlphatronSamp}) with all inputs as above and $\widetilde{K}_{j}$ and $\widetilde{K}'_{j}$ via Data Input \ref{inputQueryKernel} as prepared.\;
\KwOutput{$\alpha^{t_{out}}$}
\end{algorithm}
The corresponding guarantee and run time is given by the following result. 

\begin{corollary}[Alphatron combined] \label{thm:alphaApproxAll}
Given training data $(\mathbf{x}_i, y_i)^m_{i=1}$ and testing data $(\mathbf{a}_i, b_i)^N_{i=1}$.
Let $K_{\max}$ be maximum of all entries in $K$ and $K'$, which are the corresponding kernel matrices.
Let $\epsilon_I,\delta \in (0,1)$.
If the Definitions \ref{assume:alpha} and \ref{assume:param} hold and $A_2 \leq 1$ and with $\epsilon_I=\sqrt{\epsilon}$, the Algorithm \ref{alg.alphatron.all} outputs $\alpha^{t_{\rm out}}$ which describes the hypothesis $h^{t_{\rm out}}(\bm{x}) := u\left (\sum_{i=1}^{m_1} \alpha_i^{t_{\rm out}} \mathcal{K}_d(\mathbf{x}, \mathbf{x}_i)\right)$ such that with probability $1 - 4\delta$,
\[
\varepsilon(h^{t_{\rm out}}) \in \Ord{A_2}.
\]
The run time for obtaining this output is in 
$$\tOrd{mn+ \frac{T^2m^2d^2}{\epsilon}\log\left(\frac{1}{\delta}\right)+Tm + T^3 m \frac{K_{\max}^2 }{L^2 \epsilon} \log \left( \frac{1}{\delta}\right)}.$$
\end{corollary}

\begin{proof}
The proof is the same as Theorem \ref{thm:alphaApproxLoop}, except for the inequalities as follows.
Here, we use
\textsc{Alphatron\_with\_Approx\_Pre} to estimate the kernel matrices with precision $\epsilon_K$, which are denoted by 
$\tilde{K}_{ij}$ and $\tilde{K}'_{ij}$ respectively.
Therefore, with \textsc{Alphatron\_With\_Kernel\_And\_Sampling}, we estimate the inner product $\alpha^t \cdot \tilde{K}_j$ instead of $\alpha^t \cdot K_j$, to additive accuracy $\epsilon'_I$.
Let the estimated value be $r^t_j$.
To promise
\be
\left|r^t_j-\sum_{i=1}^{m_1}\alpha^t_i K_{ji}\right|&=&\left|r^t_j-\sum_{i=1}^{m_1}\alpha^t_i \tilde{K}_{ji}+\sum_{i=1}^{m_1}\alpha^t_i \tilde{K}_{ji}-\sum_{i=1}^{m_1}\alpha^t_i K_{ji}\right|\\
&\leq& \left|r^t_j-\sum_{i=1}^{m_1}\alpha^t_i \tilde{K}_{ji}\right|+\left|\sum_{i=1}^{m_1}\alpha^t_i \tilde{K}_{ji}-\sum_{i=1}^{m_1}\alpha^t_i K_{ji}\right|\\
&\leq& \epsilon_I'+ 2\lambda T\epsilon_K\leq \epsilon_I,
\ee
it suffices to take $\epsilon'_I=\epsilon_I/2$ and $\epsilon_K=\epsilon_I/(4\lambda T)$.
Recall that by Lemma \ref{lem:alphabound}, $0<\alpha_{\max}\leq \lambda T/m_1$.
For the run time,  storing $\tilde{K}_{ij}$ and $\tilde{K}'_{ij}$ takes time $\Ord{m^2}$, which is included in the quoted run time.  
Set $\epsilon_I=\sqrt{\epsilon}$, we obtain the run time from Theorem \ref{thm:alphaApproxLoop}.
\end{proof}
We continue with the corresponding quantum algorithm. 
For the result, we slightly change the proof for Theorem \ref{thm:alphaQuantumLoop}.

\begin{algorithm}
\caption{\textsc{Quantum\_Alphatron\_combined}}\label{alg.quantumalphatron.all}
\SetKw{KwReturn}{Return}
\SetKw{KwInput}{Input}
\SetKw{KwOutput}{Output}
\KwInput{Quantum access to training data $(\mathbf{x}_i, y_i)^m_{i=1}$ and testing data $(\mathbf{a}_i, b_i)^N_{i=1}$ according to Data Input \ref{inputQuantumAccess1}, 
error tolerance parameter
$\epsilon_I$, failure probability $\delta$, 
function $u: \mathbb{R} \to [0, 1]$, number of iterations $T$, degree of the multinomial kernel $d$, learning rate $\lambda$}
\DontPrintSemicolon
\;
$\tilde{K}_{ij}, \tilde{K}^{'}_{ij} \gets$ Call \textsc{Alphatron\_with\_Q\_Pre}(Algorithm~\ref{algAlphatronQuantumPre}) with all inputs as above.\;
Store $\tilde{K}_{ij}, \tilde{K}^{'}_{ij}$ via quantum random access memory (QRAM) and prepare the quantum query access to $\tilde{K}_j, \tilde{K}^{'}_j$ as Data Input \ref{input:quantumkernel}.\;
$\alpha^{t_{out}} \gets$ Call  \textsc{Quantum\_Alphatron}(Algorithm~\ref{algQAlphatron2}) with all inputs as above and $\widetilde{K}_{j}$ and $\widetilde{K}'_{j}$ via Data Input \ref{input:quantumkernel} as prepared.\;
\KwOutput{$\alpha^{t_{out}}$}
\end{algorithm}

\begin{corollary}[Quantum Alphatron combined] \label{thm:alphaQuantumAll}
We assume quantum query access to the vectors $\mathbf{x}_i$ and $\mathbf{a}_i$ via Data Input~\ref{input:quantumkernel}.
Let $K_{\max}$ be maximum of all 
entries in $K$ and $K'$, which are the corresponding kernel matrices.
Let $\delta \in (0,1)$.
Given Definitions \ref{assume:alpha} and \ref{assume:param} with $A_2 \le 1$ and with $\epsilon_I=\sqrt{\epsilon}$, Algorithm \ref{alg.quantumalphatron.all}
outputs $\alpha^{t_{out}}$ such that the hypothesis $h^{t_{\rm out}} = u\left(\sum_j\alpha_j^{t_{\rm out}} \psi(\bm x_j) \cdot \psi(\bm x)\right)$ satisfies with probability $1-4\delta$,
\[
\varepsilon(h^{t_{\rm out}}) \in \Ord{A_2}.
\]
The run time of this algorithm is 
\[ \tOrd{\frac{Tm^2d\sqrt{n}}{\sqrt{\epsilon}}\log\left(\frac{1}{\delta}\right)+  m^{1.5}\log \left( \frac{1}{\delta} \right) + Tm + T^2 m \frac{ K_{\max} }{L \sqrt \epsilon} \log \left( \frac{1}{\delta}\right) }.\]
\end{corollary}
\begin{proof}
The proof is the same with Theorem \ref{thm:alphaQuantumLoop}, except that instead of Eq.~(\ref{eqqnum}), we consider
\be
\left|\alpha^t\cdot K_i-\tilde{\alpha}^t\cdot \tilde{K}_i\right|&=&\left|\alpha^t\cdot K_i-\tilde{\alpha}^t\cdot K_i+\tilde{\alpha}^t\cdot K_i-\tilde{\alpha}^t\cdot \tilde{K}_i\right|\notag\\
&\leq& \left |\alpha^t\cdot K_i-\tilde{\alpha}^t\cdot K_i\right |+\left|\tilde{\alpha}^t\cdot K_i-\tilde{\alpha}^t\cdot \tilde{K}_i\right|\notag\\
&\leq& \frac{m_1 K_{\max}}{2^{c_2+1}}+\tilde{\alpha}^tm_1\epsilon_K\notag\\
&\leq& \frac{m_1 K_{\max}}{2^{c_2+1}}+2\lambda T \epsilon_K\leq \frac{\epsilon_I}{2},
\ee
where the last inequality can be achieved by setting $c_2 \geq \left \lceil \log\left(\frac{2K_{\max} m_1}{\epsilon_I} \right) +1\right \rceil$, $\epsilon_K\leq \epsilon_I/(8\lambda T)$. 
Recall that by Lemma \ref{lem:alphabound}, $0<\alpha_{\max}\leq \lambda T/m_1$.
The rest follows as in Theorem \ref{thm:alphaQuantumLoop}.
Storage in QRAM costs $\Ord{m^2}$.
Set $\epsilon_I=\sqrt{\epsilon}$, we obtain the run time from Theorem \ref{thm:alphaQuantumLoop}.
\end{proof}

\section{Rademacher complexity}
\label{appRade}
The following standard generalization bound based on Rademacher complexity is employed in 
our analysis. For a background on Rademacher
complexity, we refer the reader to \cite{Bartlett2002}.
\begin{theorem}[Generalization bound~\cite{Bartlett2002}] \label{generalizationbound}
Let $\mathcal{D}$ be a distribution over $\mathcal{X} \times \mathcal{Y}$ and let $\mathcal{L} : \mathcal{Y}^\prime
\times \mathcal{Y} \to [-b, b]$ (where $\mathcal{Y} \subseteq \mathcal{Y}^\prime \subseteq \mathbb{R}$) be a
$b$-bounded loss function that is $L$-Lipschitz in its first argument.  Let
$\mathcal{F} \subseteq (\mathcal{Y}^\prime)^\mathcal{X}$ and for any $f \in \mathcal{F}$, let $\mathcal{J}(f, \mathcal{D}) = \E_{(\textbf{x}, y)
\sim \mathcal{D}}[\mathcal{L}(f(\textbf{x}), y)]$ and $\hat{\mathcal{J}}(f, S) = \frac{1}{m} \sum_{i = 1}^m
\mathcal{L}(f(\textbf{x}_\textbf{i}), y_i)$, where $S = ((\textbf{x}_\textbf{1}, y_1), \ldots,  (\textbf{x}_\textbf{m}, y_m))  \sim
\mathcal{D}^m$. Then for any $\delta > 0$, with probability at least $1 - \delta$
(over the random sample draw for $S$), simultaneously for all $f \in
\mathcal{F}$, the following is true:
\[
|\mathcal{J}(f, \mathcal{D}) - \hat{\mathcal{J}}(f, S)| \leq 4 \cdot L \cdot \mathcal{R}_\textbf{m}(\mathcal{F})
+ 2\cdot b \cdot \sqrt{\frac{\log (1/\delta)}{2m}}
\]
where $\mathcal{R}_\textbf{m}(\mathcal{F})$ is the Rademacher complexity of the function class $\mathcal{F}$. 
\label{thm:Rademacher}
\end{theorem}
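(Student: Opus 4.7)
The plan is to follow the classical three-step argument from empirical process theory: bounded differences concentration, symmetrization, and Talagrand contraction. Define
\[
\Phi(S) := \sup_{f \in \mathcal{F}} |\mathcal{J}(f,\mathcal{D}) - \hat{\mathcal{J}}(f,S)|
\]
as the worst-case deviation between population and empirical loss over the function class. The goal is to produce a high-probability upper bound on $\Phi(S)$.

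First, I would establish concentration of $\Phi(S)$ around its expectation via McDiarmid's bounded differences inequality. Since $\mathcal{L}$ takes values in $[-b,b]$, replacing any one sample $(\textbf{x}_i,y_i)$ in $S$ by another sample shifts $\hat{\mathcal{J}}(f,S)$ by at most $2b/m$ uniformly in $f$, and hence shifts $\Phi(S)$ by at most $2b/m$. Applying McDiarmid with these $m$ bounded coordinates, and solving for the deviation achieving failure probability $\delta$, gives with probability at least $1-\delta$ that
\[
\Phi(S) \leq \mathbb{E}_S[\Phi(S)] + 2b \sqrt{\frac{\log(1/\delta)}{2m}}.
\]

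Second, I would bound $\mathbb{E}_S[\Phi(S)]$ by symmetrization. Introduce an independent ghost sample $S' = ((\textbf{x}'_i,y'_i))_{i=1}^m \sim \mathcal{D}^m$ and i.i.d.\ Rademacher variables $\sigma_i \in \{-1,+1\}$. Write $\mathcal{J}(f,\mathcal{D}) = \mathbb{E}_{S'}[\hat{\mathcal{J}}(f,S')]$, push $S'$ inside the supremum via Jensen, and exploit that the distribution of each difference $\mathcal{L}(f(\textbf{x}_i),y_i) - \mathcal{L}(f(\textbf{x}'_i),y'_i)$ is invariant under sign flips. To handle the absolute value, I apply the argument to both $\sup_f (\mathcal{J}-\hat{\mathcal{J}})$ and $\sup_f (\hat{\mathcal{J}}-\mathcal{J})$ (or equivalently to $\mathcal{F}\cup(-\mathcal{F})$), obtaining
\[
\mathbb{E}_S[\Phi(S)] \leq 2\, \mathbb{E}_{S,\sigma}\!\left[\sup_{f \in \mathcal{F}} \frac{1}{m}\sum_{i=1}^m \sigma_i \mathcal{L}(f(\textbf{x}_i), y_i)\right],
\]
which is twice the Rademacher complexity of the composed loss class $\mathcal{L}\circ \mathcal{F}$. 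Then I apply Talagrand's contraction principle: conditioning on $S$, each map $t \mapsto \mathcal{L}(t,y_i)$ is $L$-Lipschitz, so
\[
\mathbb{E}_\sigma\!\left[\sup_{f \in \mathcal{F}} \frac{1}{m}\sum_{i=1}^m \sigma_i \mathcal{L}(f(\textbf{x}_i), y_i)\right] \leq 2L\, \mathbb{E}_\sigma\!\left[\sup_{f \in \mathcal{F}} \frac{1}{m}\sum_{i=1}^m \sigma_i f(\textbf{x}_i)\right].
\]
Taking the outer expectation over $S$ gives $\mathbb{E}_S[\Phi(S)] \leq 4L\,\mathcal{R}_m(\mathcal{F})$, and adding the McDiarmid term recovers the claimed bound.

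The main technical obstacle is Talagrand's contraction lemma, whose proof (via a conditioning and sign-swapping argument on the Rademacher variables, or via a comparison principle for Rademacher processes) is not elementary and is precisely what produces the extra factor of $2$ in front of $L$. A secondary subtlety is the two-sidedness of the $|\cdot|$ in $\Phi$, which I handle as above by symmetrizing on $\mathcal{F} \cup (-\mathcal{F})$; the bounded differences step passes through the absolute value unchanged. With these ingredients in place, the remaining calculations are bookkeeping.
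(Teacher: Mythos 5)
Your three-step argument (McDiarmid, symmetrization with a ghost sample, Lipschitz contraction) is correct and is the standard textbook proof of this bound; note, however, that the paper itself does not prove this statement at all --- it is imported verbatim as a known result from Bartlett and Mendelson \cite{Bartlett2002} --- so there is no in-paper proof to compare against. Your constant accounting matches the stated bound exactly: the bounded-differences step with coordinate increments $2b/m$ yields the deviation term $2b\sqrt{\log(1/\delta)/(2m)}$, and the factor $4L$ arises as (symmetrization factor $2$) $\times$ (contraction factor $2L$). The only place requiring care is the one you already flag: the two-sidedness of the supremum and the fact that $\mathcal{L}(\cdot,y_i)$ need not vanish at $0$, which is precisely why the contraction step is invoked with the factor $2L$ rather than $L$ (or, equivalently, why one symmetrizes over $\mathcal{F}\cup(-\mathcal{F})$ after centering); since $\E_\sigma\sup_f \frac{1}{m}\sum_i \sigma_i f(\mathbf{x}_i)\ge 0$, the resulting bookkeeping goes through and is absorbed by the stated constant $4$. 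So the proposal is sound; it simply reconstructs the cited result rather than paralleling any argument given in the paper.
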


\printbibliography
\end{document}